\newtheorem{theorem}{Theorem}[section]
\newtheorem{lemma}[theorem]{Lemma}
\newtheorem{definition}[theorem]{Definition}
\newtheorem{corollary}[theorem]{Corollary}
\newtheorem{property}[theorem]{Property}
\newcommand{\litem}[1]{\item[#1\hfill]}
\newenvironment{mylist}[1]{
    \setbox1=\hbox{#1}
    \begin{list}{}{
            \setlength{\labelwidth}{\wd1}
            \setlength{\leftmargin}{\wd1}
            \addtolength{\leftmargin}{0em}
            \addtolength{\leftmargin}{\labelsep}
            \setlength{\rightmargin}{1em}}}{\end{list}}
\definecolor{winered}{rgb}{0.5,0,0}
\title{On maximal $k$-edge-connected subgraphs of undirected graphs}
\date{}
\author{Loukas Georgiadis\thanks{University of Ioannina, Greece.  E-mail: \texttt{loukas@uoi.gr}. Supported by the Hellenic Foundation for Research and Innovation (HFRI) under the ``First Call for HFRI Research Projects to support Faculty members and Researchers and the procurement of high-cost research grant'', Project FANTA (eFficient Algorithms for NeTwork Analysis), number HFRI-FM17-431.} \and Giuseppe F. Italiano\thanks{Luiss University, Rome, Italy. Email: \texttt{gitaliano@luiss.it.} Partially supported by MUR, the Italian Ministry for University and Research, under PRIN Project AHeAD (efficient Algorithms for HArnessing networked Data).} \and Evangelos Kosinas\thanks{University of Ioannina, Greece. Email: \texttt{ekosinas@cs.uoi.gr}. Supported by the Hellenic Foundation for Research and Innovation (HFRI) under the 3rd Call for HFRI PhD Fellowships (Fellowship Number: 6547).} \and Debasish Pattanayak\thanks{Luiss University, Rome, Italy. Email: \texttt{dpattanayak@luiss.it.} Supported by MUR, the Italian Ministry for University and Research, under PRIN Project AHeAD (efficient Algorithms for HArnessing networked Data).}} 
\begin{document}
%\linenumbers
\maketitle

%\comment{PINO: Changed the title (we have also  results other than incremental)}

\begin{abstract}
We show how to find and efficiently maintain maximal $k$-edge-connected subgraphs in undirected graphs. In particular, we provide the following results.
$(1)$
A general framework for maintaining the maximal $k$-edge-connected subgraphs upon insertions of edges or vertices, by successively partitioning the graph into its $k$-edge-connected components. This defines a decomposition tree, which can be maintained by using algorithms for the incremental maintenance of the $k$-edge-connected components as black boxes at every level of the tree.
$(2)$ As a concrete application of this framework, we provide two algorithms for the incremental maintenance of the maximal $3$-edge-connected subgraphs. These algorithms allow for vertex and edge insertions, interspersed with queries asking whether two vertices belong to the same maximal $3$-edge-connected subgraph, and there is a trade-off between their time- and space-complexity. Specifically, the first algorithm has $O(m\alpha(m,n) + n^2\log^2 n)$ total running time and uses $O(n)$ space, where $m$ is the number of edge insertions and queries, and $n$ is the total number of vertices inserted starting from an empty graph. The second algorithm performs the same operations in faster $O(m\alpha(m,n) + n^2\alpha(n,n))$ time in total, using $O(n^2)$ space.
$(3)$ %Using results from Benczúr and Karger~\cite{ben-kar},
We provide efficient constructions of (almost) sparse spanning subgraphs that have the same maximal $k$-edge-connected subgraphs as the original graph. These are useful in speeding up computations involving the maximal $k$-edge-connected subgraphs in dense undirected graphs.
$(4)$  We give two deterministic algorithms for computing the maximal $k$-edge-connected subgraphs in undirected graphs, with running times $O(m+k^{O(1)}n\sqrt{n}\,\mathrm{polylog}(n))$ and $O(m + k^{O(k)}n\sqrt{n} \log{n})$, respectively.
Hence, our first algorithm improves the dependence on $k$ for deterministic algorithms, while our second algorithm is faster than the randomized algorithm of Forster et al. (SODA 2020) for constant $k$.
$(5)$ A fully dynamic algorithm for maintaining information about the maximal $k$-edge-connected subgraphs for fixed $k$. Our update bounds are ${O}(n\sqrt{n}\,\log n)$ worst-case time for $k>4$ and ${O}(n\sqrt{n}\,)$ worst-case time for $k\in\{3,4\}$. In both cases, we achieve constant time for maximal $k$-edge-connected subgraph queries.
\end{abstract}

\newpage

\section{Introduction}
A dynamic graph algorithm aims to maintain the solution of a given problem after each update faster than recomputing it from scratch. An algorithm is \textit{fully dynamic} if it supports both insertions and deletions of edges, while it is  \textit{incremental} (resp.~\textit{decremental}) if it only supports insertions (resp.~deletions) of edges. 

Let $G=(V,E)$ be a connected undirected multigraph with $m$ edges and $n$ vertices.
We let $V(G)$ and $E(G)$ denote the vertex set and the edge set of $G$, respectively, i.e.,  $V(G)=V$ and $E(G)=E$.
Let $S \subseteq V$ be any subset of vertices of $G$. The \emph{subgraph induced by $S$}, denoted by $G[S]$, is defined as the graph having vertex set $S$ and edge set $\{e\in E\mid \mbox{both endpoints of } e \mbox{ lie in } S\}$.
Let $X,Y$ be two subsets of $V$. $[X,Y]$ denotes the set of edges that have one endpoint in $X$ and the other in $Y$. In other words, $[X,Y]=\{(x,y)\in E\mid x\in X \mbox{ and } y\in Y\}$.
Let $S \subset V$ be a non-empty (proper) subset of vertices of $G$, and let $\bar{S} = V\setminus S\neq\emptyset$. If both $G[S]$ and $G[\bar{S}]$ are connected, then $[S,\bar{S}]$ is called an \emph{edge cut} of $G$, i.e., a minimal set of edges of $G$ such that its removal disconnects the graph. An edge cut of cardinality $k$ is called a \emph{$k$-edge cut} of $G$. As a special case, $1$-edge cuts are also called \emph{bridges}. $G$ is said to be \emph{$k$-edge-connected} if it contains no $k'$-edge cuts, for $k'< k$.

Two vertices of $G$ are said to be $k$-edge-connected if there are $k$ edge-disjoint paths between them. 
%The maximum $k$ such that two vertices $x,y\in V$ are $k$-edge-connected is called the \emph{edge-connectivity} of $\{x,y\}$.
%
The \emph{edge-connectivity} of a pair of vertices $x,y\in V$, denoted as $\lambda_{G}(x,y)$ (or simply $\lambda(x,y)$, when $G$ is clear from context), is the maximum $k$ such that $x$ and $y$ are $k$-edge-connected.
The relation of ``$k$-edge-connectivity'' between pairs of vertices of $G$ is an equivalence relation on $V(G)$ (see, e.g., \cite{AlgAspects}). The equivalence classes induced by this relation are called the \emph{$k$-edge-connected components} of $G$. Thus, a $k$-edge-connected component of $G$ is a maximal subset of vertices $C\subseteq V$ such that any pair of vertices in $C$ is $k$-edge-connected. Notice that every $k$-edge-connected component lies entirely within a $k'$-edge-connected component, for every $k'<k$.

Now let $S \subseteq V$ be a subset of vertices in $G$. We say that the induced subgraph $G[S]$ is a \emph{maximal $k$-edge-connected subgraph} of $G$ if (1) $G[S]$ is $k$-edge-connected and (2) no proper superset of $S$ has
this property. Unlike 2-edge connectivity, for $k\geq 3$ the $k$-edge-connected components of $G$ do not necessarily correspond to maximal $k$-edge-connected subgraphs. Indeed, for $k \geq 3$, two vertices in the subgraph induced by a $k$-edge-connected component may not be $k$-edge-connected in this subgraph, as some of the $k$ edge-disjoint paths may go outside of the component; see Figure~\ref{figure:max3ecsexample}. Also, Figure~\ref{figure:max3ecsexample-3} shows a graph where almost all its vertices are $3$-edge-connected but has only trivial maximal $3$-edge-connected subgraphs. 
%\comment{PINO: Maybe here we need to put a figure on different 3-edge-connected components and maximal 3-edge-connected subgraphs? Loukas: Done}
%
Notice that, if $S$ is a subset of $V$, then $\lambda_{G[S]}(x,y)\leq\lambda_{G}(x,y)$ for any pair of vertices $x,y\in S$, since every path in $G[S]$ is also a path in $G$. Thus, every maximal $k$-edge-connected subgraph lies within the subgraph induced by a $k$-edge-connected component.

\begin{figure*}[t!]
\begin{center}
\centerline{\includegraphics[trim={0 0 0 2cm}, clip=true, width=\textwidth]{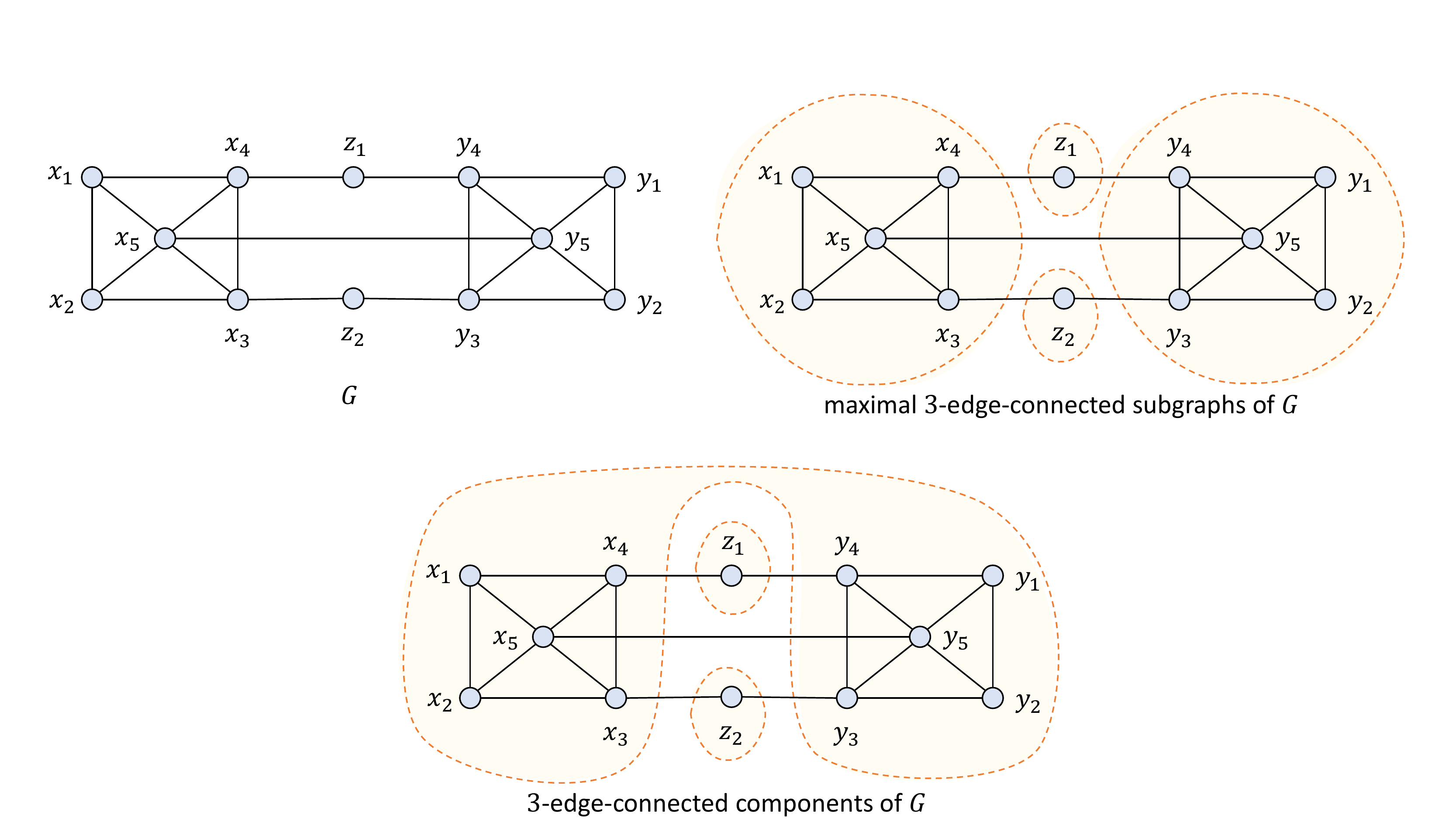}}
\caption{A $2$-edge-connected graph $G$, its maximal $3$-edge-connected subgraphs, and its $3$-edge-connected components. Note that while any two vertices $x_i$ and $y_j$ are $3$-edge-connected, they do not belong to the same maximal $3$-edge-connected subgraph. 
\label{figure:max3ecsexample}}
\end{center}
\end{figure*}

\begin{figure*}[t!]
\begin{center}
\centerline{\includegraphics[trim={0 0 0 11cm}, clip=true, width=\textwidth]{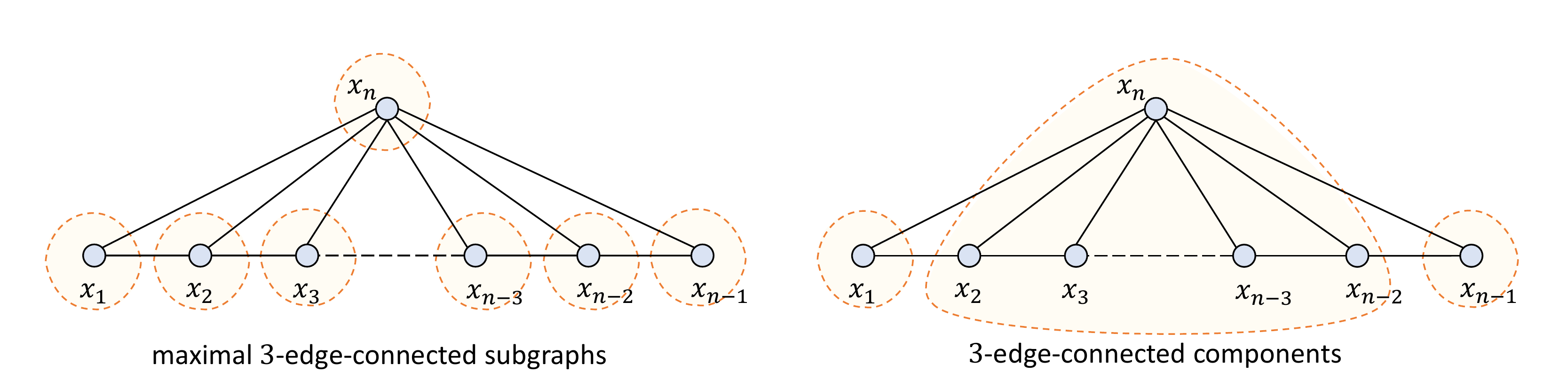}}
\caption{A $2$-edge-connected graph $G$ with only trivial maximal $3$-edge-connected subgraphs, despite that almost all vertices are $3$-edge-connected. 
\label{figure:max3ecsexample-3}}
\vspace{-1cm}
\end{center}
\end{figure*}

Determining or testing various notions of edge connectivity of undirected graphs, as well as computing edge-connected components or subgraphs, is a fundamental graph problem, and indeed $k$-edge connectivity has received much attention in the literature. The problem of computing maximal $k$-edge-connected subgraphs, however, appears to be harder than computing $k$-edge-connected components, as not much progress has been made for the former problem, despite its importance in the applications. As a matter of fact, 
finding the maximal $k$-{edge-connected subgraphs} of a graph is of significant interest in several  areas, such as in the field of databases, social networks, graph visualization etc.~\cite{akibaLineartimeEnumerationMaximal2013,borosGenerating3vertexConnected2008,changEfficientlyComputingKedge2013,liKvertexConnectedComponent2020,sunEfficientKedgeConnected2016,wenEnumeratingKVertexConnected2019,yuanEfficientECCGraph2016}
 (some of those papers refer to the maximal $k$-{edge-connected subgraphs} as $k$-edge-connected components).

In more detail, it is known how to compute the $k$-edge-connected components in linear time for $k \leq 4$~\cite{pathdfs:g00,GI:ECtoVC,DBLP:conf/esa/GeorgiadisIK21,3-connectivity:ht,DBLP:conf/esa/NadaraRSS21,NagamochiIbaraki:3CC,dfs:t,DBLP:journals/jda/Tsin09}.
On the other hand, linear-time bounds for computing the maximal $k$-edge-connected subgraphs are known only in the trivial case $k \le 2$, simply because in these cases the $k$-edge-connected components coincide with the maximal $k$-edge-connected subgraphs. 
As mentioned in~\cite{ChechikHILP17}, the maximal $3$-edge-connected subgraphs can be computed with a simple-minded algorithm in $O(mn)$ time in the worst case. 
Henzinger et al.~\cite{henzingerFinding2Edge2Vertex2015a} provided a deterministic algorithm  for computing the maximal $k$-edge-connected subgraphs, for constant $k$, %for constant $k$, %in dense graphs 
that runs in $O(n^2\log n)$ time.
%\comment{Loukas: Should we just say maximal $3$-edge-connected subgraphs?}
Chechik et al.~\cite{ChechikHILP17} presented  an algorithm with running time $O(k^{O(k)}(m+n\log n)\sqrt{n}\,)$ ($O(m\sqrt{n}\,)$ for $k=3$),
which improves the bound of~\cite{henzingerFinding2Edge2Vertex2015a} for sparse graphs. Forster et al.~\cite{forsterComputingTestingSmall2019} gave a Las Vegas randomized algorithm for computing the maximal $k$-edge-connected subgraphs in %$\widetilde{O}(km + k^3 n\sqrt{n}\,)$
$O(km\log^2{n} + k^3 n \sqrt{n} \log{n})$
expected running time.
%\comment{Loukas: Should we mention the bounds of Henzinger, Krinninger and Loitzenbauer from ICALP15? Debasish: what order should we cite these two papers? We need to differentiate between $O(n^2\log n)$ and $O(m\sqrt{n}\,)$.}
Very recently, Nalam and Saranurak~\cite{NalamSaran} provided a randomized algorithm for computing the maximal $k$-edge-connected subgraphs in \emph{weighted} undirected graphs in $\widetilde{O}(m\cdot\min\{m^{3/4},n^{4/5}\})$ time.
%
%(We note that all the time-bounds above hold for any fixed $k\geq 3$.)
%
In the special case of planar graphs, Holm et al.~\cite{HolmIKLRS17} showed how to compute the maximal $3$-edge-connected subgraphs in $O(n)$ time.

%\subsection{Brief explanation of the claimed results}
\subsection{Overview of our results}
In this paper we present several new results on maximal $k$-edge-connected subgraphs of undirected graphs. 
In particular, we provide the following  results.

\begin{enumerate}[label={(\arabic*)}]
\item{A general framework for maintaining the maximal $k$-edge-connected subgraphs upon insertions of edges or vertices, by successively partitioning the graph into its $k$-edge-connected components. This defines a decomposition tree, which can be maintained by using algorithms for the incremental maintenance of the $k$-edge-connected components as black boxes at every level of the tree.}

\item{As an application of our new framework, we provide two algorithms for the incremental maintenance of the maximal $3$-edge-connected subgraphs. These algorithms allow for vertex and edge insertions, interspersed with queries asking whether two vertices belong to the same maximal $3$-edge-connected subgraph, and
provide a trade-off between time- and space-complexity. The first algorithm has $O(m\alpha(m,n) + n^2\log^2 n)$ total running time and uses asymptotically optimal $O(n)$ space, where $m$ is the number of edge insertions and queries, and $n$ is the total number of vertices inserted starting from an empty graph. The second algorithm improves the total running time to $O(m\alpha(m,n) + n^2\alpha(n,n))$ (i.e., almost optimal for dense graphs) at the expense of using $O(n^2)$ space.} We note that those are the first incremental algorithms for this problem, and thus provide significant improvements over recomputing the maximal $3$-edge-connected subgraphs after every insertion.

\item Using results from Benczúr and Karger~\cite{ben-kar}, we provide efficient constructions of (almost) sparse spanning subgraphs that have the same maximal $k$-edge-connected subgraphs as the original graph. These are useful in speeding up computations involving the maximal $k$-edge-connected subgraphs in dense undirected graphs. In particular, we use those certificates to speed up the computation of the maximal $k$-edge-connected subgraphs.

\item %A deterministic algorithm for computing the maximal $k$-edge-connected subgraphs, for any fixed $k>2$, in $\widetilde{O}(m+n\sqrt{n}\,)$ time.
Two deterministic algorithms for computing the maximal $k$-edge-connected subgraphs in undirected graphs, with running times $O(m+k^{O(1)}n\sqrt{n}\,\mathrm{polylog}(n))$ and $O(m + k^{O(k)}n\sqrt{n} \log{n})$, respectively.
Hence, our first algorithm improves the dependence on $k$ for deterministic algorithms (which was previously $k^{O(k)}$~\cite{ChechikHILP17}), while our second algorithm is faster than the randomized algorithm of Forster et al.~\cite{forsterComputingTestingSmall2019} for constant $k$.

\item{A fully dynamic algorithm for maintaining information about the maximal $k$-edge-connected subgraphs for fixed $k$. Our update bounds are   ${O}(n\sqrt{n}\,\log n)$ worst-case time for $k>4$ and ${O}(n\sqrt{n}\,)$ worst-case time for $k\in\{3,4\}$. In both cases we achieve constant time for maximal $k$-edge-connected subgraph queries. Since no previous fully dynamic bound was known for this problem, once again this improves on the bounds obtainable by recomputing the maximal $k$-edge-connected subgraphs from scratch after each update.}
\end{enumerate}

We believe that our main technical contribution is given by $(2)$. To achieve this result, we provide a structural characterization of the maximal $3$-edge-connected subgraphs of an undirected graph by introducing a decomposition tree $\mathcal{T}$ into maximal $3$-edge-connected subgraphs, and show how to maintain it efficiently under edge and vertex insertions. $\mathcal{T}$ is a rooted tree whose root corresponds to the whole graph $G$ and is defined recursively by computing the (subgraphs induced by the) $k$-edge-connected components, for $k\in\{1,2,3\}$, of the graphs of the previous level. We proceed recursively in this decomposition until we reach a graph that is $3$-edge-connected, which is a maximal $3$-edge-connected subgraph of $G$ (and a leaf of $\mathcal{T}$). Therefore, the nodes of $\mathcal{T}$ correspond to subgraphs of $G$, and the parent relation is given by (vertex) set inclusion. (See Section~\ref{section:decomposition_tree}, and also Figure~\ref{figure:decompositiontreeexample}.)

Although maintaining the entire tree $\mathcal{T}$ may seem more challenging than maintaining only the maximal $3$-edge-connected subgraphs, we show that it is in fact easier to maintain the decomposition tree. This is because $\mathcal{T}$ contains enough information to facilitate its efficient update after new insertions to $G$, as its nodes corresponds to all successive partitions of $G$ into $k$-edge-connected components, for $k\in\{1,2,3\}$. In fact, if a new edge $e$ is inserted to $G$, then we only have to locate the deepest subgraph $N$ on $\mathcal{T}$ that contains the endpoints of $e$, and all changes to $\mathcal{T}$ due to this insertion apply to the subtree of $N$.
However, we do not explicitly maintain the correspondence between nodes of $\mathcal{T}$ and subgraphs of $G$, as this would require as much as $\Omega(mn)$ space (where $m$ is the number of edges and $n$ is the number of vertices of $G$), and a much worse time to maintain $\mathcal{T}$ during a sequence of insertions. Instead, we associate with the nodes of $\mathcal{T}$ some data structures that represent the interconnections between the $2$- and $3$-edge-connected components of the subgraphs that (abstractly) correspond to the nodes of $\mathcal{T}$. To be more precise, we associate to every node of $\mathcal{T}$ that corresponds to a connected component $C$ of its parent (a representation of) the tree of the $2$-edge-connected components of $C$; and to every node of $\mathcal{T}$ that corresponds to a $2$-edge-connected component of its parent, we associate (a representation of) the cactus of its $3$-edge-connected components. All this information reveals to be useful for maintaining our tree decomposition with the help of previous incremental approaches~\cite{galilMaintaining3EdgeConnectedComponents1993,lapoutreMaintainance23eccII,lapoutreMaintainance23ecc,westbrookMaintainingBridgeConnectedBiconnected1992}.
%\comment{Loukas: Instead of saying ``we rely heavily...'' can we just state that we apply and adjust the corresponding techniques? Evangelos: Of course, we may improve this sentence  (e.g., by dropping the ``heavily''). But it is an important fact, (that it may have to be emphasized even more), that there is previous work in maintaining $2$- and $3$-edge-connected components, and that it is indispensable for us here. In fact, I believe that one could also use the work in maintaining $4$- and $5$-edge-connected components, in order to provide an algorithm for maintaining the decomposition tree of the respective maximal $k$-edge-connected subgraphs, with analogous time-bounds as in the present work.} 
%
We describe in detail, in Sections \ref{section:structures} and \ref{section:improved_datastructures}, how we can augment the previous incremental approaches with enough information to suit our purposes.

Notice that in the bounds provided in $(2)$, there is a trade-off between space- and time-complexity. The second algorithm is more time-efficient (at least asymptotically), due to the following two reasons. First, we find a way to use the more sophisticated data structures of La Poutr{\'e} \cite{lapoutreMaintainance23eccII}, in order to efficiently maintain the $2$- and $3$-edge-connected components of the graphs in the various levels of $\mathcal{T}$. Secondly, we use an alternative and more intriguing method for answering nearest common ancestor and level ancestor queries, that are needed in the algorithm for maintaining $\mathcal{T}$. This method relies on the structure of those queries (i.e., they are not completely arbitrary but they depend on modifications already made on $\mathcal{T}$).
%, and is achieved with storing some lists of pointers on the nodes of $\mathcal{T}$. (On the other hand, the first algorithm uses the powerfully versatile top-trees of Alstrup et al. \cite{DBLP:journals/talg/AlstrupHLT05}.) 
However, in order to facilitate the efficient answering of those queries, we pay an extra $O(n^2)$ in space. In any case, the $n^2$ factor in the time-bound of both algorithms is an inherent bottleneck of the basic procedure that we use, for any sequence of insertions, and a lower bound on the worst-case time for a single insertion (see Figure~\ref{figure:example-insertions} and Algorithm~\ref{algorithm:tree_modification}).

Both our algorithms can efficiently answer queries concerning the maximal $3$-edge-connected subgraphs in asymptotically optimal time, plus the time to perform one or two calls to a $\mathit{find}$ operation in an underlying disjoint set union (DSU) data structure \cite{tarjanEfficiencyGoodNot1975} that maintains the vertex sets of the maximal $3$-edge-connected subgraphs. (For the details on this DSU data structure, see Section~\ref{section:maintaining_T}.) For instance, given two query vertices $x$ and $y$, we can report whether $x$ and $y$ belong to the same maximal $3$-edge-connected subgraph using two calls to a $\mathit{find}$ operation, or given a query vertex $x$ we can report the maximal $3$-edge-connected subgraph that contains $x$ in time proportional to its size plus a call to a $\mathit{find}$ operation. (See Section~\ref{section:conclusion}.)

We note that these algorithms can be seen as applications of a more general framework for maintaining the maximal $k$-edge-connected subgraphs, by relying on algorithms that maintain the $k$-edge-connected components (result $(1)$). We think that it is possible that one can develop a similar framework for maintaining the maximal $k$-vertex-connected subgraphs, by relying on efficient algorithms for maintaining the $k$-vertex-connected components. (In particular, for the case $k=3$, one may rely on the algorithms of \cite{triconnBatTam,lapoutreTriconn} for maintaining the $3$-vertex-connected components.)

For $(3)$, we show that it is sufficient to compute (a superset of) all the edges whose endpoints lie in different maximal $k$-edge-connected subgraphs. Benczúr and Karger~\cite{ben-kar} provide efficient algorithms that achieve this.
From those algorithms we get two different constructions for spanning subgraphs that have the same maximal $k$-edge-connected subgraphs as the original graph, and there is a trade-off between the time complexity and the size of the output subgraph. Following the terminology of~\cite{opt-dec-conn}, we call such a subgraph a $k$-certificate. Then, we have a linear-time algorithm for computing a $k$-certificate with $O(kn\log{n})$ edges, and an $O(m\log^2{n})$-time algorithm for computing a $k$-certificate with $O(kn)$ edges (where $m$ and $n$ denote the number of edges and vertices of the graph, respectively). A key component in those algorithms is the certificates for $k$-edge connectivity of Nagamochi and Ibaraki~\cite{Nagamochi}.
For the details, see Section~\ref{section:sparse_subgraphs}.
We believe that it is an interesting question whether a $k$-certificate of $O(kn)$ size can be computed in linear time. This would be trivial if we had a linear-time algorithm for computing the maximal $k$-edge-connected subgraphs of a graph, but it is still an open problem whether this can be done for $k\geq 3$. Thus, we have to perform the construction of the certificates without explicitly computing the maximal $k$-edge-connected subgraphs, and this seems to be a challenging task.

For $(4)$ we provide two different deterministic algorithms by using two essentially different approaches.
The idea in the first approach is to repeatedly find and remove all $k'$-edge cuts, for $k'<k$. For this purpose we can rely on the fully dynamic min-cut algorithm of Thorup (Theorem $26$ in~\cite{thorupFullyDynamicMinCut2007}). Thus, we use this algorithm in order to successively find and remove all $k'$-edge cuts, for $k'<k$, until we are left with the maximal $k$-edge-connected subgraphs.
This is how we get a deterministic algorithm with %$\widetilde{O}(m+n\sqrt{n}\,)$
$O(m+k^{O(1)}n\sqrt{n}\mathrm{polylog}(n))$ worst-case time for computing the maximal $k$-edge-connected subgraphs of a graph with $m$ edges and $n$ vertices. This algorithm is described in Section~\ref{section:computing_kecs}.
Alternatively, we may use our result $(3)$, in order to get in linear time a subgraph with $O(kn\log n)$ edges that has the same maximal $k$-edge-connected subgraphs as the original graph, and then we may apply the algorithm by Chechik et al.~\cite{ChechikHILP17}, that runs in %$O((m+n\log n)\sqrt{n}\,)$
$O(k^{O(k)}(m+n\log n)\sqrt{n}\,)$ time. Thus we get an $O(m+k^{O(k)}n\sqrt{n}\log n)$-time algorithm for computing the maximal $k$-edge-connected subgraphs.
We note that the previously best known deterministic algorithm for computing the maximal $k$-edge-connected subgraphs in dense graphs, for constant $k$, is given by Henzinger et al.~\cite{henzingerFinding2Edge2Vertex2015a}, and runs in $O(n^2\log n)$ time. For sparse graphs, the best time-bound is provided by Chechik et al.~\cite{ChechikHILP17}, which is $O((m+n\log n)\sqrt{n}\,)$ for $k>4$, and $O(m\sqrt{n}\,)$ for $k\in\{3,4\}$\footnote{The improved time-bound for $k\in\{3,4\}$ is due to the existence of linear-time algorithms for computing $2$- and $3$-edge cuts. (See, e.g. ~\cite{DBLP:conf/esa/GeorgiadisIK21, DBLP:conf/esa/NadaraRSS21,DBLP:journals/jda/Tsin09}.) For $k>4$ we rely on Gabow's algorithm~\cite{edge_connectivity:gabow}, which runs in almost linear time.}.
%
%Thus, our algorithm is an improvement over the previously best deterministic algorithms for computing maximal $k$-edge-connected subgraphs in undirected graphs. Furthermore, it matches the Las Vegas algorithm of Forster et al.~\cite{forsterComputingTestingSmall2019}, which has the same expected running time (modulo the logarithmic factors).
%\comment{PINO: Is our bound deterministic? Can we compute the $\log$ factors explicitly (if better than~\cite{forsterComputingTestingSmall2019})? Doesn't $k$ get into the bounds?}
%
Hence, our first algorithm improves the dependence on $k$ for deterministic algorithms from exponential to polynomial, while our second algorithm is faster than the randomized algorithm of Forster et al.~\cite{forsterComputingTestingSmall2019} for constant $k$.

For $(5)$ we rely on the sparsification framework of Eppstein et al.~\cite{sparsification}. We show that we can apply the certificates that we provide in Section~\ref{section:sparse_subgraphs}, and we use the algorithm of Chechik et al.~\cite{ChechikHILP17} for the updates. For the details, see Section~\ref{section:fully-dynamic}.

\subsection{More related work}
Dynamic graph algorithms have been extensively studied for several decades, and many
important results have been achieved for fundamental problems such as dynamic connectivity \cite{franciosaIncrementalMaintenanceDepthFirstSearch1997,georgiadisDecrementalDataStructures2017,HK99,luigi2ConnectivityDirectedGraphs2015}, minimum spanning trees \cite{sparsification,fredericksonDataStructuresOnLine1985,nanongkaiDynamicMinimumSpanning2017}, minimum cut \cite{lapoutreDynamicGraphAlgorithms1991,lapoutreMaintainance23eccII,thorupFullyDynamicMinCut2007}, shortest paths \cite{abrahamFullyDynamicAllpairs2017,bernsteinDecrementalStronglyconnectedComponents2019,demetrescuNewApproachDynamic2004,thorupFullyDynamicAllPairsShortest2004}, and transitive closure \cite{henzingerFullyDynamicBiconnectivity1995}.
%has been proposed in the literature.
% Thorup proposed a fully dynamic mincut algorithm that finds a minimum cut in a graph \cite{thorupFullyDynamicMinCut2007}.
Regarding dynamic algorithms for maintaining $k$-connectivity information (for small values of $k$) has also received a lot of attention~\cite{Dinitz:5ECC,2LCactus,dinitzMaintainingClasses4EdgeConnectivity1998,sparsification,DynEC:GI,HLT01,st-conn_Sun,Kanevsky:4CC,lapoutreMaintainance23eccII,lapoutreMaintainance23ecc,westbrookMaintainingBridgeConnectedBiconnected1992}. This previous work, however, involves maintaining a graph under edge insertions and/or deletions, so that we can efficiently answer queries of whether two vertices are $k$-edge- or $k$-vertex-connected (i.e., belong to the same $k$-edge- or $k$-vertex-connected component). 
Very recently, Aamand et al.~\cite{opt-dec-conn} provided a Monte Carlo randomized algorithm for the decremental maintenance of the maximal $c$-edge-connected subgraphs, that has  $O(m + n^{3/2+o(1)})$ total update time and $O(1)$ query time, for $c=O(n^{o(1)})$.
To the best of our knowledge, no previous non-trivial bounds were known for the dynamic maintenance of maximal $3$-edge-connected subgraphs with a deterministic algorithm.%\comment{Loukas: Please check!}

\subsection{Organization}
The rest of the paper is organized as follows.
First, we provide some preliminaries in Section~\ref{section:preliminaries}.
We describe our decomposition tree of the maximal $3$-edge-connected subgraphs in Section~\ref{section:decomposition_tree}. This can be seen as an application of a general framework for maintaining the maximal $k$-edge-connected subgraphs by using algorithms for maintaining the $k$-edge-connected components, which is described in Sections \ref{section:a_general_framework} and \ref{section:maintaining_T_g}. 
The details on maintaining the decomposition tree of the maximal $3$-edge-connected subgraphs under insertions of new edges and vertices is given in Section~\ref{section:maintaining_T}.
In Sections~\ref{section:structures} and~\ref{section:improved_datastructures} we present efficient implementations for the data structures that are associated with the nodes of the decomposition tree, in order to efficiently update it. 
In Section~\ref{section:sparse_subgraphs} we present a linear-time construction of  sparse certificates for  maximal $k$-edge-connected subgraphs.  From this we derive a %faster 
deterministic $O(m + k^{O(k)}n\sqrt{n}\log n)$-time algorithm for computing the maximal $k$-edge-connected subgraphs.
%, 
%that exploits the sparsification framework of Eppstein et al. \cite{sparsification}.
In Section~\ref{section:computing_kecs} we provide an $O(m+k^{O(1)}n\sqrt{n}\mathrm{polylog}(n))$ deterministic algorithm for computing the maximal $k$-edge-connected subgraphs, using Thorup's fully dynamic mincut algorithm.
Section~\ref{section:fully-dynamic} presents our fully dynamic algorithm for maximal $k$-edge-connected subgraphs. 
We conclude in Section~\ref{section:conclusion} with suggestions for further applications of our decomposition tree.

\section{Preliminaries}\label{section:preliminaries}
In the sequel, we assume that the reader is familiar with standard graph terminologies, for example, as presented in \cite{CLRS01, Diestel, AlgAspects}. 
For the sake of completeness, we provide some definitions and state some results that will be used throughout concerning the structure of the $2$- and $3$-edge-connected components in undirected graphs.
Consider the quotient graph $Q^k$ of $G$ that is formed by shrinking every $k$-edge-connected component into a single vertex, maintaining all inter-connection edges and discarding self-loops. Then the quotient map $\nu: V(G)\rightarrow V(Q^k)$ induces a natural correspondence between the edges of $Q^k$ and some edges of $G$: that is, for every edge $(u,v)$ of $Q^k$, there is an edge $(x,y)\in E(G)$ such that $\nu(x)=u$ and $\nu(y)=v$. Now, for $k=2$, we have that $Q^2$ is a tree $T$. The nodes of $T$ correspond to the $2$-edge-connected components of $G$, and the edges of $T$ correspond to the bridges of $G$. Now let $C$ be a $2$-edge-connected component of $G$. Then we have that $G[C]$ is $2$-edge-connected, and every $k$-edge-connected component of $G$ that lies in $C$, for $k\geq2$, is also a $k$-edge-connected component of $G[C]$. For $k=3$, the quotient graph $Q^3$ of $G[C]$ is a \emph{cactus} $S$ (that is, a connected graph in which every edge belongs to a unique cycle) \cite{DBLP:conf/wg/Dinitz92,galilMaintaining3EdgeConnectedComponents1993,lapoutreMaintainance23ecc}. The nodes of $S$ correspond to the $3$-edge-connected components of $G[C]$ and the $2$-edge cuts of $Q^3$ correspond to the $2$-edge cuts of $G[C]$. (We note that these properties generalize to the \emph{cactus of the minimum cuts} of an undirected graph \cite{CactusMinCut}.)

Now let us consider how the insertion of a new edge $(x,y)$ to $G$ affects its $k$-edge-connected components, for $k\leq 3$. (In general, observe that the edge-connectivity for any pair of vertices increases at most by one.) We distinguish four different cases, depending on whether $\lambda(x,y)\in\{0,1,2\}$ or $\lambda(x,y)\geq 3$, prior to the insertion of $(x,y)$.
\begin{enumerate}[label={(\arabic*)}]
\item{If $\lambda(x,y)=0$, then $x$ and $y$ belong to two different connected components $C_1$ and $C_2$, respectively. Thus the only change that occurs is that $\lambda(u,v)=1$, for any two vertices $u\in C_1$ and $v\in C_2$.}

\item{If $\lambda(x,y)=1$, then $x$ and $y$ belong to the same connected component $C$, and lie in two different $2$-edge-connected components $X$ and $Y$, respectively. Let $T$ be the tree of the $2$-edge-connected components of $G[C]$, and let $P=X_1,\dots,X_k$ be the simple path on $T$ with endpoints $X$ and $Y$ (i.e., we have $X_1=X$ and $X_k=Y$). Then, after inserting $(x,y)$, we have $\lambda(u,v)\geq 2$, for any pair of vertices $u,v\in X_1\cup\dots\cup X_k$, and $\lambda(u,v)$ stays the same for any other pair of vertices. In particular, this implies that $X_1\cup\dots\cup X_k$ is a new $2$-edge-connected component. Now, for every edge $(X_i,X_{i+1})$, $i\in\{1,\dots,k-1\}$, of $T$, let $(x_i,y_i)$ be the edge of $G$ that corresponds to $(X_i,X_{i+1})$, and let $(x,y)=(y_0,x_k)$ (this is for notational convenience). Then, for every $i\in\{1,\dots,k\}$, we have $y_{i-1},x_i\in X_i$, and the remaining changes in the $k$-edge-connected components of $G$, after inserting $(x,y)$, are given by supposing that we introduce a virtual edge $(y_{i-1},x_i)$ to $G$; thus they are described sufficiently in the following two cases.}

\item{If $\lambda(x,y)=2$, then $x$ and $y$ belong to the same $2$-edge-connected component $C$, and lie in two different $3$-edge-connected components of $G[C]$. Let $S$ be the cactus of the $3$-edge-connected components of $G[C]$. Then, after inserting $(x,y)$, we have that at least $X$ and $Y$ are now united into a larger $3$-edge-connected component (together with some other nodes of $S$). To describe precisely the nodes of $S$ that get united into a new $3$-edge-connected component, we introduce the concept of the \emph{cycle-path} on $S$ connecting $X$ and $Y$. Let $Z_1,\dots,Z_k$ be a simple path on $S$ with $Z_1=X$ and $Z_k=Y$. Then the cycle-path $Q$ connecting $X$ and $Y$ on $S$ is the set of nodes consisting of $X$, $Y$, and all $Z_i$, $i\in\{2,\dots,k-1\}$, such that the edges $(Z_{i-1},Z_i), (Z_i,Z_{i+1})$ belong to different cycles of $S$. (Notice that this definition is independent of the choice of the simple path $Z_1,\dots,Z_k$.) Then, after inserting $(x,y)$, all the nodes of $Q$ are merged into a new maximal $3$-edge-connected component. The edge-connectivity between any pair of vertices $u$ and $v$, not both of which lie in nodes of $Q$, stays the same. Observe that the new cactus of the $3$-edge-connected components of $G[C]$ is given by $S$ where we have ``squeezed'' every cycle $c$ that contains two consecutive nodes $Z$ and $W$ of $Q$, by merging $Z$ and $W$ into a new node $U$. (If $Z$ and $W$ are not connected with an edge on $S$, then the squeezing of $c$ at $Z$ and $W$ produces two new cycles that meet at $U$.)}

\item{If $\lambda(x,y)\geq 3$, then all the $k$-edge-connected components of $G$, for $k\leq 3$, stay the same \cite{Dinitz:5ECC}.}
\end{enumerate}

%\section{Incremental maintenance of the maximal $3$-edge-connected subgraphs}
\section{The decomposition tree of the maximal $k$-edge-connected subgraphs}
\label{section:incremental_maintenance}

In this section we provide a general framework for maintaining the maximal $k$-edge-connected subgraphs of an undirected graph upon insertions of edges or vertices. This framework relies on a structural characterization of the maximal $k$-edge-connected subgraphs that is derived from the repeated decomposition of the graph into its $k$-edge-connected components. Thus we get a decomposition tree, which can be maintained by using any algorithm for the incremental maintenance of the $k$-edge-connected components (that satisfies some properties) as a black box. We provide a concrete application of this framework: two algorithms for maintaining the maximal $3$-edge-connected subgraphs. These algorithms provide a trade-off between time- and space-bounds, because they rely on different algorithms for maintaining the $3$-edge-connected components, and they utilize the structure of the decomposition tree in different ways.

\subsection{A general framework for maintaining the $k$-edge-connected subgraphs}
\label{section:a_general_framework}
In what follows we let ``$k$-ecc'' mean ``$k$-edge-connected component''. We will not distinguish between the vertex set of a $k$-ecc and the subgraph induced by it. Let $G$ be an undirected multigraph with $m$ edges and $n$ vertices. We consider the decomposition tree $\mathcal{T}$ of the maximal $k$-edge-connected subgraphs of $G$, which is a rooted tree whose nodes correspond to subgraphs of $G$. It is defined recursively as follows. $(1)$ The root of $\mathcal{T}$ corresponds to $G$. $(2)$ Every node of $\mathcal{T}$ that corresponds to a $k$-edge-connected subgraph of $G$ is a leaf. $(3)$ The children of every node $N$ of $\mathcal{T}$ that is not a leaf correspond to the $k$-eccs of the subgraph of $G$ corresponding to $N$. This completes the recursive definition of $\mathcal{T}$. Observe that the size of $\mathcal{T}$ is $O(n)$, because the leaves of $\mathcal{T}$ correspond precisely to the maximal $k$-edge-connected subgraphs of $G$ (which are at most $n$), and every node of $\mathcal{T}$ is either a leaf or it has at least two children.

We observe that $\mathcal{T}$ has the following useful property.

\begin{property}
\label{property:T}
Let $G$ be an undirected graph, and consider the quotient graph $Q$ that is formed by shrinking a maximal $k$-edge-connected subgraph of $G$ into a single vertex. Then $Q$ has the same decomposition tree into maximal $k$-edge-connected subgraphs as $G$.
\end{property}

We use this decomposition tree because we want to maintain the maximal $k$-edge-connected subgraphs of $G$ while new edges or vertices are inserted to it. The definition using successive partitions into $k$-edge-connected components is convenient, because every maximal $k$-edge-connected subgraph lies entirely within a $k$-edge-connected component, and we can utilize algorithms for the incremental maintenance of the $k$-edge-connected components (that satisfy some properties) in order to maintain the decomposition tree under new insertions.

Now let us describe the changes that take place in $\mathcal{T}$ after a new edge is inserted to $G$.
Suppose a new edge $e=(x,y)$ is inserted to $G$. Let $X$ and $Y$ be the maximal $k$-edge-connected subgraphs of $G$ that contain $x$ and $y$, respectively. If $X=Y$, there is nothing to do. Otherwise, we find the nearest common ancestor $N$ of $X$ and $Y$ on $\mathcal{T}$, and perform the insertion of $e$ to $N$. Observe that $N$ is the deepest node of $\mathcal{T}$ that contains both $x$ and $y$. It is sufficient to perform the insertion of $e$ to $N$, because $N$ is separated from its siblings by a $k'$-edge cut, for some $k'<k$, and this does not change even if we insert $e$ to $N$. Thus, all changes in $\mathcal{T}$ after the insertion of $e$ to $G$ take place in the subtree of $N$. If $x$ and $y$ do not become $k$-edge-connected in $N$, then no change takes place in $\mathcal{T}$. Otherwise, some $k$-eccs of $N$ get merged into a larger $k$-ecc. If all vertices of $N$ become $k$-edge-connected, then $N$ becomes a leaf node. Otherwise, we have to find its children that correspond to the $k$-eccs of $N$ that get merged, and substitute them with a single child that corresponds to the bigger $k$-ecc that is formed. Then we have to push down to the new child all the edges that have their endpoints in different $k$-eccs that got merged. In effect, we re-insert those edges to the child of $N$ that represents the new $k$-ecc that is formed. This completes the recursive description of the changes that take place in $\mathcal{T}$ after inserting a new edge to $G$. 

In order to efficiently maintain the decomposition tree, we will have to attach more information to it. We discuss this in the following section, where we use as a black box any algorithm that maintains the $k$-edge-connected components (and satisfies some conditions). In any case, it is interesting to note - and it is essential in proving our time bounds - that the number of edge insertions that can affect the decomposition tree, in any sequence of edges insertions, is $O(n)$. This is formally stated and proved in Theorem~\ref{theorem_1}.

\begin{lemma}\normalfont{\cite{opt-dec-conn}}
\label{lemma:edges_of_trivial}
Let $G$ be a graph with $n$ vertices and $m$ edges, such that every maximal $k$-edge-connected subgraph of $G$ is trivial. Then $m\leq (k-1)(n-1)$. 
\end{lemma}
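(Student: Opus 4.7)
My plan is to prove this by strong induction on $n$, splitting $G$ along a small edge cut at each step.

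The base case is $n=1$, where $m=0 \leq (k-1)\cdot 0$ trivially. For the inductive step with $n \geq 2$, I first observe that $G$ cannot be $k$-edge-connected, since otherwise $G$ itself would be a non-trivial maximal $k$-edge-connected subgraph of itself, contradicting the hypothesis. So either $G$ is disconnected, or $G$ is connected but admits a $k'$-edge cut for some $k' \leq k-1$.

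If $G$ is disconnected with connected components $C_1,\dots,C_r$ (where $r \geq 2$), then each $G[C_i]$ still has only trivial maximal $k$-edge-connected subgraphs (since any $k$-edge-connected subgraph of $G[C_i]$ is also $k$-edge-connected in $G$). Applying the inductive hypothesis to each component and summing yields
\[
m = \sum_{i=1}^{r} |E(G[C_i])| \leq (k-1)\sum_{i=1}^{r}(|C_i|-1) = (k-1)(n-r) \leq (k-1)(n-1).
\]

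If instead $G$ is connected but not $k$-edge-connected, pick any edge cut $[A,\bar A]$ of size $k' \leq k-1$, where $A \cup \bar A = V$ is a nontrivial partition. Again, $G[A]$ and $G[\bar A]$ inherit the property that all of their maximal $k$-edge-connected subgraphs are trivial, because a $k$-edge-connected subgraph of $G[A]$ (resp.\ $G[\bar A]$) is also $k$-edge-connected in $G$. Applying the inductive hypothesis to $G[A]$ and $G[\bar A]$,
\[
m = |E(G[A])| + |E(G[\bar A])| + k' \leq (k-1)(|A|-1) + (k-1)(|\bar A|-1) + (k-1) = (k-1)(n-1),
\]
which closes the induction.

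There is no real obstacle here: the only subtlety to check carefully is that the hypothesis (all maximal $k$-edge-connected subgraphs trivial) descends to $G[A]$ and $G[\bar A]$, which follows from the fact noted earlier in the preliminaries that $\lambda_{G[S]}(x,y) \leq \lambda_G(x,y)$ for any $S \subseteq V$ and $x,y \in S$.
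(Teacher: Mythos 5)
Your proof is correct. Note that the paper does not actually prove Lemma~\ref{lemma:edges_of_trivial} — it cites it from Aamand et al.\ (the reference \cite{opt-dec-conn}) — so there is no in-paper proof to compare against; your strong-induction argument (split off a connected component if $G$ is disconnected, otherwise split along a $k'$-edge cut with $k' \le k-1$, and observe that the ``all trivial'' hypothesis descends to induced subgraphs because $\lambda_{G[S]} \le \lambda_G$) is the natural one and closes cleanly, with the only point worth double-checking being that both $A$ and $\bar A$ are non-empty so the induction applies, which you ensure.
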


\begin{theorem}
\label{theorem_1}
Let $k\geq 3$, and let $G_0=(V,E_0)$ be an empty graph with $n$ vertices and no edges ($E_0=\emptyset$). Let $e_1,\dots,e_m$ be a sequence of edges joining vertices of $V$, and define $E_i = \{e_1,\dots,e_i\}$ and $G_i=(V,E_i)$, for $i=1,\dots,m$. Then,  $|\{i\in\{1,\dots,m\} \mid e_i \mbox{ joins two different maximal } k \mbox{-edge-connected subgraphs of } G_{i-1}\}| \leq k(n - 1)$.
\end{theorem}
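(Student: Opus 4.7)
The plan is to work with the quotient graph $Q_i$ obtained from $G_i$ by contracting each m.k.e.c.\ into a single vertex. Every m.k.e.c.\ of $Q_i$ is trivial (any non-trivial one, by lifting paths through its $k$-edge-connected constituents, would give a strictly larger m.k.e.c.\ in $G_i$), so Lemma~\ref{lemma:edges_of_trivial} gives $|E(Q_i)| \leq (k-1)(|V(Q_i)|-1)$ at every step. A good edge is precisely one that becomes a new edge of $Q$ at insertion time, so the task reduces to bounding the number of distinct edges that $Q$ ever holds.

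First I would show that if $e_i$ has both endpoints in a single m.k.e.c.\ $C$ of $G_{i-1}$, then the m.k.e.c.\ partition of $G_i$ equals that of $G_{i-1}$. Otherwise a strictly larger m.k.e.c.\ $M \supsetneq C$ forms in $G_i$; then $G_{i-1}[M] = G_i[M] \setminus \{e_i\}$ is $(k-1)$-edge-connected but not $k$-edge-connected (else the original m.k.e.c.s inside $M$ would not be maximal), so it has a cut $(A, B)$ of size exactly $k-1$ which $e_i$ must cross for $G_i[M]$ to have min cut $\geq k$; but then $(A \cap C, B \cap C)$ is a nonempty bipartition of $C$, and $G_{i-1}[C]$ being $k$-edge-connected already contributes $\geq k$ edges to $[A, B]_{G_{i-1}[M]}$, contradicting $|[A,B]|=k-1$. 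Hence only good insertions can alter $Q$, and each good edge either ends up in $E(Q_m)$ or is absorbed when its two endpoints eventually merge into a common m.k.e.c.

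The main obstacle is aggregating the per-merge edge-removal bounds into the global bound $k(n-1)$. The $j$-th merge event, triggered by a good edge $e_j$, fuses $p_j$ m.k.e.c.s into one and removes from $Q$ exactly $r_j$ edges (the edges of $Q_{j-1}[M_j]$ together with $e_j$). Since $Q_{j-1}[M_j]$ inherits the triviality of m.k.e.c.s from $Q_{j-1}$ (any $k$-edge-connected induced subgraph of $Q_{j-1}[M_j]$ is also $k$-edge-connected in $Q_{j-1}$), Lemma~\ref{lemma:edges_of_trivial} yields $r_j \leq (k-1)(p_j-1)+1$. I would then group the merges by the m.k.e.c.\ $C$ of $G_m$ in which they eventually reside; by the monotone coarsening of m.k.e.c.\ partitions, each merge's outcome lies in a unique such $C$. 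Within $C$'s history, $\sum (p_j - 1) = |C|-1$ and the number of events is at most $|C|-1$, so $\sum_{j \in C} r_j \leq (k-1)(|C|-1)+(|C|-1) = k(|C|-1)$. Since the total number of good edges equals $|E(Q_m)| + \sum_j r_j$, combining with $|E(Q_m)| \leq (k-1)(|V(Q_m)|-1)$ and $\sum_C (|C|-1) = n - |V(Q_m)|$ yields a total of at most $kn - |V(Q_m)| - (k-1) \leq k(n-1)$, using $|V(Q_m)| \geq 1$.
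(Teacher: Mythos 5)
Your proof is correct, and it takes a genuinely different route from the paper's. The paper argues by induction on $n$: it waits for the \emph{first} non-trivial maximal $k$-edge-connected subgraph $S$ to appear (at step $\ell+1$), bounds $|E(S)|$ via Lemma~\ref{lemma:edges_of_trivial} applied to $S\setminus e_{\ell+1}$, contracts $S$, and recurses on the quotient (which has fewer vertices and the same decomposition, by Property~\ref{property:T}). Your argument instead performs a global amortized accounting on the quotient graph $Q_i$: you observe that $Q_i$ always has only trivial maximal $k$-edge-connected subgraphs, bound the edges removed at each merge event via Lemma~\ref{lemma:edges_of_trivial} applied to $Q_{j-1}[M_j]$, and then charge each merge to the maximal $k$-edge-connected subgraph of $G_m$ that eventually absorbs it, telescoping $\sum(p_j-1)=|C|-1$ within each final component $C$. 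The two proofs share the key lemma but differ structurally: the paper's recursion implicitly performs the amortization and is shorter to state, while your charging argument is non-recursive, yields the per-component bound $k(|C|-1)$ explicitly, and in fact produces the marginally sharper global bound $kn-|V(Q_m)|-(k-1)$, which degrades gracefully to $(k-1)(n-1)$ when no merges ever occur. Minor stylistic point: your phrase ``becomes a new edge of $Q$ at insertion time'' is slightly loose, since a good edge $e_j$ that triggers a merge of its two endpoint components becomes a self-loop immediately and is never held by any $Q_i$; but your identity ``good edges $=|E(Q_m)|+\sum_j r_j$'' with $r_j=|E(Q_{j-1}[M_j])|+1$ correctly accounts for this by including $e_j$ in $r_j$, so the accounting is sound. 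The supporting sub-claims (triviality of maximal $k$-edge-connected subgraphs in the quotient and in its induced subgraphs; invariance of the partition under insertions inside a single component, which in fact mirrors Property~\ref{property:shrinking_g_vertices}) are all argued correctly.
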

\begin{proof}
We proceed by induction on the number $n$ of vertices. For $n=1$, the theorem trivially holds, as there are no possible edge insertions. Assume that the theorem holds for every graph with at most $n$ vertices, for some $n\geq 1$. Given an empty graph with $n+1$ vertices, we will show that at most $kn$ edges can affect the decomposition tree during any sequence $e_1,\dots,e_m$ of edge insertions. 

Let ${\ell}$, $0\leq\ell\leq m-1$, be the lowest index such that $G_{\ell +1}=(V,E_{\ell +1})$ contains a non-trivial maximal $k$-edge-connected subgraph (i.e., $G_{i}=(V,E_{i})$, $0\leq i\leq\ell$ contain only trivial maximal $k$-edge-connected subgraphs). If no such index $\ell$ exists, then the final graph $G_m$ has only trivial maximal $k$-edge-connected subgraphs and the theorem  holds by Lemma~\ref{lemma:edges_of_trivial}.
Otherwise, let $S$ be the non-trivial maximal $k$-edge-connected subgraph that appears in $G_{\ell+1}$ after the edge $e_{\ell +1}$ has been inserted to $G_{\ell}$, and let $S$  contain $d$ vertices ($d\geq 2$) and $t$ edges. Note that $e_{\ell +1}$ must necessarily be in $S$. Let $S'= S\setminus e_{\ell +1}$: then $S'$ has $d\geq 2$ vertices and $t-1$ edges, and contains only trivial maximal $k$-edge-connected subgraphs.
By Lemma~\ref{lemma:edges_of_trivial} applied to $S'$, we have $t-1\leq (k-1)(d-1)$, and therefore $t\leq (k-1)(d-1)+1$. 

Now let us consider the quotient graph of $G_{\ell+1}/S$ that is formed by contracting $S$ into one single vertex. 
This quotient graph has $\ell +1-t$ edges and $(n+1)-d+1=n-d+2\leq n$ vertices (since $d\geq 2$).  Since we are contracting a maximal $k$-edge.connected subgraph, the quotient graph has exactly the same decomposition tree as $G_{\ell+1}$. By induction, there are at most $k((n-d+2)-1)-(\ell +1-t)=kn-kd+k-\ell -1+t$ new edges that can be added to the quotient graph and affect its decomposition tree. This number is at most $kn-\ell-d+1$ (since $t$ is at most $kd-k-d+2$). Since we have already added $\ell +1$ edges to the initial graph, we have that, in total, at most $kn-d+2$ edges can affect the decomposition tree. The fact that $d\geq 2$ yields the desired result.
\end{proof}

\subsection{Maintaining the decomposition tree}
\label{section:maintaining_T_g}

In order to be able to update $\mathcal{T}$ after a new insertion to the graph, we need a $k$-edge-connected components algorithm that satisfies the following properties. The algorithm works on an incremental graph $G$ and maintains a unique label for every $k$-ecc of $G$. Furthermore, it supports the following two operations.

\begin{enumerate}
\item{When a new edge is inserted to $G$, return the labels of the $k$-eccs of $G$ that get merged due to this insertion, the label of the new $k$-ecc that is formed, and the set of the interconnection edges between those $k$-eccs that get merged.}
\item{Given a collection of vertex-disjoint graphs $G_1,\dots,G_t$, join the underlying data structures that the algorithm uses for those graphs into a new data structure for the graph $G=G_1\cup\dots\cup G_t$.}
\end{enumerate}

We call an algorithm (plus the underlying data structure) that supports these operations an ``incremental $k$-edge-connected components algorithm'' (or ``a data structure for maintaining the $k$-edge-connected components'').

Now suppose that we are equipped with an incremental $k$-edge-connected components algorithm. In order to efficiently maintain the decomposition tree $\mathcal{T}$, we associate with every non-leaf node $N$ a data structure for maintaining the $k$-edge-connected components. Furthermore, for every $k$-ecc of $N$, we have a two-way correspondence between its label maintained by the data structure and the child of $N$ that corresponds to it. Now suppose that an edge $e$ is inserted to $N$. Then we use operation $(1)$ in order to find the labels $C_1,\dots,C_t$ of the $k$-eccs of $N$ that get merged due to the insertion of $e$. If there is only one such $k$-ecc returned, there is nothing to do. If all the $k$-eccs of $N$ are returned, then we make $N$ a leaf node. Otherwise, we get from those labels the corresponding children of $N$, we merge those children into a new child, and we establish a two-way correspondence between this child and the new $k$-ecc of $N$ that is formed. Furthermore, we retrieve the associated data structures of the children of $N$ that get merged, we join them into a new data structure using operation $(2)$, and we associate the new data structure with the new child of $N$. Finally, we insert the interconnection edges between $C_1,\dots,C_t$ that we got from operation $(1)$ into the new child of $N$. This completes the recursive description of the update of $\mathcal{T}$ using an incremental $k$-edge-connected components algorithm as a black box.

We will describe two different methods with which we can perform the merging of nodes of $\mathcal{T}$, with a trade-off in time- and space-complexity. The first method is the simplest one. Let $N$ be a node of $\mathcal{T}$, and let $C_1,\dots,C_t$ be the children of $N$ that we have to merge. Then we will use the node with the greatest number of children among $C_1,\dots,C_t$ as the new node in which $C_1,\dots,C_t$ get merged. So let $C$ be a node among $C_1,\dots,C_t$ with maximum number of children. Then we discard all nodes in $\{C_1,\dots,C_t\}\setminus\{C\}$, and we redirect the parent pointer of their children to $C$. Since this involves at most $n$ nodes at every level, and a parent pointer can be redirected at most $\log n$ times, we get at most $O(n\log n)$ redirections of parent pointers at every level, and $O(n^2\log n)$ redirections in total. In the second method we maintain all nodes of $\mathcal{T}$ (possibly in a ``deactivated'' mode) throughout the sequence of all insertions, and thus we may need as much as  $\Omega(n^2)$ space (since there are at most $O(n)$ insertions that can affect the decomposition tree). The idea is to use a disjoint-set union data structure $\mathit{DSU}_i$ \cite{tarjanEfficiencyGoodNot1975} on the nodes of level $i$ of the tree, for all $i\geq 1$. In order to merge nodes at level $i$ we unite all of them using $\mathit{DSU}_i$ and we choose one of them as the representative. We consider the representative to be an \emph{active} node, in the sense that we can use its parent pointer to move to level $i-1$, whereas the parent pointer of the other nodes that got united will never be used again. Thus, in order to access the parent of a node $N$ at level $i$, we use $\mathit{find}_{i-1}(\mathit{parent}(N))$, where $\mathit{parent}$ denotes the parent pointer in $\mathcal{T}$, and $\mathit{find}_{i-1}$ is the $\mathit{find}$ operation of $\mathit{DSU}_{i-1}$. Notice that this approach has the advantage that it does not explicitly redirect the parent pointers. Using an optimal implementation for $\mathit{DSU}_i$ \cite{tarjanEfficiencyGoodNot1975}, we can perform any sequence of $m$ operations $\mathit{find}_i$ or $\mathit{unite}_i$ in $O(m\alpha(m,n))$ time in total. Since at most $n$ $\mathit{unite}$ operations can take place at each level, we have an $O(n\alpha(n,n))$ time-bound at every level for the mergings, and an $O(n^2\alpha(n,n))$ time-bound in total.

We note that the $n^2$ expression in the time-bound is an inherent bottleneck of this approach for maintaining the maximal $k$-edge-connected subgraphs, for any $k\geq 3$. In order to demonstrate this, let us introduce some concepts. We call the edges that connect two different maximal $k$-edge-connected subgraphs of a graph \emph{$k$-interconnection} edges. Then we partition the $k$-interconnection edges into levels as follows. If an edge participates in a $k'$-edge cut of the graph, for $k'<k$, we define the level of this edge to be $1$. Now suppose that we have defined all $d$-level edges, for some $d\geq 1$. Then an edge that participates in a $k'$-edge cut of (a connected component of) the graph that remains after we remove all $d'$-level edges, for $d'\leq d$, is assigned level $d+1$. Now we observe that our method for updating the decomposition tree explicitly maintains the levels of the $k$-interconnection edges (because every $d$-level $k$-interconnection edge is essentially maintained in the associated data structure of a node of depth $d$ of the decomposition tree). There are sequences of insertions of edges in which $\Omega(n)$ insertions may force $\Omega(n)$ edges to change their level, and thus we need at least $\Omega(n^2)$ time in total to maintain the decomposition tree. An example for $k=3$ is shown in Figure~\ref{figure:example-insertions}. (This generalizes easily to any $k>3$.) 

\begin{figure*}[t!]
\begin{center}
\centerline{\includegraphics[trim={0 0cm 0 14.5cm}, clip=true, width=\textwidth]{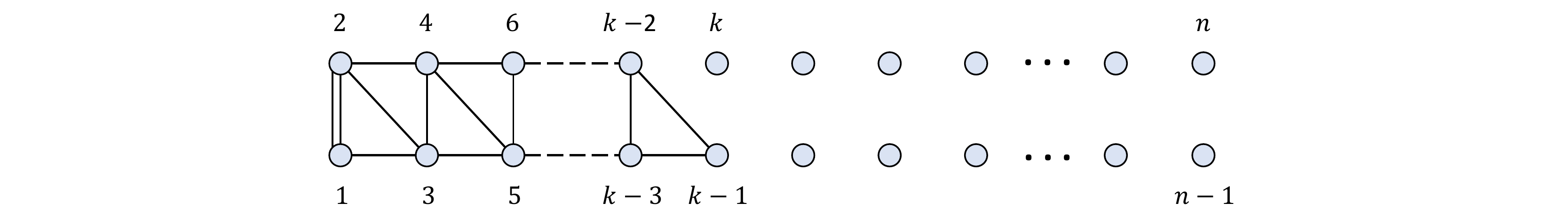}}
\caption{An example of a sequence of insertions that can force $\Omega(n)$ $3$-interconnection edges to increase their level by one $\Omega(n)$ times. The sequence of insertions, starting from the empty graph with vertices in $\{1,\dots,n\}$ is $(1,2),(1,2),(3,1),(3,2),(4,2),(4,3),\dots,(k,k-2),(k,k-1),\dots,(n,n-2),(n,n-1)$. After the insertion of $(k,k-1)$, we have that $(k,k-1)$ and $(k,k-2)$ are the only $1$-level edges, and the level of the other edges $(1,2),\dots,(k-1,k-2)$ increases by one. Note that $\{1,\dots,k-1\}$ is a $3$-edge-connected component at this point, whereas all the maximal $3$-edge-connected subgraphs are trivial. \label{figure:example-insertions}}
\end{center}
\end{figure*}

\subsection{The decomposition tree of the maximal $3$-edge-connected subgraphs}
\label{section:decomposition_tree}
Here we apply the framework outlined in the previous section for the special case $k=3$. We focus on the case $k=3$ for the following reasons. $(1)$ There exist very efficient/asymptotically optimal algorithms for the incremental maintenance of the $3$-edge-connected components that we can use for maintaining the decomposition tree for the case $k=3$. $(2)$ Although there exist very efficient algorithms for maintaining also the $4$- and $5$-edge-connected components \cite{dinitzMaintainingClasses4EdgeConnectivity1998, Dinitz:5ECC}, and they seem applicable in our framework, they involve a lot of technicalities and it would be very tedious to show how to extend them in order to apply them for our purposes. For $k>5$ we are not even aware of any efficient incremental algorithms that we could apply.\footnote{The fully dynamic algorithm of Jin and Sun~\cite{st-conn_Sun} seems very difficult to be usable in our framework, because it does not explicitly maintain the $k$-edge-connected components.} $(3)$ The case $k=3$ is the simplest one to consider. And yet, even here, we have to extend appropriately the existing data structures and algorithms, and this involves various technicalities.

For the case $k=3$ we incorporate the algorithms of \cite{galilMaintaining3EdgeConnectedComponents1993,lapoutreMaintainance23ecc,lapoutreMaintainance23eccII,westbrookMaintainingBridgeConnectedBiconnected1992}, for maintaining the $3$-eccs, in the decomposition tree. Thus, since these algorithms basically maintain the $2$-eccs of the $1$-eccs, and the $3$-eccs of the $2$-eccs, we augment the decomposition tree with more levels in order to capture the decomposition into all $k'$-eccs, for $k'\leq 3$. To be more precise, we consider the decomposition tree $\mathcal{T}$ of the maximal $3$-edge-connected subgraphs of $G$ that is formed by repeatedly removing all $1$- and $2$-edge cuts. The nodes of $\mathcal{T}$ correspond to subgraphs of $G$. First, the root of $\mathcal{T}$ corresponds to the entire graph. If $G$ is $3$-edge-connected, then this is the only node of $\mathcal{T}$, and the decomposition is over. Otherwise, the children of the root correspond to the connected components of $G$, which we call $1$-ecc nodes. Then the children of every $1$-ecc node correspond to its $2$-edge-connected components, and we call them $2$-ecc nodes. Finally, the children of every $2$-ecc node correspond to its $3$-edge-connected components and we call them $3$-ecc nodes. Now, if a $3$-ecc node corresponds to a $3$-edge-connected subgraph of $G$, then it is a leaf of $\mathcal{T}$, and a maximal $3$-edge-connected subgraph of $G$. Otherwise, it becomes the root of a subtree of $\mathcal{T}$ that is produced recursively with the same procedure. In this way, we compute the decomposition tree $\mathcal{T}$ of the maximal $3$-edge-connected subgraphs of $G$. (See Figure~\ref{figure:decompositiontreeexample} for an example.)

\begin{figure*}[t!]
\begin{center}
\centerline{\includegraphics[trim={0 0 0 0cm}, clip=true, width=\textwidth]{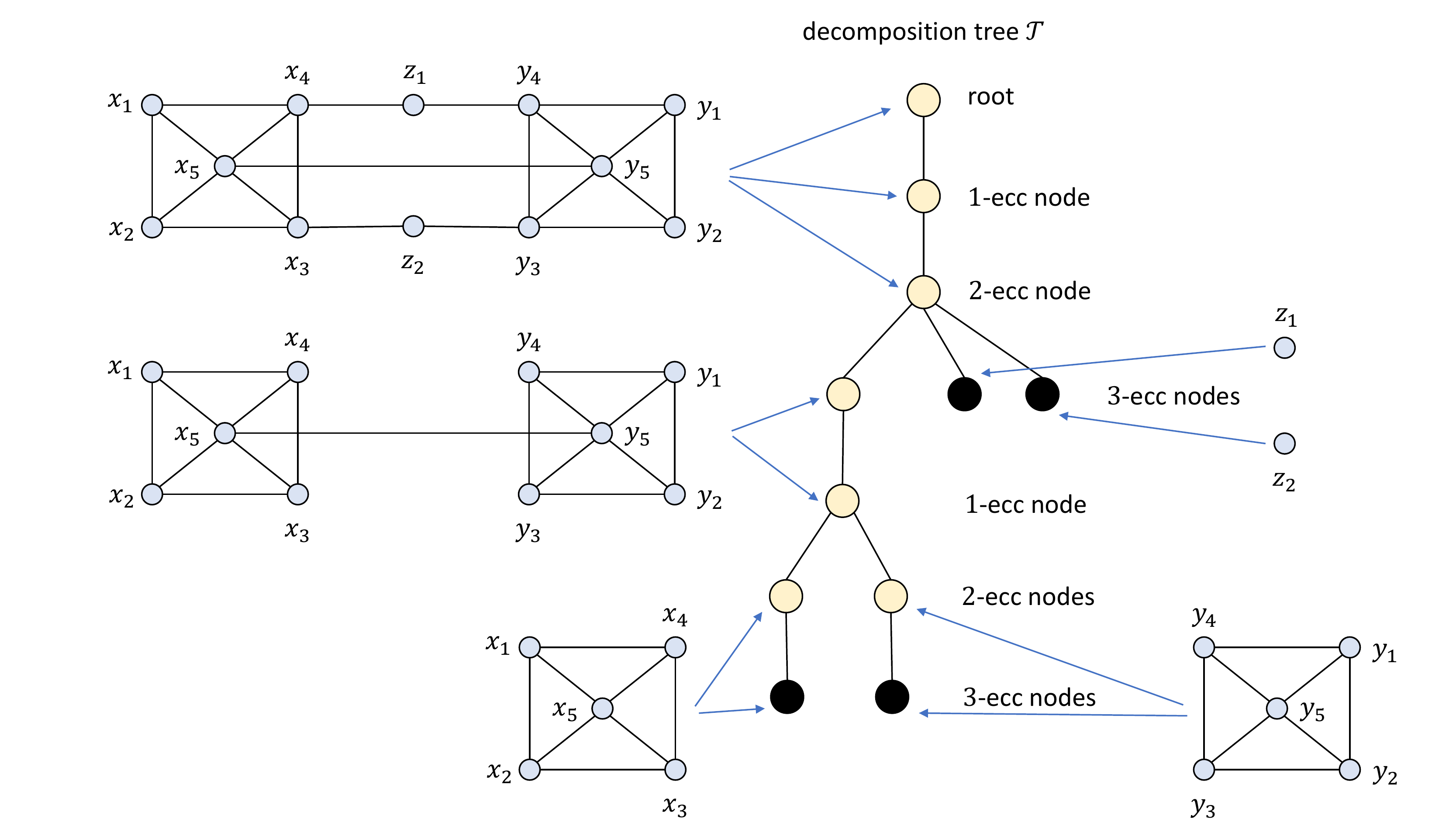}}
\caption{The decomposition tree $\mathcal{T}$ of the graph $G$ of Figure~\ref{figure:max3ecsexample}. Notice the correspondence between the nodes of $\mathcal{T}$ and some subgraphs of $G$. The leaves of $\mathcal{T}$ (coloured in black) correspond to the maximal $3$-edge-connected subgraphs of $G$.\label{figure:decompositiontreeexample}}
\end{center}
\end{figure*}

Observe that the size of $\mathcal{T}$ is $O(n)$,\footnote{However, in order to efficiently maintain $\mathcal{T}$ after the insertion of new edges to $G$, we may have to attach more information to $\mathcal{T}$, so that its total size will be $O(n^2)$. (See Section~\ref{subsection:alpha_alg}.)} since the children of every node $N$ of $\mathcal{T}$ correspond to subgraphs of the graph corresponding to $N$ that partition its vertex set (and although some nodes have only one child, by construction this child must have at least two children or two grandchildren, unless it is a leaf). For convenience, and since no ambiguity arises, we identify every node of $\mathcal{T}$ with the subgraph of $G$ that corresponds to it. (We do not maintain this correspondence explicitly in any data structure, as this would be impractically expensive in terms of space (and therefore time). It is only a conceptual convention that we make, and we will use it extensively in the sequel.) Furthermore, since $\mathcal{T}$ is rooted, we let the nodes of $\mathcal{T}$ be partitioned into levels according to their depth, where the level of the root is $0$. Thus we may speak of a subgraph of $G$ at some level of $\mathcal{T}$. Observe that $\mathcal{T}$ contains $3k$ levels, for some $k\geq 0$, and that every maximal $3$-edge-connected subgraph of $G$ is contained at some level $3k'$, for $k'\leq k$, as a leaf.
It is useful to note that $\mathcal{T}$ satisfies the following.

\begin{property}
\label{property:shrinking_g_vertices}
If we shrink a maximal $3$-edge-connected subgraph of $G$ into a single vertex, then the decomposition tree of the resulting graph is given again by $\mathcal{T}$.
\end{property}

%\section{Incremental maintenance of the maximal $3$-edge-connected subgraphs}
%\label{section:decomposition-tree-inc}

Now we will give a high-level description of the changes that $\mathcal{T}$ undergoes when a new edge $(x,y)$ is inserted to $G$. 
Theorem~\ref{theorem_1} implies that at most $3n-3$ insertions of edges may affect the decomposition tree, in any sequence of edge insertions, where $n$ is the total number of vertices inserted.
By Property~\ref{property:shrinking_g_vertices}, we have that a newly inserted edge can affect the decomposition tree only if its endpoints lie in different maximal $3$-edge-connected subgraphs of $G$.
Thus we consider the nearest common ancestor $N$ of the maximal $3$-edge-connected subgraphs of $G$ that contain $x$ and $y$. This corresponds to the deepest node in the decomposition tree that contains both $x$ and $y$, and it should be clear that the insertion of $(x,y)$ affects only the decomposition of $N$. We distinguish three different cases, depending on whether $(1)$ $N$ is the root or a $3$-ecc node, or $(2)$ $N$ is a $1$-ecc node, or $(3)$ $N$ is a $2$-ecc node. The whole procedure is summarized in Algorithm~\ref{algorithm:tree_modification}.

\subsubsection{$N$ is the root or a $3$-ecc node}
In this case $(x,y)$ joins two different connected components $C_1$ and $C_2$ of $N$, and so it becomes a bridge of $N$ that connects them. Thus we only have to merge $C_1$ and $C_2$ into a new $1$-ecc node (see Fig.~\ref{fig:3ccexample}).

\begin{figure}[t!]\centering
\includegraphics[width=0.9\linewidth]{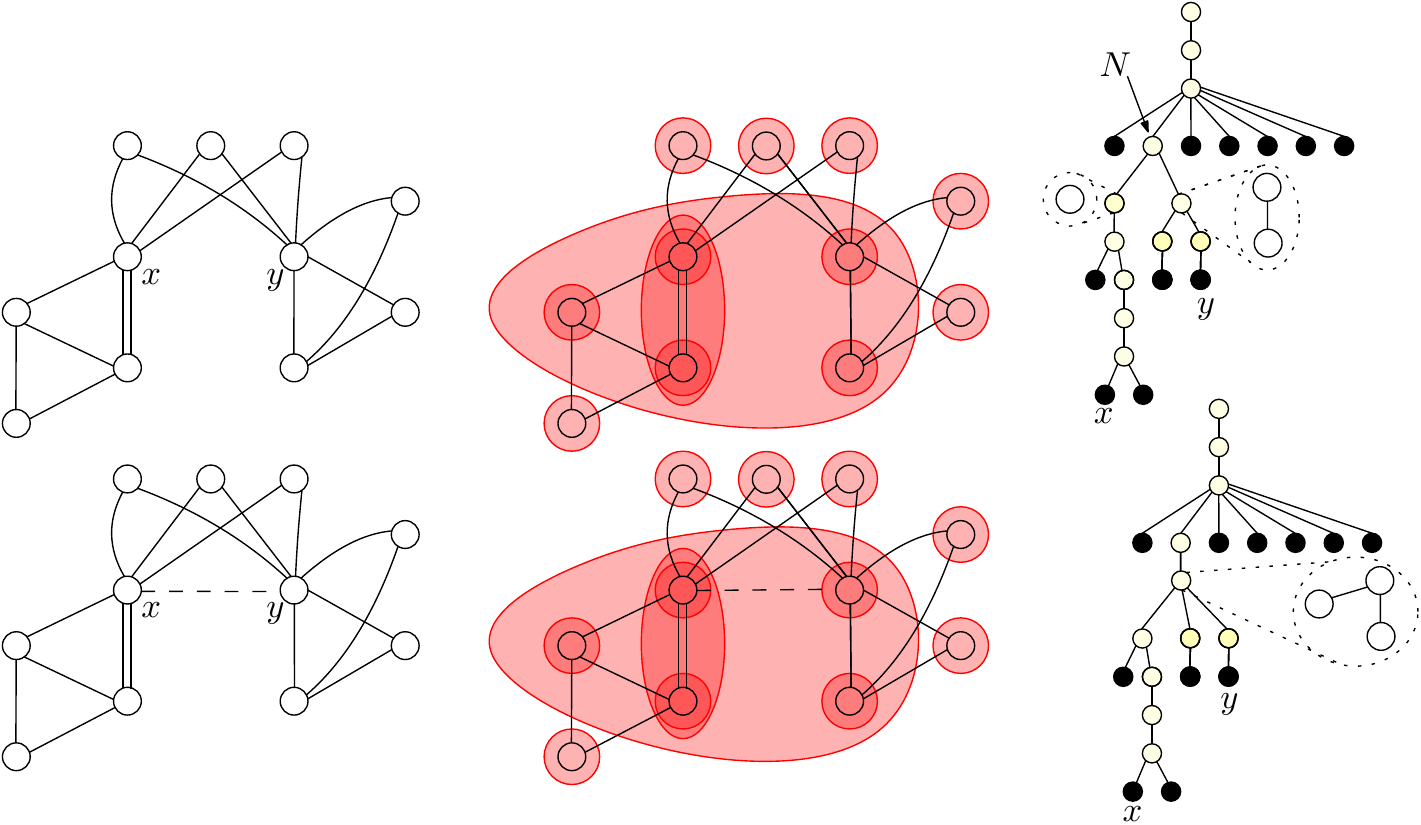}
\caption{\small{Here we show the changes to the decomposition tree on the insertion of the edge $(x,y)$. The nearest common ancestor $N$ of (the maximal $3$-edge-connected subgraphs that contain) $x$ and $y$ is a $3$-ecc node. Therefore, we need to know the two children of $N$ that contain $x$ and $y$; then we retrieve the associated trees, join them properly with (an equivalent of) the edge $(x,y)$, and then we merge the two nodes. Finally, we associate with it the new tree that has been formed. In this figure, the middle graphs show the decomposition of the graph into $3$-edge-connected components at every level. Also, in the decomposition trees, we can see the trees associated with the nodes we have to merge and the tree associated with the merged node. (The new edge $(x,y)$ is stored inside this tree.) The $3$-edge-connected components are highlighted with pink colour. Lower intensity signifies greater depth in the decomposition tree.}}\label{fig:3ccexample}
\end{figure}

\subsubsection{$N$ is a $1$-ecc node}
In this case $(x,y)$ joins two different $2$-edge-connected components $C_1$ and $C_2$ of $N$. Without loss of generality, assume that $x\in C_1$ and $y\in C_2$. Let $T$ be tree of the $2$-edge-connected components of $N$, and let $P=X_1,\dots,X_k$ be the path on $T$ with endpoints $C_1$ and $C_2$. (Thus we have $X_1=C_1$ and $X_k=C_2$.) Then the vertices that are contained in $X_1,\dots,X_k$ become $2$-edge-connected, and for every pair of vertices not both of which belong to $X_1\cup\dots\cup X_k$ the edge-connectivity remains the same. Furthermore, for every $i\in\{1,\dots,k-1\}$, let $(x_i,y_i)$ be the bridge of $N$ that corresponds to the edge $(X_i,X_{i+1})$ of $P$, and let also $(x,y)=(y_0,x_k)$ (this is for notational convenience). Then, every pair of edges in $\{(x_i,y_i)\mid i\in\{1,\dots,k-1\}\}\cup\{(x,y)\}$ is a $2$-edge cut of $N$. Thus we have to merge all $2$-ecc nodes $X_1,\dots,X_k$ into a new $2$-ecc node $C$, and no change takes place on $\mathcal{T}$ outside the subtree of $C$.

Now we have to consider the changes that possibly take place in the subtree of $C$. For every $i\in\{1,\dots,k\}$, let $S_i$ be the cactus of the $3$-edge-connected components of $X_i$, and let $Q_i$ be the cycle-path on $S_i$ with endpoints the $3$-ecc of $X_i$ that contains $y_{i-1}$ and the $3$-ecc of $X_i$ that contains $x_i$. Let also $D_{(i,1)},\dots,D_{(i,t(i))}$ be the subgraphs of $X_i$ that correspond to the nodes of $Q$. Then, the vertices that are contained in $D_{(i,1)},\dots,D_{(i,t(i))}$ become $3$-edge-connected; furthermore, for every pair of vertices of $X_i$ not both of which lie in $D_{(i,1)}\cup\dots\cup D_{(i,t(i))}$ the edge-connectivity remains the same. Thus we have to merge all $D_{(i,1)},\dots,D_{(i,t(i))}$ into a new $3$-ecc node $D_i$; furthermore, no change in the subtree of $X_i$ takes place outside the subtree of $D_i$. Now we have to consider the graphs $D_{(i,1)},\dots,D_{(i,t(i))}$ as the connected components of the new graph $D_i$. However, in order to maintain the decomposition subtree of $D_i$ on $\mathcal{T}$, we have to take care of two things. First, some graphs $D_{(i,j)}$, for $j\in\{1,\dots,t(i)\}$, might be leaves (prior to the insertion of $(x,y)$). This means that they are $3$-edge-connected subgraphs, and so we have to include them in the subtree of $D_i$ by expanding their corresponding nodes on $\mathcal{T}$ with the addition of three intermediary nodes. And secondly, there might exist edges in $X_i$ that connect some of $D_{(i,1)},\dots,D_{(i,t(i))}$. (Note that these edges correspond naturally to those of $S_i$ that connect the nodes of $Q_i$.) These edges are now included in $D_i$, and so we have to consider their effect on the decomposition subtree of $D_i$. Thus, in order to capture fully the effect on $\mathcal{T}$ of the insertion of $(x,y)$, we have to re-insert those edges to $G$. Observe that these edges constitute parts of $1$- or $2$-edge cuts of $D_i$, and so they will be re-inserted only once in order to perform the insertion of $(x,y)$. However, their re-insertion may force other edges of $G$, that lie in graphs on deeper levels, to be re-inserted to $G$. Nevertheless, the nearest common ancestor involved in each such re-insertion will either be a $3$-ecc or a $1$-ecc node. As a consequence, we note that no new maximal $3$-edge-connected subgraphs will be formed after the insertion of $(x,y)$ to $G$. Figure~\ref{figure:1cc} is an example of this case.

\begin{figure}[t!]\centering
\includegraphics[width = 0.8\linewidth,trim={1.5cm 17cm 2cm 0cm}]{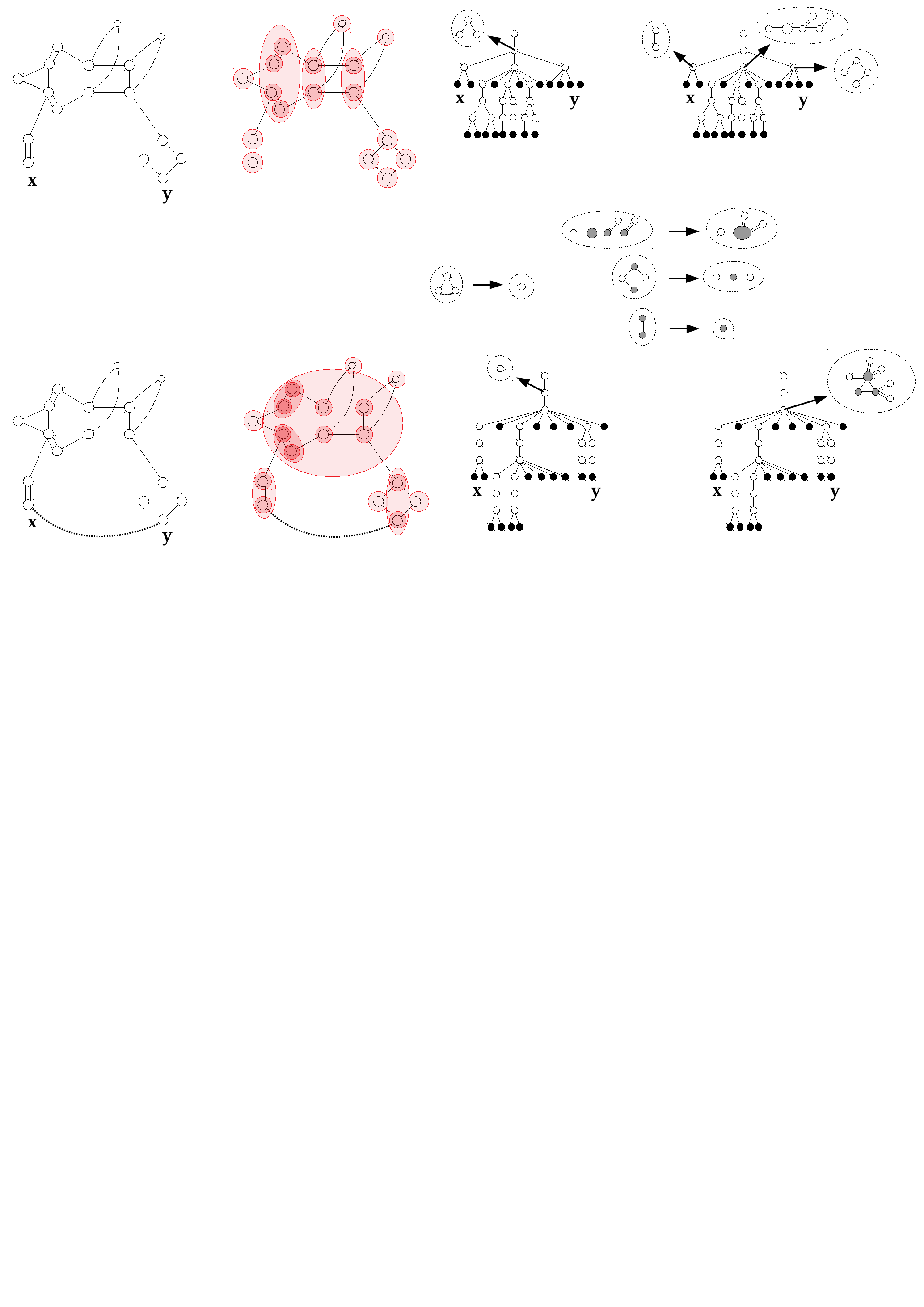}
\caption{\small{Here we insert the edge $(x,y)$ to the graph. This joins two different $2$-edge-connected components (at the first decomposition level), which lie in the tree $T$ associated with the nearest common ancestor of $\mathbf{x}$ and $\mathbf{y}$. The path of those $2$-eccs on $T$ consists of three nodes, which we have to merge. So first we have to retrieve their associated cactuses (which are shown in this figure), to merge some of their nodes using $\mathtt{merge3ecc}$, and then to join them on a new cactus using (essentially) the edges of the cycle. Here we can see the effect of $\mathtt{merge3ecc}$ on the three cactuses. With gray are shown the nodes that we have to merge. The resulting three cactuses get joined on their gray (compressed) nodes along the edges of the triangle, one of which is (essentially) the new one, and the other two lay on a higher level (inside the tree $T$ of the $2$-eccs). Thus, those two edges got pushed down one level. Also notice that the height of the decomposition tree increased.}} \label{figure:1cc}
\end{figure}

We refer to the three step process of $(a)$ merging all $D_{(i,1)},\dots,D_{(i,t(i))}$ into a single node $D_i$, $(b)$ expanding every node $D_{(i,j)}$ that is a leaf, for $j\in\{1,\dots,t(i)\}$, with the addition of three intermediary nodes, and $(c)$ re-inserting into $G$ the inter-edges between the subgraphs $D_{(i,1)},\dots,D_{(i,t(i))}$ of $X_i$, as \emph{merging} the $3$-ecc nodes $D_{(i,1)},\dots,D_{(i,t(i))}$ into $D_i$. This procedure is shown in Algorithm~\ref{algorithm:merge3ecc}. (The variable ``parent'' in Lines \ref{line:alg_p1}, \ref{line:alg_p2} and \ref{line:alg_p3} denotes the parent relation of $\mathcal{T}$.)

\subsubsection{$N$ is a $2$-ecc node}
In this case $(x,y)$ joins two different $3$-edge-connected components $C_1$ and $C_2$ of $N$. We note that this is the only case in which the formation of new maximal $3$-edge-connected subgraphs of $G$ may take place (by merging together smaller maximal $3$-edge-connected subgraphs of $G$). Let $S$ be cactus of the $3$-edge-connected components of $N$, and let $Q=X_1,\dots,X_k$ be the cycle-path on $S$ with endpoints $C_1$ and $C_2$. Then the vertices that are contained in $X_1,\dots,X_k$ become $3$-edge-connected, and for every pair of vertices not both of which belong to $X_1\cup\dots\cup X_k$ the edge-connectivity remains the same. Thus, if $Q$ contains all the nodes of $S$, then $N$ becomes $3$-edge-connected. Therefore, if $N$ prior to the insertion of $(x,y)$ was the only $2$-edge-connected component of its grandparent $R$ on $\mathcal{T}$, then this means that $R$ becomes $3$-edge-connected, and so its subtree on $\mathit{T}$ is condensed into $R$. Otherwise, all the proper descendants of $N$ are condensed into a new single $3$-ecc node of $\mathcal{T}$, which corresponds to the new maximal $3$-edge-connected subgraph that has been formed.

Now suppose that $Q$ does not contain all the nodes of $S$. Then we have to perform a merging of the $3$-ecc nodes $X_1,\dots,X_k$ into a new node $D$, and then repeat the insertion of the edge $(x,y)$. Let $X$ and $Y$ be the maximal $3$-edge-connected subgraphs of $G$ that contain $x$ and $y$, respectively. Then the nearest common ancestor of $X$ and $Y$ on $\mathcal{T}$ is a descendant of $D$, and it can either be a $3$-ecc node or a $1$-ecc node (in which case we are work as previously), or a $2$-ecc node again. In the last case, either the formation of a new maximal $3$-edge-connected subgraph of $G$ will take place (and we are done), or we will have to merge again some $3$-ecc nodes of $\mathcal{T}$ into a new node and then repeat the insertion of $(x,y)$. Observe that, eventually, this process must terminate, since every time that we have to merge some $3$-ecc nodes into a new node, we leave some of their siblings on the same level, and then the computation involves only the subtree of the new node. 

Figure~\ref{figure:2ccsimple} is a simple example of this case, where the whole graph becomes $3$-edge-connected.

\begin{figure}[t!]\centering
\includegraphics[width=0.9\linewidth]{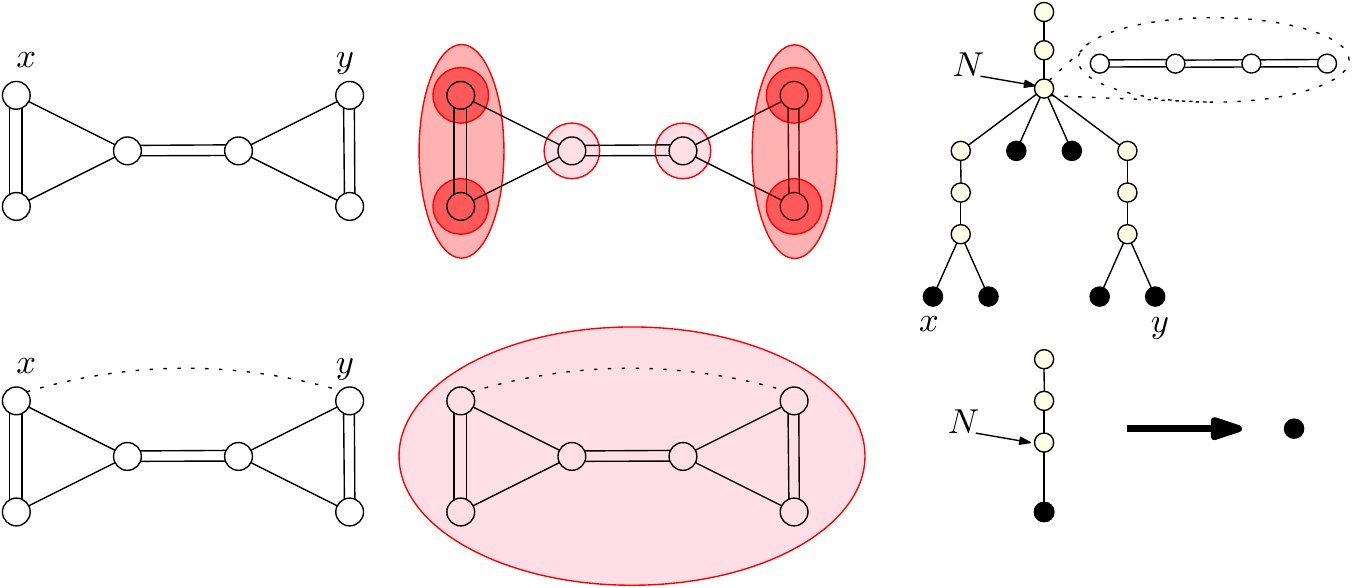}
\caption{\small{This is a simple example where the insertion of a new edge $(x,y)$ makes the whole graph $3$-edge-connected. This happens immediately, because the nearest common ancestor $N$ of $x$ and $y$ is a $2$-ecc node, where the associated cactus is the cycle-path connecting the nodes that contain $x$ and $y$. Thus we merge all leaves of $N$, we discard the whole subtree $N$, and assign as a child of $N$ the new maximal $3$-edge-connected subgraph that has been formed. Notice that the parent of $N$ has only one child, and the parent of the parent of $N$ has again only one child. Thus, we replace the grandparent of $N$ with the child of $N$, and we discard $N$, its parent, and its grandparent.}}\label{figure:2ccsimple}
\end{figure}

%A more detailed example for this case is given in the appendix. 

\begin{algorithm}[t!]
\caption{\textsf{Update the decomposition tree $\mathcal{T}$ after inserting the edge $(x,y)$ to $G$}}
\label{algorithm:tree_modification}
\LinesNumbered
\DontPrintSemicolon
\textbf{procedure} $\mathtt{insert}(x,y)$\;
\Begin{
$X\leftarrow$ the leaf of $\mathcal{T}$ that contains $x$\;
\label{alg:find_3ecsx}
$Y\leftarrow$ the leaf of $\mathcal{T}$ that contains $y$\;
\label{alg:find_3ecsy}
$N\leftarrow$ the nearest common ancestor of $X$ and $Y$ on $\mathcal{T}$\;
\label{alg:nca}
$C_1\leftarrow$ the child of $N$ that contains $x$\;
$C_2\leftarrow$ the child of $N$ that contains $y$\;
\If{$N$ is the root of $\mathcal{T}$ or a $3$-ecc node}{
  merge $C_1$ and $C_2$ into a new $1$-ecc node\;
  \label{alg:merge1}
}
\ElseIf{$N$ is a $1$-ecc node}{
  $T\leftarrow$ the tree of the $2$-edge-connected components of $N$\;
  $X_1,\dots,X_k\leftarrow$ the path on $T$ with endpoints $C_1$ and $C_2$\;
  \label{alg:tree_path}
  \ForEach{$i\in\{1,\dots,k-1\}$}{$(x_i,y_i)\leftarrow$ the edge of $N$ that corresponds to $(X_i,X_{i+1})$ of $T$\;
   \label{alg:tree_edges}}
  $(y_0,x_k)\leftarrow (x,y)$\;
  \ForEach{$i\in\{1,\dots,k\}$}{
    $S_i\leftarrow$ the cactus of the $3$-edge-connected components of $X_i$\;
    $D_{(i,1)},\dots,D_{(i,t(i))}\leftarrow$ the cycle-path on $S_i$ with endpoints the $3$-ecc of $X_i$ that contains $y_{i-1}$ and the $3$-ecc of $X_i$ that contains $x_i$\;
    \label{alg:cactus_path1}
    $\mathtt{merge3ecc}(D_{(i,1)},\dots,D_{(i,t(i))})$\;
    \label{alg:call_merge3}
  }
  merge all $X_1,\dots,X_k$ into a new $2$-ecc node of $\mathcal{T}$\;
  \label{alg:merge2}
}
\ElseIf{$N$ is a $2$-ecc node}{
  $S\leftarrow$ the cactus of the $3$-edge-connected components of $N$\;
  $D_1,\dots,D_k\leftarrow$ the cycle-path on $S$ with endpoints $C_1$ and $C_2$\;
  \label{alg:cactus_path2}
  \If{$V(S)=\{D_1,\dots,D_k\}$}{
    $R\leftarrow$ the grandparent of $N$ on $\mathcal{T}$\;
    \lIf{$N$ is the only $2$-ecc of $R$}{condense the subtree of $R$ into $R$}
    \lElse{condense all the proper descendants of $N$ into a new $3$-ecc node}
  }
  \Else{
    $\mathtt{merge3ecc}(D_1,\dots,D_k)$\;
    \label{alg:re-insert}
    $\mathtt{insert}(x,y)$\;  
    \label{alg:re-insert1}
  }
}}
\end{algorithm}

\begin{algorithm}[h!]
\caption{\textsf{merge the children $D_1,\dots,D_k$ of a $2$-ecc node $X$}}
\label{algorithm:merge3ecc}
\LinesNumbered
\DontPrintSemicolon
\textbf{procedure} $\mathtt{merge3ecc}(D_1,\dots,D_k)$\;
\Begin{
\For{$i\in\{1,\dots,k\}$}{
  \If{$D_i$ is a leaf of $\mathcal{T}$}{
     let $D_i'$, $D_i''$ and $D_i'''$ be a new $1$-ecc, $2$-ecc and $3$-ecc node, respectively, where $D_i'$, $D_i''$ and $D_i'''$ correspond to the same graph as $D_i$\;
     $\mathit{parent}(D_i''')\leftarrow D_i''$\;
     \label{line:alg_p1}
     $\mathit{parent}(D_i'')\leftarrow D_i'$\;
     \label{line:alg_p2}
     $\mathit{parent}(D_i')\leftarrow D_i$\;
     \label{line:alg_p3}
  }
}
$S\leftarrow$ the cactus of the $3$-edge-connected components of $X$\;
$\mathcal{E}\leftarrow$ the edge set of $S[D_1\cup\dots\cup D_k]$\;
\label{alg:cactus_edges}
merge all $D_1,\dots,D_k$ into a new $3$-ecc node\;
\label{alg:merge3}
\ForEach{edge $e$ in $\mathcal{E}$}{
  $\mathtt{insert}(e)$\;
  \label{alg:re-insert2}
}}
\end{algorithm}

%\section{The decomposition tree of the maximal $3$-edge-connected subgraphs}
%\section{Incremental maintenance of the maximal $3$-edge-connected subgraphs}
%\label{section:decomposition-tree}

\section{Maintaining the decomposition tree after insertions}
\label{section:maintaining_T}
As seen in Algorithms \ref{algorithm:tree_modification} and \ref{algorithm:merge3ecc}, in order to efficiently update $\mathcal{T}$ after inserting an edge $(x,y)$ to $G$, we have to provide efficient implementations for the following procedures: 

\begin{enumerate}[label={(\arabic*)}]
\item{Find the leaves $X$ and $Y$ of $\mathcal{T}$ that contain $x$ and $y$, respectively.}% (in Lines \ref{alg:find_3ecsx} and \ref{alg:find_3ecsy} of Algorithm~\ref{algorithm:tree_modification}).}
\item{Find the nearest common ancestor $N$ of $X$ and $Y$ on $\mathcal{T}$.} %(in Line \ref{alg:nca} of Algorithm~\ref{algorithm:tree_modification}).}
\item{Find the grandchild of $N$ that contains a particular vertex $v\in N$. (This is needed in Line~\ref{alg:cactus_path1} of Algorithm~\ref{algorithm:tree_modification}, in order to find the $3$-eccs of $X_i$ that contain $y_{i-1}$ and $x_i$.)}
\item{Merge sets of nodes of $\mathcal{T}$ (lying on the same level) into a new single node.} %(in Lines \ref{alg:merge1} and \ref{alg:merge2} of Algorithm~\ref{algorithm:tree_modification}, and Line \ref{alg:merge3} of Algorithm~\ref{algorithm:merge3ecc}).}
\item{Find the nodes of $\mathcal{T}$ that have to get merged, and the inter-connection edges between the subgraphs that correspond to them.}
\end{enumerate}

We will provide two different solutions for this set of procedures. In Section~\ref{subsection:log_alg} we rely on the ideas and the data structures of \cite{galilMaintaining3EdgeConnectedComponents1993,lapoutreMaintainance23ecc,westbrookMaintainingBridgeConnectedBiconnected1992}, in order to provide an algorithm that uses $O(n)$ space and runs in $O(n^2\log^2 n + m\alpha(m,n))$ time, for any sequence of $m$ edge and $n$ vertex insertions. In Section~\ref{subsection:alpha_alg} we rely on \cite{lapoutreMaintainance23ecc} and \cite{lapoutreMaintainance23eccII}, in order to provide an algorithm that uses $O(n^2)$ space and runs in $O(n^2\alpha(n,n) + m\alpha(m,n))$ time, for any sequence of $m$ edge and $n$ vertex insertions.
We note that the first algorithm is more readily amenable to implementations. The second algorithm, although it is asymptotically more time-efficient, it uses the sophisticated data structures of \cite{lapoutreMaintainance23eccII} (summarized and expanded in Section~\ref{section:improved_datastructures}), and it may be less efficient than the first algorithm in practice.  

In both algorithms we use the same solution for $(1)$: an optimal disjoint set union data structure $DSU_{3ecs}$ \cite{tarjanEfficiencyGoodNot1975}, that operates on $V(G)$, and uses as representatives the leaves of $\mathit{T}$ (which correspond bijectively to the maximal $3$-edge-connected subgraphs of $G$). $DSU_{3ecs}$ supports the operations $\mathit{find_{3ecs}}(x)$ and $\mathit{unite_{3ecs}}(x,y)$. $\mathit{find_{3ecs}}(x)$ returns a pointer to the maximal $3$-edge-connected subgraph of $G$ that contains $x$, and $\mathit{unite_{3ecs}}(x,y)$ unites the sets of vertices of the maximal $3$-edge-connected subgraphs of $G$ that contain $x$ and $y$. $DSU_{3ecs}$ uses $O(n)$ space, and it can perform any sequence of $m$ operations $\mathit{find_{3ecs}}$ and $\mathit{unite_{3ecs}}$ in $O(m\alpha(m,n))$ time in total \cite{tarjanEfficiencyGoodNot1975}. (Thus we get this expression in the time-bounds of both algorithms.) 

Furthermore, for $(5)$ we rely on data structures associated with the nodes of $\mathcal{T}$, that represent trees or cactuses. In particular, recall that (the graph corresponding to) every $1$-ecc node $X$ is a connected component of (the graph corresponding to) its parent, and the children of $X$ correspond to its $2$-edge-connected components. Thus, we associate with every $1$-ecc node a data structure that represents the tree of its $2$-edge-connected components. This data structure is used in order to find the children of $X$ that we have to merge (in Line \ref{alg:tree_path} of Algorithm~\ref{algorithm:tree_modification}), and the inter-connection edges between (the graphs that correspond to) those children (in Line \ref{alg:tree_edges} of Algorithm~\ref{algorithm:tree_modification}). Similarly, every $2$-ecc node $X$ has an associated data structure that represents the cactus of its $3$-edge-connected components. This data structure is used in order to find the children of $X$ that we have to merge (in Lines \ref{alg:cactus_path1} and \ref{alg:cactus_path2} of Algorithm~\ref{algorithm:tree_modification}), and the inter-connection edges between (the graphs that correspond to) those children (in Line \ref{alg:cactus_edges} of Algorithm~\ref{algorithm:merge3ecc}). More details about the operations supported by these data structures will be given in Section~\ref{subsection:log_alg}.

Finally, we note that the $O(n^2)$ expression in the time-bounds of both algorithms is a bottleneck in their total running time. In fact, Lines \ref{alg:re-insert} and \ref{alg:re-insert1} of Algorithm~\ref{algorithm:tree_modification} and Line \ref{alg:re-insert2} of Algorithm~\ref{algorithm:merge3ecc} may create a sequence of recursive calls, so that, even for a single edge insertion, we may have to perform $O(n^2)$ re-insertions of already inserted edges. However, since by Theorem~\ref{theorem_1} there are only $O(n)$ edge insertions that can affect the decomposition tree, and every re-insertion (performed internally by the algorithm) can only affect deeper levels than before, and there can be only $O(n)$ levels in the decomposition tree, we have that any sequence of edge insertions to $G$ can initiate at most $O(n^2)$ calls to procedure $\mathtt{insert}$ of Algorithm~\ref{algorithm:tree_modification}.

\subsection{An $O(n^2\log^2 n + m\alpha(m,n))$-time algorithm for the incremental maintenance of $\mathcal{T}$}
\label{subsection:log_alg}
An efficient solution for $(2)$ is to use the top-trees data structure of Alstrup et al. \cite{DBLP:journals/talg/AlstrupHLT05}. With this data structure we can perform nearest common ancestor queries in dynamic trees with $n$ nodes in amortized $O(\log n)$ time per query. Furthermore, with top-trees we can also answer level ancestor queries within the same time-bounds. Thus we can solve $(3)$ efficiently with a query for the $d+2$ level ancestor of $V$, where $V$ is the leaf of $\mathcal{T}$ that contains $v$, and $d$ is the level of $N$. (We may assume that every node of $\mathcal{T}$ has an attribute for its level on $\mathcal{T}$.)

The merging of nodes in $(4)$ can be performed by redirecting the parents of the children of the nodes with the least number of children to the node with the most number of children (breaking ties arbitrarily), and then discarding these nodes. (We may assume that every node of $\mathcal{T}$ has an attribute for its number of children.) In other words, suppose that we have to merge the nodes $X_1,\dots,X_k$, where $X_1$ has the greatest number of children among $X_1,\dots,X_k$. Then we redirect the parents of the children of $X_2,\dots,X_k$ to $X_1$, and then we discard $X_2,\dots,X_k$ (and all the information associated with them). Since the nodes of every level of $\mathcal{T}$ are $O(n)$, this procedure ensures that at most $O(n\log n)$ redirections can take place in total in every level. However, for every such redirection, a deletion and an insertion of an edge of $\mathcal{T}$ must take place, and this takes $O(\log n)$ amortized time using the top-trees. Thus we get an $O(n\log^2 n)$ time-bound for every level of $\mathcal{T}$, and since there are at most $O(n)$ levels in $\mathcal{T}$, we get the $O(n^2\log^2 n)$ time-bound in total.

Now, for $(5)$, we assume, as above, that every $1$-ecc node has an associated data structure that represents the tree of its $2$-edge-connected components, and every $2$-ecc node has an associated data structure that represents the cactus of its $3$-edge-connected components. We use the associated data structures for the trees of the $2$-edge-connected components in order to be able to perform efficiently the following. In Line~\ref{alg:tree_path} of Algorithm~\ref{algorithm:tree_modification}, we access the associated data structure $T$ of $N$, in order to find its children $X_1,\dots,X_k$ that we have to merge in Line~\ref{alg:merge2}. Thus we assume that the data structure for trees supports the operation $\mathtt{compressPath(T,x,y)}$, which, given a pointer $T$ to a tree, and pointers $x$ and $y$ to nodes of this tree, it finds the simple path $P$ that connects these nodes on the tree, returns pointers to the nodes of $P$ and the edges of $P$, and then merges all nodes of $P$ into a new node and also returns a pointer to this node. We assume that the nodes and the edges of $T$ have some information associated with them. Specifically, since every edge $e$ of $T$ corresponds essentially to a bridge of $N$, and this bridge is an edge $(x,y)$ of $G$, we let $e$ point to the edge $(x,y)$ of $G$. This information is needed for Lines \ref{alg:tree_edges} and \ref{alg:cactus_path1}. Furthermore, since every node $v$ of $T$ corresponds essentially to a child $V$ of $N$, we let $v$ point to $V$. This is used precisely in order to find the children $X_1,\dots,X_k$ of $N$ that we have to merge. Conversely, every child of $N$ must point to its corresponding node of $T$, in order to be able to find the endpoints of the path on $T$ that we have to find with the call to $\mathtt{compressPath}$. 
Finally, since we may have to merge some children of the root or of a $3$-ecc node in Line~\ref{alg:merge1} of Algorithm~\ref{algorithm:tree_modification}, we also have to link the associated (representations of the) trees of those children, with the addition of an edge that corresponds to the one that has been (re-)inserted to $G$. Specifically, suppose that an edge $(x,y)$ is (re-)inserted to $G$, and that the nearest common ancestor $N$ of $X$ and $Y$ is the root or a $3$-ecc node, where $X$ and $Y$ are the leaves of $\mathcal{T}$ that contain $x$ and $y$, respectively. Let $C_1$ and $C_2$ the the children of $N$ that contain $x$ and $y$, respectively. Then, in Line~\ref{alg:merge1}, we have to merge $C_1$ and $C_2$ into a new $1$-ecc node $C$. Now, the data structure associated with $C$ must represent a tree that is formed by linking the tree represented by the associated data structure to $C_1$ with the tree represented by the associated data structure to $C_2$, through the addition of an edge $e=(u,v)$ with endpoints corresponding to the children of $C_1$ and $C_2$ that contain $x$ and $y$, respectively. (These children of $C_1$ and $C_2$ can be found with a query for the $d+2$ level ancestors of $X$ and $Y$, respectively, where $d$ is the level of $N$.) Furthermore, since $e$ corresponds essentially to $(x,y)$, the associated information to $e$ must be a pointer to $(x,y)$. Thus, the data structure for trees must support the operation $\mathtt{joinTrees(T_1,T_2,(u,v))}$, which, given a pointer $T_1$ to a (representation of a) tree $\tilde{T}_1$, a pointer $T_2$ to a (representation of a) tree $\tilde{T}_2$, a pointer $u$ to a node $\tilde{u}$ of $\tilde{T}_1$, a pointer $v$ to a node $\tilde{v}$ of $\tilde{T}_2$, and some extra information to be associated with the edge $(\tilde{u},\tilde{v})$, links the trees $\tilde{T}_1$ and $\tilde{T}_2$ through the addition of the edge $(\tilde{u},\tilde{v})$.

The associated data structures for the cactuses of the $3$-edge-connected components of $2$-ecc nodes are used to perform the analogous operations that are performed by the data structures for trees. Specifically, in Lines \ref{alg:cactus_path1} and \ref{alg:cactus_path2} of Algorithm~\ref{algorithm:tree_modification}, we access the associated data structure for the $3$-edge-connected components of a node $X$ of $\mathcal{T}$, in order to find its children that we have to merge later on (with a call to procedure $\mathtt{merge3ecc}$). Thus we assume that the data structure for cactuses supports the operation $\mathtt{compressCyclePath(S,x,y)}$, which, given a pointer $S$ to a cactus, and pointers $x$ and $y$ to nodes of this cactus, it finds the cycle-path $Q$ on $S$ with endpoints these nodes, it returns pointers to the nodes of $Q$ and the edges of $S$ between the nodes of $Q$, and then merges all nodes of $Q$ into a new node and also returns a pointer to this node. We assume that the nodes and the edges of $S$ have some information associated with them. Specifically, since every edge $e$ of $S$ corresponds essentially to an edge of $X$, and this edge is an edge $(x,y)$ of $G$, we let $e$ point to $(x,y)$. This information is needed for Line \ref{alg:cactus_edges} of Algorithm~\ref{algorithm:merge3ecc}. Furthermore, since every node $v$ of $S$ corresponds essentially to a child $V$ of $X$, we let $v$ point to $V$. This is used precisely in order to find the children of $X$ that we have to merge. Conversely, every child of $X$ must point to its corresponding node of $S$, in order to be able to find the endpoints of the cycle-path on $S$ that we have to find with the call to $\mathtt{compressCyclePath}$. 
Finally, since we may have to merge the children $X_1,\dots,X_k$ of the $1$-ecc node $N$ in Line~\ref{alg:merge2} of Algorithm~\ref{algorithm:tree_modification}, we also have to link the associated (representations of the) cactuses of those children, with the addition of a cycle that corresponds to the one that has been formed by the (re-)insertion of $(x,y)$ to $G$. Specifically, when we merge $X_1,\dots,X_k$ into a new $2$-ecc node $X$, the associated cactus of $X$ must be that which is formed by joining the cactuses associated with $X_1,\dots,X_k$ with a cycle corresponding to $D_1,\dots,D_k$, where $D_i$, for $i\in\{1,\dots,k\}$, is the new $3$-ecc node that has been formed by merging $D_{(i,1)},\dots,D_{(i,t(i))}$ through the call $\mathtt{merge3ecc}$ in Line~\ref{alg:call_merge3}. Furthermore, the edges of the new cactus that correspond to the edges $(D_1,D_2),\dots,(D_{k-1},D_k),(D_k,D_1)$ of $X$, must correspond to the edges $(x_1,y_1),\dots,(x_{k-1},y_{k-1}),(y,x)$. This information is needed in Line~\ref{alg:cactus_edges} of Algorithm~\ref{algorithm:merge3ecc}. Thus, the data structure for cactuses must support the operation $\mathtt{joinCactuses(S_1,\dots,S_k,(d_1,d_2),\dots,(d_k,d_1))}$, which, given pointers $S_1,\dots,S_k$ to (representations of) cactuses $\tilde{S}_1,\dots,\tilde{S}_k$, pointers $d_1,\dots,d_k$ to nodes $\tilde{d}_1,\dots,\tilde{d}_k$ of $\tilde{S}_1,\dots,\tilde{S}_k$, respectively, and some extra information to be associated with the edges $(\tilde{d}_1,\tilde{d}_2),\dots,(\tilde{d}_k,\tilde{d}_1)$, links the cactuses $\tilde{S}_1,\dots,\tilde{S}_k$ through the addition of the cycle $(\tilde{d}_1,\tilde{d}_2),\dots,(\tilde{d}_k,\tilde{d}_1)$.

In Section~\ref{section:structures} we provide efficient implementations for the associated data structures for trees and cactuses. These implementations use size $O(n)$ for a collection of trees or cactuses with $n$ nodes, and can perform any sequence of operations in $O(n\log n)$ time in total. We use a data structure for trees in every $3k+1$ level of the tree, and a data structure for cactuses in every $3k+2$ level of the tree. Since the number of nodes of the trees (of the $2$-eccs) or of the cactuses (of the $3$-eccs) that correspond to the nodes of every level of $\mathcal{T}$ can be at most $n$, and there are at most $O(n)$ levels on $\mathcal{T}$, we thus have that the operations in the associated data structures can take time $O(n^2\log n)$ in total for any sequence of edge insertions to $G$.

Finally, Algorithm~\ref{algorithm:insert_vertex} shows how we can handle the insertion of a new vertex $v$ to $G$. We simply introduce three new nodes $C$, $C'$, and $C''$ (an $1$-ecc node, a $2$-ecc node, and a $3$-ecc node, respectively), that correspond to $v$, and we set $\mathit{parent}(C)\leftarrow \mathit{root}$, $\mathit{parent}(C')\leftarrow C$, and $\mathit{parent}(C'')\leftarrow C'$. Since the leaf of $\mathcal{T}$ that contains $v$ is $C''$, we also set the representative $\mathit{find_{3ecs}}(v)\leftarrow C''$. We also associate a (representation of a) trivial tree to $C$, and a (representation of a) trivial cactus to $C'$. The DSU data structure and the data structures in Section~\ref{section:structures} allow for new node insertions, and the same time-bounds hold (where $n$ is interpreted as the total number of vertices that will have been inserted to $G$ at the moment we estimate the time that it took to perform all the operations of the algorithm so far).

\begin{algorithm}[H]
\caption{\textsf{Update the decomposition tree $\mathcal{T}$ after inserting a new vertex $v$ to $G$}}
\label{algorithm:insert_vertex}
\LinesNumbered
\DontPrintSemicolon
\textbf{procedure} $\mathtt{insert}(v)$\;
\Begin{
let $C$, $C'$ and $C''$ be a new $1$-ecc, $2$-ecc and $3$-ecc node, respectively\;
$\mathit{parent}(C)\leftarrow $ root of $\mathcal{T}$\;
$\mathit{parent}(C')\leftarrow C$\;
$\mathit{parent}(C'')\leftarrow C'$\;
let the representative of $v$ in $\mathit{DSU_{3ecs}}$ be $C''$\;
initialize a new trivial tree $T$, and associate it with $C$\;
initialize a new trivial cactus $S$, and associate it with $C'$\;
}
\end{algorithm}

\subsection{An $O(n^2\alpha(n,n) + m\alpha(m,n))$-time algorithm for the incremental maintenance of $\mathcal{T}$}
\label{subsection:alpha_alg}
Now we will describe a more time-efficient algorithm to handle insertions to the graph. (This algorithm, however, uses $O(n^2)$ space.) First, we can handle the operations in $(5)$ exactly as we did in Section~\ref{subsection:log_alg}, although here we use the more sophisticated data structures for trees and cactuses that are described in Section~\ref{section:improved_datastructures}. The implementations given in Section~\ref{section:improved_datastructures} use size $O(n)$ for a collection of trees or cactuses with $n$ nodes, and can perform any sequence of operations in $O(n\alpha(n,n))$ time in total. Since we use a data structure for trees in every $3k+1$ level, and a data structure for cactuses in every $3k+2$ level, and there are at most $O(n)$ levels in $\mathcal{T}$, we thus we get the $O(n^2\alpha(n,n))$ expression in the total time-bound .

To perform the merging of nodes in $(4)$ we use an optimal disjoint set union data structure $DSU_i$ \cite{tarjanEfficiencyGoodNot1975}, on every level $i$ of $\mathcal{T}$, that operates on the node set of that level and uses as representatives nodes on that level that we consider to be ``active''. The function of an active node is to represent all the nodes that have been merged with it; initially, all nodes of $\mathcal{T}$ are active, and, throughout, the parent pointer on $\mathcal{T}$ exists only for the active nodes. $DSU_i$ supports the operations $\mathit{find_i}(x)$ and $\mathit{unite_i}(x,y)$. $\mathit{find_i}(x)$ returns a pointer to the active node that has been merged with $x$, and $\mathit{unite_i}(x,y)$ merges the sets of nodes that are represented by the active nodes $x$ and $y$, and sets as representative one of $x$ or $y$ (while deactivating the other). Thus, in order to find the parent of an active node $X$ of $\mathcal{T}$ on level $i$, we use $\mathit{find_{i-1}}(\mathit{parent}(X))$. $DSU_i$ uses $O(n)$ space, and it can perform any sequence of $m$ operations $\mathit{find_i}$ and $\mathit{unite_i}$ in $O(m\alpha(m,n))$ time in total \cite{tarjanEfficiencyGoodNot1975}. 

To perform efficiently the operations $(2)$ and $(3)$, we basically maintain, for every inter-connection edge $e=(x,y)$ of $G$ (that is, for every edge $e$ whose endpoints, at the time of its insertion, lie in different maximal $3$-edge-connected subgraphs of $G$), two paths of $\mathcal{T}$ whose nodes contain an endpoint of $e$, that start from the leaves of $\mathcal{T}$ that contain $x$ and $y$, and go up until at least their nearest common ancestor. In order to achieve this, we augment the information associated with $\mathcal{T}$ as follows. For every node $Z$ of $\mathcal{T}$ we associate a linked list $L_Z$, where every element of $L_Z$ corresponds to an endpoint of an inter-connection edge that lies in an ancestor of $Z$. Thus, the existence of those lists means that we may need as much as $O(n^2)$ space (since the number of inter-connection edges can be at most $O(n)$, and the number of levels of $\mathcal{T}$ is $O(n)$). Every element of $L_Z$ is a pointer that points to an element of $L_C$, for some child $C$ of $Z$. Also, every element of $L_Z$ has a pointer to $Z$ (so that we can find in constant time the node in whose associated list this pointer lies). Thus we may say that a pointer of $L_Z$ lies in $Z$. Finally, for every inter-connection edge $e=(x,y)$, we maintain two pointers $e_x$ and $e_y$, that point to the elements of a list $L_Z$, for some node $Z$ that contains both $x$ and $y$.

Now we work as follows. Let $e=(x,y)$ be an edge that is inserted to $G$ for the first time. (I.e., this is the first time that we call $\mathtt{insert}(e)$ of Algorithm~\ref{algorithm:tree_modification}. Notice, of course, that an edge $e'=(x,y)$ may have previously been inserted to $G$, since we allow for multiple edges. But now we consider $e$ as a new insertion. Furthermore, we can easily determine whether $\mathtt{insert}(e)$ is the first time we call $\mathtt{insert}$ on $e$, by using a flag every time we call $\mathtt{insert}$ from Line~\ref{alg:re-insert1} of Algorithm~\ref{algorithm:tree_modification} or Line~\ref{alg:re-insert2} of Algorithm~\ref{alg:merge3}, in order to signify that this is a re-insertion.) First we find the leaves $X$ and $Y$ of $\mathcal{T}$ that contain $x$ and $y$, respectively, by calling $\mathit{find_{3ecs}}(x)$ and $\mathit{find_{3ecs}}(y)$, respectively. Suppose that $X\neq Y$ (for otherwise there is nothing to do). Then we crawl up the tree, starting from $X$ and $Y$, by following the parents alternatingly, marking all the nodes that we meet, and keeping them in a stack in order to unmark them later. When we meet a node $N$ that is already marked, we have that this is the nearest common ancestor of $X$ and $Y$ on $\mathcal{T}$, and we unmark all the nodes that we traversed. Now we start again from $X$, and we follow the parents until we reach $N$. For every node $Z$ that we traverse, including $N$, we append a new pointer $Z_{(e,x)}$ to $L_Z$. If $Z=X$, then we let $Z_{(e,x)}$ point to $\mathit{null}$. Otherwise, let $C$ be the child of $Z$ that we traversed before reaching $Z$; then we let $Z_{(e,x)}$ point to $C_{(e,x)}$. We repeat the same process starting from $Y$. Finally, we let $e_x$ point to $N_{(e,x)}$ and we let $e_y$ point to $N_{(e,y)}$. Observe that this process may need as many as $O(n)$ calls to the parent function, and $O(n)$ pointer manipulation operations. Thus, since the total number of first-time edge insertion in any sequence of edge insertions is $O(n)$, and every call to the parent function involves a call to $\mathit{find_i}$, for some $i$, the total cost of this process in the running time of the algorithm is $O(n^2\alpha(n,n))$.

Now suppose that we re-insert an edge $e=(x,y)$. Then we can find the nearest common ancestor of the leaves $X$ and $Y$ of $\mathcal{T}$ that contain $x$ and $y$, respectively, by following the path of pointers starting from $e_x$ and $e_y$, and moving on to deeper levels in parallel, until we reach two pointers that lie on two nodes $C$ and $D$ that have not got merged together, in which case the common parent of $C$ and $D$ is the nearest common ancestor of $X$ and $Y$ on $\mathcal{T}$. Then, since $C$ and $D$ will have to get merged, we let $e_x$ point to $C_{(e,x)}$ and $e_y$ point to $D_{(e,y)}$, in order to avoid traversing the same path of pointers again. Since the total number of edges that can affect the decomposition tree, in any sequence of edge insertions, is $O(n)$, and every time an edge is re-inserted it is moved into deeper levels, observe that the total cost of this process in the running time of the algorithm is $O(n^2\alpha(n,n))$.  
Now suppose that we want to find the grandchild of $N$ that contains $x$. (We can follow the analogous procedure in order to find the grandchild of $N$ that contains $y$.) Then we can use the pointer $N_{(e,x)}$ stored in $L_N$. This points to a pointer $C_{(e,x)}$ in $L_C$, for a node $C$ that is merged with a child $C'$ of $N$. And then $C_{(e,x)}$ points to $D_{(e,x)}$ in $L_D$, for a node $D$ that is merged with a child $D'$ of $C'$. Thus we can find $D$ in constant time, and then we can find $D'$ with a call $\mathit{find_i}(D)$, where $i$ is the level of $D$. Again, since there can be at most $O(n^2)$ re-insertions of edges, the total cost of this process in the running time of the algorithm is $O(n^2\alpha(n,n))$.  

All the data structures with which $\mathcal{T}$ is equipped (the DSU data structures, and the associated data structures for trees and cactuses), allow for new node insertions, and the same time-bounds hold. Thus, the insertion of new vertices to $G$ is performed as shown in Algorithm~\ref{algorithm:insert_vertex}.

\section{Data structures for trees and cactuses}
\label{section:structures}

To maintain the decomposition tree of the maximal $3$-edge-connected subgraphs, we will need the following two data structures, which we may use as black boxes in the algorithm. They operate on collections of trees and cactuses, respectively. We assume that there is some information associated with every edge of every tree or cactus
\footnote{For our algorithm, this information is a pointer to an edge of the original graph (whose maximal $3$-edge-connected subgraphs we want to maintain).}.

The first is a data structure that works on a collection of trees and supports the following operations. 

\begin{itemize}
\item{$\mathtt{compressPath(T,x,y)}$. Given a tree $T$ and nodes $x$ and $y$ on $T$, this operation finds the simple path $P$ that connects $x$ and $y$ on $T$ and compresses it into a new node $z$ (converting $T$ into a smaller tree). Furthermore, it returns pointers to all nodes and (information associated with the) edges on $P$, as well as to the new node $z$. Finally, if $(u,v)$ was an edge of $T$, prior to the call of this operation, with $u\notin P$ and $v\in P$, then $(u,z)$ (in the new tree) must contain the same information that was previously associated with $(u,v)$.}

\item{$\mathtt{joinTrees(T_1,T_2,(x,y))}$. Given two trees $T_1, T_2$ and nodes $x\in T_1$ and $y\in T_2$, this operation links the trees $T_1$ and $T_2$ by introducing a new edge $(x,y)$.}
\end{itemize}

The second data structure works on a collection of cactuses and supports the following operations. 

\begin{itemize}
\item{$\mathtt{compressCyclePath(S,x,y)}$. Given a cactus $S$ and nodes $x$ and $y$ on $S$, this operation finds the ``cycle-path'' $P$ that connects $x$ and $y$ on $S$. It compresses all nodes on $P$ into a new node $z$ by properly squeezing the involved cycles. Furthermore, it returns pointers to all nodes on $P$, to all (information associated with the) edges that connect pairs of those nodes (if there are any), as well as to the new node $z$. Finally, if $(u,v)$ was an edge of $S$, prior to the call of this operation, with $u\notin P$ and $v\in P$, then $(u,z)$ (in the new cactus) must contain the same information that was previously associated with $(u,v)$.}

\item{$\mathtt{joinCactuses(S_1,\dots,S_k,(x_1,x_2),\dots,(x_k,x_1))}$. Given a collection of cactuses $S_1,\dots,S_k$ and nodes $x_i\in S_i$, $i\in\{1,\dots,k\}$, this operation links the cactuses $S_1,\dots,S_k$ into a larger cactus by introducing the edges $(x_1,x_2),\dots,(x_k,x_1)$ (which induce a new cycle).}
\end{itemize}

In Sections~\ref{subsection:treedatastructure} and ~\ref{subsection:cactusdatastructure}, we provide implementations for data structures maintaining trees and cactuses where we can perform any sequence of operations in $O(n\log n)$ time, assuming that the total number of nodes is at most $n$. To achieve this time-bound, we will draw ideas from \cite{galilMaintaining3EdgeConnectedComponents1993}, \cite{lapoutreMaintainance23ecc}, and \cite{westbrookMaintainingBridgeConnectedBiconnected1992}. 
%This implies that we have to add the quantity $O(n^2\log n)$ in the time bound for the incremental maintenance of the decomposition tree. 
Using these datastructures, we spend an additional $O(n^2\log n)$ time for the incremental maintenance of the decomposition tree.
To see this, recall that we use representations for the trees of the $2$-edge-connected components at levels $3k+1$ of the decomposition tree (where $0\leq k<n$), and representations for the cactuses of the $3$-edge-connected components at levels $3k+2$ (where $0\leq k<n$). Then, every operation $\mathtt{compressPath}$, $\mathtt{joinTrees}$, $\mathtt{compressCyclePath}$ and $\mathtt{joinCactuses}$ is performed on structures on the same level. Since there are $O(n)$ levels on which these operations may be performed, and the total number of nodes at any level is $O(n)$, we get the $O(n^2\log n)$ time bound.

In Section~\ref{section:improved_datastructures}, we use the more sophisticated implementations for those data structures provided by \cite{lapoutreMaintainance23eccII}, in order to get a $O(n\alpha(n,n))$-time bound for performing any sequence of operations on trees or cactuses (on the same level), which gives $O(n^2\alpha(n,n))$ in the total time bound of the algorithm for maintaining the decomposition tree.

\subsection{An implementation for trees}\label{subsection:treedatastructure}

\textbf{(a) Representation}

An idea is to represent the trees as rooted trees. For every node $x$ we let $p(x)$ denote its parent (if $x$ is the root, we leave $p(x)$ undefined). Also, every node $x$ has a pointer $x.\mathit{edge}$ to the edge $(x,p(x))$ that connects $x$ with its parent. We use a DSU data structure to perform mergings of nodes. The DSU data structure supports the operations $\mathit{find}(x)$ and $\mathit{unite}(x,y)$. Thus, in order to access the parent of node $x$ we actually use $\mathit{find}(p(x))$ (although we will not write this explicitly in what follows).\\

\noindent 
\textbf{(b) Operations}
 
We execute $\mathit{compressPath}(T,x,y)$ as follows. We start from $x$ and $y$, and we alternatingly climb up the tree following the parents, marking all nodes that we traverse, until we meet an already marked node $z$. (This ensures that we traverse at most $2d-1$ nodes, where $d$ is the number of nodes of the simple path connecting $x$ and $y$.) Then there are two possibilities: either $(1)$ $z=x$ or $z=y$, or $(2)$ $z$ is neither $x$ nor $y$. In the first case, assuming w.l.o.g. that $z=y$, the path that we have to compress consists of the nodes $x,p(x),p(p(x)),\dots,y$. In the second case, the path consists of two parts: $x,p(x),\dots,z$ and $y,p(y),\dots,z$. In case $(1)$ we work as follows. We start from $x$; we retrieve the edge $(x,p(x))$ (using the pointer stored in $x$), we call $\mathit{unite}(x,p(x))$, setting as representative of the new set $p(x)$, and we repeat, until we reach $z=y$. In case $(2)$ we work as in case $(1)$ twice, starting from both $x$ and $y$, until we reach $z$ each time.

The $\mathit{joinTrees}(T_1,T_2,(x,y))$ operation works as follows. First, we determine which of $T_1,T_2$ is smaller (w.r.t. its number of nodes). This can be done easily if we maintain for every tree an attribute $\mathit{size}$ indicating the number of its nodes. Assume w.l.o.g. that the smallest tree is $T_1$ (or that its size is equal to that of $T_2$). Then we reroot $T_1$ at $x$, and we set $p(x)\leftarrow y$, $x.\mathit{edge}\leftarrow (x,y)$, and $T_2.\mathit{size}\leftarrow T_2.\mathit{size}+T_1.\mathit{size}$. The rerooting of $T_1$ at $x$ is performed as follows. First we find the path $P$ connecting $x$ with the root $r$ of $T_1$, by following the parents. Then we process the nodes of $P$ in reverse order, starting from $r$ until we reach $x$ (which we do not process), and we simply hand over the information of every child to its parent. To be precise, let $v$ be a node that we process, and let $u$ be the node on $P$ that has $p(u)=v$. Then we set $p(v)\leftarrow u$ and $v.\mathit{edge}\leftarrow u.\mathit{edge}$.\\

\noindent 
\textbf{(c) Analysis}

For convenience in the time analysis, we may use a simple implementation for the DSU data structure which performs all unions in $O(n\log n)$ time and every $\mathit{find}$ in $O(1)$ worst-case time. Now, the total time it takes to perform any sequence of $\mathit{compressPath}$ operations is $O(n\log n)$. This is because, every time $\mathit{compressPath}(T,x,y)$ is performed, it takes $O(|P|)$ time to find the simple path $P$ that connects $x$ and $y$, using $O(|P|)$ calls to the $\mathit{find}$ operation. Then we merge the nodes of $P$ (thereby deleting them essentially), so they will not be traversed again. Thus, the total time to find all paths, in all $\mathit{compressPath}$ calls, is $O(n)$. Therefore, the time bound for all $\mathit{compressPath}$ operations is dominated by that for performing the mergings (using the $\mathit{union}$ operation), and thus it is $O(n\log n)$.  

When we perform a $\mathit{joinTrees}$ operation on two trees, we may have to reroot the smallest tree $T$, which means that we may have to traverse all of its nodes. This takes $O(|T|)$ time and uses $O(|T|)$ calls to $\mathit{find}$. However, a node can only be accessed during a rerooting (that is, it may lie on the smallest tree) at most $O(\log n)$ times (because the trees are subsequently joined to produce a bigger tree whose size is at least twice that of the smallest). Thus, we need at most $O(n\log n)$ steps to perform all rerootings, using at most $O(n\log n)$ calls to the $\mathit{find}$ operation. Thus, the total time, for any sequence of $\mathit{compressPath}$ and $\mathit{joinTrees}$ operations, is at most $O(n\log n)$.

\subsection{An implementation for cactuses}\label{subsection:cactusdatastructure}

\textbf{(a) Representation}

In order to efficiently determine the cycle-path connecting two nodes of a cactus, we represent the cactus as a rooted tree with ``real nodes'' and ``cycle nodes''. Every real node corresponds to a cactus node, and every cycle node corresponds to a cycle. Thus, for every cactus $S$, there is a \emph{cactus tree} $T$ on the set of real and cycle nodes corresponding to $S$, which is formed by adding edges corresponding to the incidence relation of the real nodes to the cycle nodes. Specifically, there is an edge $(x, C)$ in $T$, where $x$ is a real node and $C$ is a cycle node, if and only if the cactus node corresponding to $x$ lies on the cactus cycle corresponding to $C$. We root this tree on an arbitrary real node. Thus, the parent of a real node (different from the root) is a cycle node, and the parent of a cycle node is a real node. The parent of a (real or cycle) node $v$ is denoted as $p(v)$. We use a DSU data structure to perform mergings of real nodes. This data structure supports $\mathit{find}(x)$ and $\mathit{unite}(x,y)$,  Thus, in order to access the parent of a cycle node $C$, we use $\mathit{find}(p(C))$.

In order to retrieve the information associated with the cactus edges and to efficiently perform the ``cycle-squeezing'' operation, we use doubly-linked circular lists to represent the cycles of the cactuses. Specifically, for every cycle $C$ of a cactus, there is a doubly-linked circular list $L$ consisting of as many nodes as there are in $C$. The nodes of $L$ correspond to the nodes of $C$ and are ordered accordingly. Every node $x$ of $L$ has pointers $x.\mathit{left}$ and $x.\mathit{right}$ to its neighboring nodes, and pointers $x.\mathit{leftEdge}$ and $x.\mathit{rightEdge}$ to the respective cactus edges. 

Finally, we have the following correspondence between the nodes of the cactus trees and the nodes of the circular lists. Let $x$ be a real node of a cactus tree with parent $C$. Then $C$ corresponds to a cactus cycle and therefore to a circular list $L$. Then we have a pointer for $x$ to its corresponding node on $L$, and vice versa. Furthermore, $p(C)$ also corresponds to a node on $L$, but we do not keep a pointer from $p(C)$ to this node (because it cannot be uniquely determined from $p(C)$, as this can be the parent of many cycle nodes). Instead, we have a pointer from $C$ to the node of $L$ corresponding to $p(C)$. For notational convenience, we may use the same name to denote a real node and its corresponding circular list node. Also, for a cycle node $C$, we may denote the circular list node pointed to by $C$ as $p(C)$.\\

\noindent 
\textbf{(b) Operations}\\

\noindent 
$\mathtt{compressCyclePath}$

To perform $\mathtt{compressCyclePath(S,x,y)}$ we work as follows. First, we determine the cycle-path $P$ connecting $x$ and $y$. To do this, we work on the cactus tree associated with $S$; we start from $x$ and $y$, and we alternatingly climb up the tree following the parents, marking all the nodes that we traverse, until we meet an already marked node $z$. Then there are two possibilities: either $(1)$ $z=x$ or $z=y$, or $(2)$ $z$ is neither $x$ nor $y$. In the first case, assuming w.l.o.g. that $z=x$, $P$ is given by the real nodes on the ascended path starting from $x$. In the second case, $P$ is given by two parts: by the real nodes on the path starting from $x$ and ending in $z$, and by the real nodes on the path starting from $y$ and ending in $z$. (Note that $z$ can be either a cycle node or a real node.) In case $(1)$, assuming w.l.o.g. that $z=x$, let $u_1,C_1,\dots,C_k,u_{k+1}$, where $u_1=x$ and $u_{k+1}=y$, be the cycle-path connecting $x$ and $y$, including the cycles nodes that are connected with every two consecutive real nodes. Then we only have to perform the operations $\mathtt{squeezeCycle}(u_i,u_{i+1},C_i)$, for every $i\in\{1,\dots,k\}$, successively. In case $(2)$, we work as in case $(1)$ for every one of the two parts that form $P$. Furthermore, in the case that $z$ is a cycle node, we also have to perform $\mathtt{squeezeCycle}(u,v,z)$, where $u$ and $v$ are the second-last nodes of the two parts that form $P$. In particular, let $u_1,C_1,\dots,u_k,z$ and $v_1,C'_1,\dots,v_l,z$, where $u_1=x$ and $v_1=y$, be the two paths that we ascended, starting from $x$ and $y$, respectively. Then, in addition to all operations $\mathtt{squeezeCycle}(u_i,u_{i+1},C_i)$, for every $i\in\{1,\dots,k-1\}$, and $\mathtt{squeezeCycle}(v_i,v_{i+1},C'_i)$, for every $i\in\{1,\dots,l-1\}$, we also have to perform $\mathtt{squeezeCycle}(u_k,v_l,z)$.

Now let us describe in detail the operation $\mathtt{squeezeCycle}(u,v,C)$, where $u$ and $v$ are real nodes connected with a cycle node $C$. (Intuitively, we want this operation to ``squeeze'' the cycle $C$ by merging its two nodes $u$ and $v$.) We distinguish two cases, depending on whether $(i)$ $u$ and $v$ are related as ancestor and descendant, or $(ii)$ $u$ and $v$ are siblings on the rooted cactus tree (the later occurs in case $(2)$ above, when $z$ is a cycle node).\\

\noindent
\textit{$(i)$ $u$ and $v$ are related as ancestor and descendant}

We may assume w.l.o.g. that $v$ is an ancestor of $u$. Thus we have $p(u)=C$ and $p(C)=v$. We distinguish three cases, depending on whether $u$ has no siblings, or $u$ is the leftmost or rightmost child of $C$ (to be defined shortly), or neither of the previous two cases holds.
First, if $u$ has no siblings, then we simply return the two edges that connect $u$ and $v$, we discard the cycle $C$, and we merge $u$ and $v$, setting as representative of the union $v$. To return the two edges we use the pointer from $u$ to the node in the circular list corresponding to $C$ (let us call it $u$ again). Then the two edges are precisely $u.\mathit{leftEdge}$ and $u.\mathit{rightEdge}$. 
Now let us consider the case that $u$ is the leftmost or rightmost child of $C$. This is determined by checking whether $u.\mathit{left}$ or $u.\mathit{right}$, respectively, is the node pointed to by $C$ (which essentially corresponds to $v$). So let us assume that $u.\mathit{left}=v$ (the case that $u.\mathit{right}=v$ is treated analogously). The first thing to do is to return the edge that connects $u$ and $v$. This is simply $u.\mathit{leftEdge}$. Then we discard the node $u$ from the circular list and we update it accordingly. That is, we set $u.\mathit{right}.\mathit{left}\leftarrow v$, $v.\mathit{right}\leftarrow u.\mathit{right}$, and $v.\mathit{rightEdge}\leftarrow u.\mathit{rightEdge}$. Finally, we merge $u$ and $v$ (the real nodes on the cactus tree), setting as representative of the union $v$.

Now we consider the third case. That is, we have that $u.\mathit{left}\neq v$ and $u.\mathit{right}\neq v$. (Notice that, in this case, there is no cactus edge connecting $u$ and $v$.) Now the first thing to do is to determine the smallest segment of the circular list $L$ corresponding to $C$ that contains $u$ and $v$. To achieve this efficiently, we start from $u$ and $v$, and we alternatingly follow the $\mathit{left}$ pointers, marking all the nodes that we traverse, until we meet an already marked node $z$. Then we have that either $z=v$ or $z=u$. Let us assume that $z=v$ (the other case is treated analogously). Now we update the cactus tree as follows. Let $P$ be the internal path on $L$ connecting $u$ and $v$ following the $\mathit{left}$ pointers starting from $u$ (that is, $P=u.\mathit{left},u.\mathit{left}.\mathit{left},\dots,v.\mathit{right}$). We introduce a new cycle node $C'$, and we let $C'$ be the parent of all the real nodes that correspond to nodes of $P$. The parent of $C'$ is set to be $v$, and we introduce a new circular list node $\tilde{v}$, corresponding to $v$ on the cycle $C'$. Thus, there is a pointer from $C'$ to $\tilde{v}$. Now we detach $P$ from $L$ and we attach it appropriately to the circular list corresponding to $C'$. This is done by setting $v.\mathit{right}.\mathit{left}\leftarrow \tilde{v}$, $\tilde{v}.\mathit{right}\leftarrow v.\mathit{right}$, $u.\mathit{left}.\mathit{right}\leftarrow \tilde{v}$, and $\tilde{v}.\mathit{left}\leftarrow u.\mathit{left}$. To maintain the cactus edges, we also set $\tilde{v}.\mathit{rightEdge}\leftarrow v.\mathit{rightEdge}$ and $\tilde{v}.\mathit{leftEdge}\leftarrow u.\mathit{leftEdge}$. In the list $L$ we simply discard $u$ and we update $L$ accordingly. That is, we set $u.\mathit{right}.\mathit{left}\leftarrow v$ and $v.\mathit{right}\leftarrow u.\mathit{right}$. Again, to maintain the cactus edge, we also set $v.\mathit{rightEdge}\leftarrow u.\mathit{rightEdge}$. Finally, we merge $u$ and $v$ (the real nodes on the cactus tree), setting as representative of the union $v$.
\\

\noindent
\textit{$(ii)$ $u$ and $v$ are related as siblings}

Let us assume first that (the circular list nodes corresponding to) $u$ and $v$ are neighbors. This means that either $u.\mathit{left}=v$ or $u.\mathit{right}=v$. Suppose that $u.\mathit{left}=v$ (the other case is treated similarly). First we return $u.\mathit{leftEdge}$: the edge that connects $u$ and $v$. Then we discard $u$ from the circular list and we update it accordingly. That is, we set $v.\mathit{right}\leftarrow u.\mathit{right}$, $u.\mathit{right}.\mathit{left}\leftarrow v$, and $v.\mathit{rightEdge}\leftarrow u.\mathit{rightEdge}$. Finally, we merge $u$ and $v$ (on the cactus tree), setting as representative of the union $v$.

Now suppose that $u$ and $v$ are not neighbors. This means that $u.\mathit{left}\neq v$ and $u.\mathit{right}\neq v$. (Notice that, in this case, there is no cactus edge connecting $u$ and $v$.) Let $C$ be the common parent of $u$ and $v$, and let $w=p(C)$. Let also $L$ be the circular list corresponding to $C$, and denote as $w$ the node of $L$ pointed to by $C$. Now the first thing to do is to determine the smallest part of $L$ that contains $u$ and $v$. To achieve this efficiently, we start from $u$ and $v$, and we alternatingly follow the $\mathit{left}$ pointers, marking all the nodes that we traverse, until we meet an already marked node $z$. Then we have that either $z=v$ or $z=u$. Let us assume that $z=v$ (the other case is treated analogously). Let $P$ be the internal path on $L$ connecting $u$ and $v$ following the $\mathit{left}$ pointers starting from $u$ (that is, $P=u.\mathit{left},u.\mathit{left}.\mathit{left},\dots,v.\mathit{right}$). Here we have that either $w\notin P$ or $w\in P$. If $w\notin P$, then we work precisely as in the second paragraph of case $(ii)$ above. So let us assume that $w\in P$. Then we introduce a new cycle node $C'$, we let $C'$ be the parent of all the real nodes that correspond to nodes of $P\setminus{w}$, and we let $C'$ point to $w$. Then we introduce a new circular list node $\tilde{v}$ (corresponding to the node that will be formed by merging $u$ and $v$), we let $C$ point to $\tilde{v}$, and we attach the segment $L\setminus(P\cup\{u,v\})$ of $L$ to $\tilde{v}$. Again, this is done precisely as in the second paragraph of case $(ii)$ above. Furthermore, we discard the circular list node $u$, and we properly update the list corresponding to $C'$. That is, we set $v.\mathit{right}\leftarrow u.\mathit{right}$, $u.\mathit{right}.\mathit{left}\leftarrow v$, and $v.\mathit{rightEdge}\leftarrow u.\mathit{rightEdge}$. Finally, we let the parent of $C$ be $v$, and we merge $u$ and $v$ (the real nodes on the cactus tree), setting as representative of the union $v$.
\\

\noindent 
$\mathtt{joinCactuses}$

To perform $\mathtt{joinCactuses(S_1,\dots,S_k,(x_1,x_2),\dots,(x_k,x_1))}$, we first have to find the largest (w.r.t. its number of nodes) cactus among $S_1,\dots,S_k$. This can be done easily if we keep for every cactus an attribute $\mathit{size}$, signifying its number of nodes. Now let us assume, w.l.o.g., that one of the cactuses with the greatest size is $S_k$. Then we have to do two things: to reroot the cactuses $S_1,\dots,S_{k-1}$ on their nodes that are to be connected on a new cycle, and then to form the cycle $(x_1,x_2),\dots,(x_k,x_1)$. 

Let us describe how to perform a rerooting of a cactus $S$ at a node $x\in S$. We assume that $x$ is not the root of the cactus tree of $S$ (for otherwise there is nothing to do). First we find the path on the cactus tree that connects $x$ with the root by following the parents. This has the form $u_1,C_1,\dots,C_{t-1},u_t$, where $u_1,\dots,u_t$ are real nodes with $u_1=x$, and $C_1,\dots,C_{t-1}$ are cycle nodes. Then we process this path in reverse order, and for every triple $(u,C,v)$, where $u,v$ are real nodes and $C$ is a cycle node, we do the following. (Due to reverse processing, notice that we have $p(v)=C$ and $p(C)=u$.) Let $L$ be the circular list corresponding to $C$. Then we simply change the pointer of $C$, so that it points to the node of $L$ corresponding to $v$, and we reverse the parent relation between the nodes of $(u,C,v)$. (Thus, we set $p(u)\leftarrow C$ and $p(C)\leftarrow v$.) Finally, we update the status of $x$ so that it is recognized as the root.

Thus we may assume that the cactus tree of every $S_i$, $i\in\{1,\dots,k-1\}$, is rooted at $x_i$. Now we introduce a new cycle node $C$, and we set the parent of every $x_i$, $i\in\{1,\dots,k-1\}$, to be $C$, and the parent of $C$ to be $x_k$. Then we introduce new circular list nodes $\tilde{x}_1,\dots,\tilde{x}_k$, pointers between $x_i$ and $\tilde{x}_i$ (in both directions), for every $i\in\{1,\dots,k-1\}$, and we let $C$ point to $\tilde{x}_k$. All nodes $\tilde{x}_i$ are linked in a circular structure according to their ordering on the new cycle. That is, we set $\tilde{x}_i.\mathit{left}\leftarrow \tilde{x}_{i-1}$, for $i\in\{2,\dots,k\}$, $\tilde{x}_1.\mathit{left}\leftarrow \tilde{x}_k$, $\tilde{x}_i.\mathit{right}\leftarrow \tilde{x}_{i+1}$, for $i\in\{1,\dots,k-1\}$, and $\tilde{x}_k.\mathit{right}\leftarrow \tilde{x}_1$. Furthermore, we fix the pointers to the newly introduced edges. That is, we set $\tilde{x}_i.\mathit{leftEdge}\leftarrow (x_{i-1},x_i)$, for $i\in\{2,\dots,k\}$, $\tilde{x}_1.\mathit{leftEdge}\leftarrow (x_k,x_1)$, $\tilde{x}_i.\mathit{rightEdge}\leftarrow (x_i,x_{i+1})$, for $i\in\{1,\dots,k-1\}$, and $\tilde{x}_k.\mathit{rightEdge}\leftarrow (x_k,x_1)$. Finally, we set $S_k.\mathit{size}\leftarrow S_1.\mathit{size}+\dots+S_{k-1}.\mathit{size}$, and the description of $\mathtt{joinCactuses(S_1,\dots,S_k,(x_1,x_2),\dots,(x_k,x_1))}$ is complete.
\\

\noindent 
\textbf{(c) Analysis}

First, if we use a simple implementation for the DSU data structure which performs all unions in $O(n\log n)$ time and every $\mathit{find}$ in $O(1)$ worst-case time, we can argue as in the analysis for the data structure for trees, that the total time it takes to find and merge all cycle-path during any sequence of $\mathtt{compressCyclePath}$ operations is $O(n\log n)$. Thus we only have to consider the total time it takes to perform all $\mathtt{squeezeCycle}$ operations. Recall that $\mathtt{squeezeCycle}(u,v,C)$ has to find the smallest segment $P$ of the circular list corresponding to $C$ that contains $u$ and $v$ (all other operations of $\mathtt{squeezeCycle}(u,v,C)$ take $O(1)$ time in total). This is performed in $O(|P|)$ time, and then $P$ forms a new circular list on its own (plus at most one more node). We say that the two new cycles into which $C$ was squeezed are \textit{formed by} $C$. (Thus, we may observe that after any repeated application of $\mathtt{squeezeCycle}$ operations on $C$ and on cycles formed by $C$, we have that all cycles formed by $C$ form a cactus.) We overload our terminology by saying that a node of a cycle formed by $C$ is also a node of $C$. Now we may argue as follows. Fix a cycle $C$ of size $k$ and a node $x\in C$. Then we observe that in any sequence of $\mathtt{squeezeCycle}$ operations on $C$ or on cycles formed by $C$, $x$ can be part of the smallest segment (on a circular list) explored by $\mathtt{squeezeCycle}$ at most $O(\log k)$ times. Thus, the total time to perform any sequence of $\mathtt{squeezeCycle}$ operations on $C$ or on cycles formed by $C$ is $O(k\log k)$. Now observe that the cycles are introduced into the collection of cactuses by the operation $\mathit{joinCactuses}$; and that once a cycle has been introduced, it can only get squeezed to form smaller cactuses. Thus, let $k_1,\dots,k_t$ be the sizes of all cycles that have been introduced in the data structure by the operation $\mathit{joinCactuses}$. Then we have that the total time to perform all $\mathtt{squeezeCycle}$ operations is $O(k_1\log k_1+\dots+k_t\log k_t) = O(k_1\log n+\dots+k_t\log n) = O((k_1+\dots+k_t)\log n) = O(n\log n)$, since $k_i=O(n)$, for every $i\in\{1,\dots,t\}$, and $k_1+\dots+k_t=O(n)$. 

When we perform a $\mathit{joinCactuses}$ operation on some cactuses, we may have to reroot the smallest ones, which means that we may have to access all of their nodes. However, we can argue as in the analysis of the data structure for trees, in order to show that the total time-bound for all rerootings is $O(n\log n)$. All the other operations performed in $\mathit{joinCactuses}$ take time analogous to the number of the cactuses involved. Since the total size of all cycles that we can introduce (while maintaining a collection of cactuses on $n$ nodes) is $O(n)$, this shows that the total time for any sequence of $\mathit{joinCactuses}$ operations is $O(n\log n)$.

% \clearpage

\section{Improved data structures for trees and cactuses}
\label{section:improved_datastructures}

Here we provide data structures and algorithms for the operations $\mathtt{compressPath}$, $\mathtt{joinTrees}$, $\mathtt{compressCyclePath}$ and $\mathtt{joinCactuses}$, on collections of trees and cactuses with at most $n$ nodes, with time bounds better than $O(n\log n)$. To be precise, any sequence of $m$ operations on collections of trees or cactuses with at most $n$ nodes can be performed in $O((n+m)\alpha(m,n))$ time in total. The analysis for this time bound is essentially contained in \cite{lapoutreMaintainance23ecc} and \cite{lapoutreMaintainance23eccII}; our own additions in the algorithms of \cite{lapoutreMaintainance23ecc} and \cite{lapoutreMaintainance23eccII} involve only a worst-case $O(1)$ calls to some DSU operations and a worst-case $O(1)$ pointer manipulations, for every operation performed. Also, the arguments that establish correctness are sufficiently contained in the description of the algorithms. 

Now, to achieve this time bound, we will basically use the data structures provided by La Poutré et al. \cite{lapoutreMaintainance23ecc}, \cite{lapoutreMaintainance23eccII}, with some minor additions to suit our purposes. These data structures rely on the so-called ``fractionally rooted trees'' (FRT), introduced in \cite{lapoutreMaintainance23eccII}. In the next section we give a brief overview of the operations supported by FRTs. Then we describe the implementations for collections of trees and cactuses. We only give as many details of the data structures and algorithms of \cite{lapoutreMaintainance23ecc} and \cite{lapoutreMaintainance23eccII} as are needed in order to show where our own additions fit in and establish our results.

\subsection{Fractionally rooted trees}

The FRT data structure operates on a forest $F$ equipped with a partition of its edges such that the classes corresponding to this partition induce subtrees of $F$. Before we describe the operations supported by FRT, we introduce some terminology. We say that a node $x$ of $F$ belongs to an edge class if it is incident to at least one edge of that class. Also, we say that an edge class intersects a path $P$ if $P$ contains at least one edge of that class. Now let $x,y$ be two nodes of a tree of $F$, and let $P$ be the simple path that connects them. We call a node $z$ on $P$ a \textit{boundary node} of $P$ if it is either one of $x,y$ or it is incident to two edges of $P$ which belong to different classes. A \textit{boundary edge set} for a boundary node $z$ on $P$ is a set of ($0$, $1$ or $2$) edges incident to $z$, one from each edge class to which $z$ belongs and which intersects $P$. (We do not demand that the edges in a boundary edge set for $z$ lie on $P$; however, one of their endpoints must be $z$.) A \textit{boundary list} for $P$ is a list consisting of the boundary nodes of $P$, where each boundary node has a sublist that contains a boundary edge set for it on $P$.
Now let $L$ be a list of nodes where each node has a sublist of edges. We say that an edge class \textit{occurs} in $L$, or that $L$ \textit{contains} an edge class, if there is an edge of that class in some sublist of $L$. The edge classes \textit{of} $L$ are precisely those that occur in it. A \textit{joining list} $J$ is a list of nodes with sublists of edges such that the union of the classes occurring in $J$ induces some subtree in $F$. (We note that this is always the case for a boundary list of a path.) In addition, the nodes in $J$ must be the nodes belonging to at least two edge classes occurring in $J$, and the sublist for each node contains an edge for each class in $J$ to which this node belongs. (Thus we have that a boundary list of a path is a joining list if we remove the endpoints of the path from the list.)

Now the operations supported by FRT are the following.

\begin{itemize}
\item{$\mathtt{link}(x,y)$. Let $x,y$ be two nodes lying on different trees of $F$. Then join the two trees by introducing a new edge $(x,y)$ in the FRT data structure.}
\item{$\mathtt{boundary}(x,y)$. Let $x,y$ be two nodes lying on the same tree of $F$. Then return a boundary list for the simple path with endpoints $x$ and $y$.}
\item{$\mathtt{joinclasses}(J)$. Let $J$ be a joining list. Then join all the edge classes of which an
edge occurs in the list.}
\end{itemize}

We say that a call $\mathtt{boundary}(x,y)$ is essential if there are at least two different edge classes that intersect the simple path that connects $x$ and $y$. (Equivalently, a call $\mathtt{boundary}(x,y)$ is essential if the boundary list that it provides contains more than two nodes.)

By \cite{lapoutreMaintainance23eccII}, we have the following result.

\begin{theorem}(Theorem $9.2$ in \cite{lapoutreMaintainance23eccII})
\label{theorem:frt}
There is an implementation of fractionally rooted trees with the following guarantee. Suppose that we start with an empty forest, and we perform $n$ insertions of nodes and $m$ operations $\mathtt{link}$, $\mathtt{boundary}$, and $\mathtt{joinclasses}$, where every essential call of $\mathtt{boundary}$ is immediately followed by $\mathtt{joinclasses}$ on a joining list that contains all the edge classes that occur in the list provided by $\mathtt{boundary}$. Then all these operations can be performed in total $O((n+m)\alpha(m,n))$ time.
\end{theorem}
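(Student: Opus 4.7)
My plan is to implement the FRT data structure as a \emph{class-quotient forest} maintained on top of a disjoint-set union (DSU) primitive, and then to exploit the coupling hypothesis (that every essential $\mathtt{boundary}$ is immediately followed by a compatible $\mathtt{joinclasses}$) to charge path-traversal work against subsequent class contractions, bringing the amortized cost down to that of a single DSU operation.

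First I would fix the representation. Let $\widetilde{F}$ denote the class-quotient forest obtained from $F$ by contracting each edge class, together with its endpoint set, into a single super-node. I maintain $\widetilde{F}$ explicitly with parent pointers and with each super-node storing the constant number of ``bridge'' edges linking it to its parent and children. In parallel, I keep a DSU over the original nodes of $F$ whose sets are exactly the super-nodes, plus a DSU over edge classes, so that each node can locate in $O(\alpha(m,n))$ amortized time the super-node and the class representatives it currently belongs to.

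Next I would implement the three operations. $\mathtt{link}(x,y)$ creates a fresh singleton class for $(x,y)$ and inserts the corresponding bridge edge in $\widetilde{F}$ between $\mathit{find}(x)$ and $\mathit{find}(y)$. $\mathtt{joinclasses}(J)$ unites the DSU cells of the classes occurring in $J$ and contracts their super-nodes, which by the joining-list property span a subtree of $\widetilde{F}$, into one super-node, redirecting the parent pointers of the dangling children by union-by-size. $\mathtt{boundary}(x,y)$ performs a bidirectional upward walk in $\widetilde{F}$ from $\mathit{find}(x)$ and $\mathit{find}(y)$, marking super-nodes until a previously marked one is hit, and then returns the concatenated list of visited super-nodes together with their (at most two) bridge-edge sublists.

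The main obstacle, and the core of the analysis, is achieving $O((n+m)\alpha(m,n))$ rather than the $O(\log n)$ bound that would follow from union-by-size alone. For non-essential $\mathtt{boundary}$ calls the walk visits only $O(1)$ super-nodes and costs $O(\alpha(m,n))$ directly through DSU. For essential calls the walk may be long, but by hypothesis each visited super-node is contracted by the immediately following $\mathtt{joinclasses}$, so I would charge the walk cost to the contraction events; the $O(1)$ bidirectional-search overshoot is an additive slack per call that is absorbed into the same charging. Combining this with Tarjan's analysis of a sequence of DSU operations would then yield the claimed total bound. The delicate step I expect to wrestle with is making the parent-pointer redirections during contraction compatible with path-compression DSU amortization, since contractions can collapse several levels of $\widetilde{F}$ at once and children from different levels must be rehung consistently; this is precisely what La Poutré's fractionally rooted trees achieve in \cite{lapoutreMaintainance23eccII}, and I would close by verifying that the coupling hypothesis used there is exactly the one in the statement, then invoking that analysis.
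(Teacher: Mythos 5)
The paper does not prove this statement at all --- it is quoted as Theorem $9.2$ of \cite{lapoutreMaintainance23eccII} and used thereafter as a black box --- so there is no internal proof to compare against. Your sketch sets out a reasonable high-level architecture (a class-quotient forest layered over DSU, a bidirectional marking walk for $\mathtt{boundary}$, and a charging of traversal cost to the contraction triggered by the coupled $\mathtt{joinclasses}$), but you explicitly defer the genuinely difficult step --- reconciling multi-level contractions with an $\alpha(m,n)$ amortization, which is exactly what La Poutr\'e's fractional rooting mechanism is designed to handle --- to that same reference, so in effect you take the same route the paper does. One inaccuracy worth fixing if you ever flesh this in: a super-node of the quotient forest may have arbitrarily many children (one bridge edge per child), so it does not store a constant number of bridge edges as your representation claims.
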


\subsection{An implementation for trees}

\textbf{(a) Representation}

For the operations $\mathtt{compressPath}$ and $\mathtt{joinTrees}$ on collections of trees, we essentially use the solution of La Poutré for the incremental maintenance of the $2$-edge-connected components in general graphs with asymptotically optimal time complexity \cite{lapoutreMaintainance23eccII}.

Thus we represent the collection of trees $\mathcal{C} $ as a forest $F$, which is implemented as an FRT data structure. Every tree $T\in \mathcal{C}$ corresponds to a tree $T_\mathit{FRT}$ of $F$. The edges of $F$ are partitioned into edge classes, where every edge class induces a subtree of $F$. Some edges of $F$ are called \textit{bridges} and belong to singleton classes which are called \textit{quasi} classes. All other edge classes are called \textit{real}. There is an one-to-one correspondence between the edges of a tree $T\in \mathcal{C}$ and the bridges of $T_\mathit{FRT}$. Thus every edge of $T_\mathit{FRT}$ that is a bridge contains a pointer to its corresponding edge on $T$. Every node $x$ of $T_\mathit{FRT}$ belongs to at most one real class. For every node $x$ that belongs to a real class, we maintain an edge $x_\mathit{assoc}$ of that class. By shrinking the subtrees that are induced by the real classes into single nodes, we get a natural isomorphism between $T$ and $T_\mathit{FRT}$. (The idea here is that the subtrees induced by the real classes correspond to maximal sets of nodes that got merged due to the operation $\mathtt{compressPath}$.) Thus, every node $x\in T$ corresponds to a subset $S$ of $T_\mathit{FRT}$, and we associate $x$ with a node $x_\mathit{FRT}$ of $S$; conversely, every node in $S$ is associated with $x$. To implement the later, we use a DSU data structure $\mathit{DSU}_F$ on the nodes of $F$, where the representatives of the sets are nodes of the trees of $\mathcal{C}$. $\mathit{DSU}_F$ supports the operations $\mathit{find}_F$ and $\mathit{unite}_F$. Thus, in order to find which node of $\mathcal{C}$ corresponds to a node $u$ of $F$, we use $\mathit{find}_F(u)$. $\mathit{DSU}_F$ (and every other DSU data structure in the sequel) is implemented using rooted trees, with path compression and union by size, thus achieving the asymptotically optimal time complexity \cite{tarjanEfficiencyGoodNot1975}. 

We can summarize the above properties as follows.

\begin{property}
\label{property:frt_trees}
Let $x,y$ be two nodes of a tree $T\in\mathcal{C}$, and let $P$ be the simple path on $T$ that connects $x$ and $y$. Let also $\tilde{P}$ be the simple path on $T_\mathit{FRT}$ that connects $x_\mathit{FRT}$ and $y_\mathit{FRT}$. Then there is an one-to-one correspondence between the edges of $P$ and the edges of $\tilde{P}$ that are bridges. This correspondence is compatible with that between the nodes of $T$ and the nodes of $T_\mathit{FRT}$. In other words, for every bridge $(u,v)$ on $\tilde{P}$, there is an edge $(\mathit{find}_F(u),\mathit{find}_F(v))$ on $P$.
\end{property}

\noindent 
\textbf{(b) Operations}

\noindent 
$\mathtt{compressPath}$

Let $x$ and $y$ be two distinct nodes of a tree $T\in\mathcal{C}$, and let $P$ be the simple path on $T$ that connects $x$ and $y$. To perform $\mathit{compressPath}(T,x,y)$ we work as follows. First we call $\mathit{boundary}(x_\mathit{FRT},y_\mathit{FRT})$ to obtain a boundary list $L$ for the simple path $\tilde{P}$ that connects $x_\mathit{FRT}$ and $y_\mathit{FRT}$ on $T_\mathit{FRT}$. According to Property \ref{property:frt_trees}, in order to retrieve the edges of $P$, it is sufficient to identify the edges of $\tilde{P}$ that are bridges. Since bridges belong to singleton (quasi) classes, we have that every bridge of $\tilde{P}$ must lie in the sublist of some node in $L$. Thus we can retrieve the bridges of $\tilde{P}$ by scanning the sublists of $L$. (Incidentally, we note here that the efficiency of the operation $\mathit{boundary}$ lies in the fact that it only computes boundary nodes and corresponding edge classes, without always traversing the entire path.) Furthermore, using again Property \ref{property:frt_trees}, we can retrieve the nodes of $P$ using the operation $\mathit{find}_F$ of $\mathit{DSU}_F$ on the endpoints of every bridge of $\tilde{P}$.

Now, for every node $u$ in $L$ that is incident to a real class, we append $u_\mathit{assoc}$ to the sublist of $u$. This is to ensure that the real class to which $u$ belongs will get joined to all classes that occur in $L$. (Because there is a possibility that the edge classes to which $u$ belongs and that intersect $\tilde{P}$ are quasi classes, and so $L$ does not contain any edge from the real edge class to which $u$ belongs.) If either $x_\mathit{FRT}$ or $y_\mathit{FRT}$ still has only one edge in its sublist, then it is removed from $L$. (This is to ensure that $L$ is a joining list.) If $L$ is non-empty, then we call $\mathit{joinclasses}(L)$. Otherwise, we have that $x_\mathit{FRT}$ and $y_\mathit{FRT}$ are connected with a bridge $(x_\mathit{FRT},y_\mathit{FRT})$ on $T_\mathit{FRT}$. Then we just change the status of $(x_\mathit{FRT},y_\mathit{FRT})$ so that it is no longer marked as a bridge, and we convert the quasi class that contains $(x_\mathit{FRT},y_\mathit{FRT})$ to a real class. We let the associated edge of every node on $\tilde{P}$ be any edge on $\tilde{P}$. (For this purpose, we may keep in a variable $e_\mathit{temp}$ the edge that was in the boundary edge set of $x_\mathit{FRT}$ after the call $\mathit{boundary}(x_\mathit{FRT},y_\mathit{FRT})$; so now we set $u_\mathit{assoc}\leftarrow e_\mathit{temp}$ for every node $u$ on $\tilde{P}$.) We introduce a new node $z$ on $T$ that replaces the entire path $P$, and we let $z_\mathit{FRT}$ be any node on $\tilde{P}$. Finally, we unite all nodes on $\tilde{P}$ using $\mathit{DSU}_F$, and we let $z$ be the representative of the resulting set. This completes the description of $\mathit{compressPath}$, which is summarized in Algorithm \ref{algorithm:compress_path}.\\

\begin{algorithm}[!h]
\caption{\textsf{$\mathit{compressPath}(T,x,y)$}}
\label{algorithm:compress_path}
\LinesNumbered
\DontPrintSemicolon
$P\leftarrow \emptyset$ \tcp{the set of nodes of the simple path on $T$ that connects $x$ and $y$}
$\mathcal{E}\leftarrow \emptyset$ \tcp{the set of edges of $T$ to be returned}
$L\leftarrow \mathtt{boundary}(x_\mathit{FRT},y_\mathit{FRT})$\;
let $e_\mathit{temp}$ be the edge in the sublist of $x_\mathit{FRT}$ in $L$\;
\ForEach{node $u$ in $L$}{
  \ForEach{edge $e$ in the sublist of $u$}{
    \If{$e$ is a bridge}{
      $\tilde{e}\leftarrow $ edge of $T$ pointed to by $e$\;
      $\mathcal{E}\leftarrow \mathcal{E}\cup\{\tilde{e}\}$\;
      \tcp{let $e=(z_1,z_2)$}
      $x_1\leftarrow \mathit{find}_F(z_1)$\;
      $x_2\leftarrow \mathit{find}_F(z_2)$\;
      $P\leftarrow P\cup\{x_1,x_2\}$\;
    }
  }
  \lIf{$u_\mathit{assoc}\neq\emptyset$}{append $u_\mathit{assoc}$ to the sublist of $u$}
  \lIf{the sublist of $u$ contains only one edge}{remove $u$ from $L$}
}
\lIf{$L\neq\emptyset$}{$\mathtt{joinclasses}(L)$}
\Else{
  unmark $(x_\mathit{FRT},y_\mathit{FRT})$ as a bridge\;
  mark $\{(x_\mathit{FRT},y_\mathit{FRT})\}$ as a real class\;
}
\lForEach{node $u$ in $L\cup\{x_\mathit{FRT},y_\mathit{FRT}\}$}{set $u_\mathit{assoc}\leftarrow e_\mathit{temp}$}
let $z$ be a new node on $T$ that replaces the path $P$\;
unite all nodes in $L\cup\{x_\mathit{FRT},y_\mathit{FRT}\}$ using $\mathit{DSU}_F$; let $z$ be the representative\;
set $z_\mathit{FRT}\leftarrow x_\mathit{FRT}$\;
\textbf{return} $\{P,\mathcal{E},z\}$\;
\end{algorithm}

\noindent 
$\mathtt{joinTrees}$

Let $T_1$ and $T_2$ be two distinct trees of $\mathcal{C}$, and let $x\in T_1$ and $y\in T_2$. Then $\mathit{joinTrees}(T_1,T_2,(x,y))$ is performed by simply calling $\mathit{link}(x_\mathit{FRT},y_\mathit{FRT})$, where $(x_\mathit{FRT},y_\mathit{FRT})$ is marked as a bridge and the corresponding singleton edge class is marked as a quasi class. Finally we let $(x_\mathit{FRT},y_\mathit{FRT})$ point to $(x,y)$. (See Algorithm~\ref{algorithm:join_trees}.)\\

\begin{algorithm}[!h]
\caption{\textsf{$\mathit{joinTrees}(T_1,T_2,(x,y))$}}
\label{algorithm:join_trees}
\LinesNumbered
\DontPrintSemicolon
$\mathtt{link}(x_\mathit{FRT},y_\mathit{FRT})$\;
mark $(x_\mathit{FRT},y_\mathit{FRT})$ as a bridge\;
mark $\{(x_\mathit{FRT},y_\mathit{FRT})\}$ as a quasi class\;
make $(x_\mathit{FRT},y_\mathit{FRT})$ point to $(x,y)$\;
\end{algorithm}

\subsection{An implementation for cactuses}

\textbf{(a) Representation}

Following \cite{lapoutreMaintainance23eccII}, we represent the cactuses with a data structure that generalizes the concept of the tree of cycles. Recall that the tree of cycles of a cactus is the graph that represents the incidence relation of the nodes of the cactus to its cycles \cite{galilMaintaining3EdgeConnectedComponents1993}, \cite{lapoutreMaintainance23ecc}. By the definition of the cactus, we have that this graph is a tree. Now we generalize the tree of cycles as follows.

\begin{itemize}
\item We partition the cactus $S$ into subcactuses. Based on this partition, we have the graph $T$ that represents the incidence relation of the nodes of $S$ to the subcactuses. (Again, we have that $T$ is a tree.)
\item We allow $T$ to be extended with extra nodes, that serve as copies of the nodes of $S$. However, we demand that the resulting graph is also a tree. All nodes of $T$ that serve as copies of the same node of $S$ are said to belong to the same \textit{cluster}. (Thus, the set of clusters is a partition of the set of nodes of $T$ that correspond to nodes of $S$.)
\end{itemize}

We call this type of representation of a cactus $S$ a \textit{tree of cactuses}. See Figure \ref{figure:cactus_tree} for an example of this kind of representation. (We note that, contrary to the tree of cycles, a tree-of-cactuses representation of a cactus is not unique.) We distinguish two types of nodes in a tree of cactuses: those that correspond to the nodes of $S$, and they are called \textit{real nodes}, and those that correspond to the subcactuses of $S$, and they are called \textit{cactus nodes}. Thus we have a correspondence between the nodes of $S$ and the real nodes of a tree-of-cactuses representation $T$ of $S$. Furthermore, we may speak of the \textit{nodes} of a cactus node $C$, and by that we mean the nodes of the subcactus of $S$ that corresponds to $C$.

\begin{figure}[t!]\centering
\includegraphics[trim={4cm 8cm 4cm 0cm}, width=0.6\linewidth]{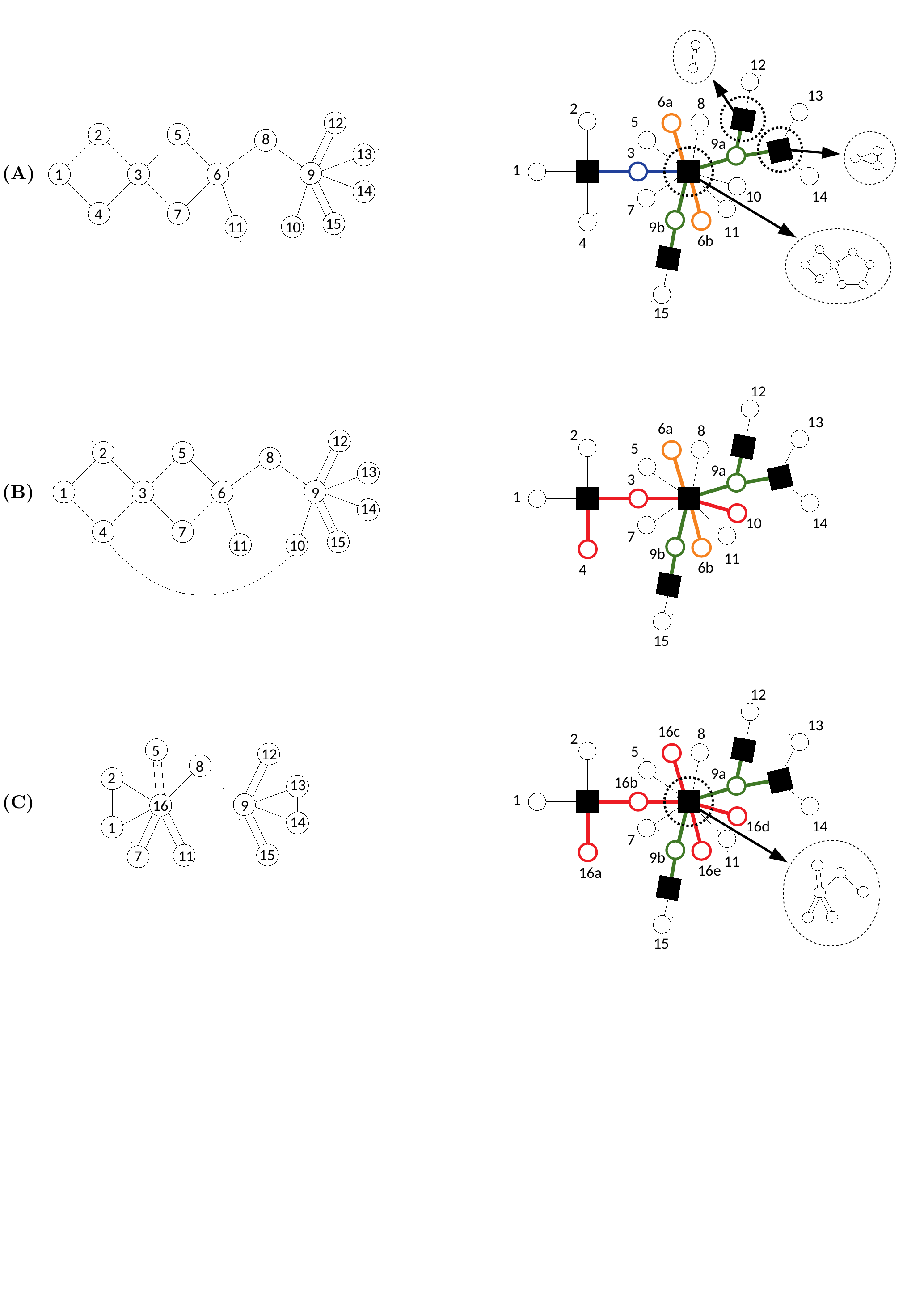}
\caption{\small{An example of a tree-of-cactuses representation of a cactus. The square nodes are the cactus nodes. In figures A and C we can see the corresponding subcactuses of some cactus nodes. The edges that belong to the same edge class are painted with the same colour. The edges in black colour constitute singleton edge classes. We can see that some real nodes correspond to the same nodes of the cactus (and thus they belong to the same cluster). For example, real nodes $6a$ and $6b$, in figure A, correspond to node $6$ of the cactus. In figure B we can see an intermediate step of the call $\mathtt{compressCyclePath}(4,10)$: this is where we have to determine the path on the tree that connects $4$ and $10$. In figure C we can see the result of this call. The nodes $4$, $3$, $6$ and $10$ were merged into a new node, which is called $16$. Observe that, in order to determine that $6$ was also part of the cycle path connecting $4$ and $10$, we were guided by the associated data structure of an intermediary cactus node. Notice also that the operation $\mathtt{compressCyclePath}$ does not affect the structure of the tree of cactuses, but only its edge classes, the names of its real nodes, and the associated data structures of the involved cactus nodes.}}
\label{figure:cactus_tree}
\end{figure}

In our implementation, every node of $S$ has a pointer to a unique corresponding node on $T$; whenever no confusion arises, we will denote these two nodes with the same symbol. Conversely, every node of $T$ has a pointer to its corresponding node on $S$. Specifically, to maintain the later correspondence we use a DSU data structure $\mathit{DSU}_\mathit{cls}$ on the real nodes of $T$ whose sets coincide with the clusters. This data structure supports the operations $\mathit{find}_\mathit{cls}$ and $\mathit{unite}_\mathit{cls}$. Every cluster has a representative real node $z$ in it, which can be found with a call $\mathit{find}_\mathit{cls}(x)$ on any node $x$ in this cluster. Then we only need to maintain the pointer of $z$ to its corresponding node on $S$. Finally, every edge $(x,C)$, where $x$ is a real node and $C$ is a cactus node, has pointers to its endpoints $x$ and $C$.
%
%The maximal sets of real nodes that correspond to the same node of $S$ are called \textit{clusters}. Thus, the set of clusters is a partition of the set of nodes of a tree of cactuses.

Now let $T$ be a tree of cactuses of a cactus $S$. We equip the set of edges of $T$ with the following equivalence relation. If $(x,C_1)$ and $(x,C_2)$ are two edges of $T$, where $x$ is a real node and $C_1,C_2$ are two cactus nodes, then $(x,C_1)$ and $(x,C_2)$ are equivalent. Furthermore, if $(x,C)$ and $(x',C')$ are two edges of $T$, where $x$ and $x'$ belong to the same cluster
and $C,C'$ are cactus nodes, then $(x,C)$ and $(x',C')$ are equivalent. It should be clear that this is indeed an equivalence relation, and that its equivalence classes induce subtrees of $T$. Thus, on a set of trees of cactuses (representing a set of cactuses), each one equipped with its respective equivalence relation, we can perform the operations of an FRT data structure. Furthermore, it should also be clear that every node $x$ (in a non-trivial tree of cactuses) belongs to a unique edge class, and we associate to $x$ an edge $x_\mathit{assoc}$ of that class.

Before we describe how we can use the FRT operations to perform $\mathtt{compressCyclePath}$ and $\mathtt{joinCactuses}$, let us give an overview of what is involved in using a tree-of-cactuses representation $T$ of a cactus $S$ in order to determine cycle paths on $S$. So let $x,y$ be two nodes of $S$. Let also $Q$ be the cycle path on $S$ connecting $x$ and $y$, and let $P = z_1,C_1,\dots,C_{k-1},z_k$ be the simple path on $T$ with endpoints $x$ and $y$, where $z_1=x$, $z_k=y$, and $C_1,\dots,C_{k-1}$ are the intermediary cactus nodes. Then we have that $Q$ consists of (the nodes of $S$ that correspond to) all the real nodes on $P$, plus (the nodes of $S$ that correspond to) some nodes $w_1,\dots,w_t$ that are incident to the cactuses $C_1,\dots,C_{k-1}$. The nodes $w_1,\dots,w_t$ are precisely those that appear on the cycle paths of the subcactuses of $S$ corresponding to $C_1,\dots,C_{k-1}$ that form part of $Q$. Thus, in order to determine $w_1,\dots,w_t$, we are guided by some data structures associated with $C_1,\dots,C_{k-1}$ that represent cactuses. (See also Figure \ref{figure:cactus_tree}.) For this purpose we use the data structures of \cite{lapoutreMaintainance23ecc}, that represent the cactuses as rooted trees of cycles equipped with \textit{circular split-find} structures. (These representations are used to solve the problem of the incremental maintenance of the $3$-edge-connected components of $2$-edge-connected graphs in asymptotically optimal time. We will not give a full exposition of this representation here, but later on we will describe how to augment it in order to facilitate the retrieval of the cactus edges.) Furthermore, in order to retrieve the edges of $S$ that connect any pair of consecutive nodes on $Q$, we use precisely the data structures associated with $C_1,\dots,C_{k-1}$, as the information concerning the cactus edges is stored in the intermediary cactuses.

%Finally, in order to update $T$ appropriately, we first make the necessary updates on the associated data structures of $C_1,\dots,C_k$, and now let's assume that $w_1,\dots,w_l$ are incident to $C_{t(1)},\dots,C_{t(l)}$, respectively. Then we simply join the edge classes that contain the edges $(z_1,C_1),\dots,(z_k,C_k)$ and $(w_1,C_{t(1)}),\dots,(w_l,C_{t(l)})$.

Now we recall that a circular split-find data structure operates on a collection of circular lists, where every list is related with a representative element, and supports the operations $\mathit{find}$ and $\mathit{split}$. $\mathit{find}(x)$ on a list node $x$ returns the representative of the list containing $x$. $\mathit{split}(x,y)$ on two distinct nodes $x$ and $y$ of the same list creates two new circular lists, consisting of the nodes from $x$ (resp. $y$) and up to - but excluding - $y$ (resp. $x$), following the same direction in both cases (e.g., the left one), and relates each new list with a representative element. \cite{lapoutreDynamicGraphAlgorithms1991} provides an implementation for the circular split-find problem, where we can perform $m$ operations $\mathit{split}$ and $\mathit{find}$ on a collection of circular lists on $n$ element in $O(n+m\alpha(m,n))$ time in total.

In the tree-of-cycles representation of cactuses used by \cite{lapoutreMaintainance23ecc}, the circular lists correspond to the cycles of the cactus. Specifically, let $C$ be a cycle of a cactus. Then, for every node $x$ of $C$, there is an element $\mathit{repr}(x,C)$ that corresponds to the edge $(x,C)$ of the tree of cycles.
Now, the circular list corresponding to $C$ consists precisely of the elements $\mathit{repr}(x,C)$, for every node $x$ of $C$, ordered in the same way in which the nodes of $C$ occur on $C$. 

(We note that, in order to retrieve $x$ or $C$ from an element $\mathit{repr}(x,C)$, we use a $\mathit{find}$ operation on a DSU data structure that operates on the collection of all $\mathit{repr}$ elements. In fact, we use two different DSU data structures: one for retrieving the node $x$ from an element $\mathit{repr}(x,C)$, and one for retrieving $C$ from $\mathit{repr}(x,C)$. However, we will not mention explicitly those DSU data structures, because with an optimal implementation \cite{tarjanEfficiencyGoodNot1975} they do not affect the asymptotical time bounds.)

Now we have to augment the information on the circular lists, in order to be able to return the cactus edges that connect consecutive nodes on cycle paths on a cactus. To achieve this, we simply store four pointers $\mathit{left}$, $\mathit{right}$, $\mathit{leftEdge}$ and $\mathit{rightEdge}$ on every $\mathit{repr}$ element. Specifically, let $C$ be a cycle of the cactus, and suppose that we have a specific orientation of its nodes (that is, a way to determine, given two consecutive nodes $x$ and $y$ on $C$, which one is on the left and which one is on the right). Now let $x$ and $y$ be two consecutive nodes of $C$, where $x$ is on the left of $y$. Then we have $\mathit{repr}(y,C).\mathit{left}=\mathit{repr}(x,C)$, $\mathit{repr}(x,C).\mathit{right}=\mathit{repr}(y,C)$, $\mathit{repr}(y,C).\mathit{leftEdge}=(x,y)$ and $\mathit{repr}(x,C).\mathit{rightEdge}=(x,y)$.

Finally, for every edge $(x,C)$ on a tree of cactuses, where $x$ is a real node and $C$ is a cactus node, we have a pointer from $(x,C)$ to the node in the associated structure of $C$ that corresponds to $x$, and reversely. (To be more precise, in order to maintain the reverse correspondence we use a DSU data structure that operates internally on the nodes of the tree of cycles representing $C$. This is because the operations on this data structure merge nodes into larger sets from time to time; and thus we maintain representatives of those sets, and pointers from those representatives to their corresponding nodes on the tree of cactuses. However, we will not mention explicitly the calls to this DSU data structure in what follows, since they do not affect the asymptotical time complexity overall.)
\\

\noindent 
\textbf{(b) Operations}

\noindent 
$\mathtt{compressCyclePath}$

To perform $\mathtt{compressCyclePath(S,x,y)}$ (with $x\neq y$), we first have to determine the cycle path connecting $x$ and $y$ on $S$. Let us use the same symbols, $x$ and $y$, to denote the real nodes on the tree-of-cactuses representation $T$ of $S$ that correspond to $x$ and $y$, respectively. As we noted in subsection $(a)$, we have to determine two things: first, the simple path $P = z_1,C_1,\dots,C_{k-1},z_k$ on $T$ that connects $x$ and $y$, where $z_1=x$, $z_k=y$, and $C_1,\dots,C_{k-1}$ are the intermediary cactus nodes, and second, the real nodes $w_1,\dots,w_t$ that are incident to some of the cactuses $C_1,\dots,C_{k-1}$ and that have to get merged too with the other nodes. The property that characterizes the nodes $w_1,\dots,w_t$ can be explained as follows. For every edge $(z,C)$ on $P$, where $z$ is a real node and $C$ is a cactus node, we let $\tilde{z}$ denote the node in the associated tree of cycles of $C$ that is pointed to by $(z,C)$ (and corresponds essentially to $z$, and to any other node in the cluster of $z$). Now suppose that, for some $i\in\{1,\dots,k-1\}$, $(z_i,C_i)$ and $(C_i,z_{i+1})$ belong to different edge classes. Then $\tilde{z}_i$ and $\tilde{z}_{i+1}$ are different nodes of (the associated structure to) $C$, and so we have to find and merge the cycle path on $C$ with endpoints $\tilde{z}_i$ and $\tilde{z}_{i+1}$ (and also return the cactus edges that connect consecutive nodes on this path). The nodes on this cycle path are a subset of $w_1,\dots,w_t$; and with this procedure, applied to all $i\in\{1,\dots,k-1\}$, we get precisely all $w_1,\dots,w_t$.  

Now we notice that it is sufficient to determine only the subset of cactuses $C_1,\dots,C_{k-1}$ that consists of all $C_i$, $i\in\{1,\dots,k-1\}$, such that $(z_i,C_i)$ and $(C_i,z_{i+1})$ belong to different edge classes. (This is because, if $(z_i,C_i)$ and $(C_i,z_{i+1})$ belong to the same edge class, then $z_i$ and $z_{i+1}$ correspond to the same node of $S$, and therefore no real nodes incident to $C_i$ will be involved, and no operations on the associated data structure of $C_i$ will have to be performed.) We can determine those cactuses with a call $\mathit{boundary}(x,y)$ on $T$, by discarding the nodes $x$ and $y$ from the resulting boundary list $L$. (Observe that the intermediary real nodes of $P$ will not appear in $L$, since all the edges that they are incident to belong to the same edge class.) Let $D_1,\dots,D_l$ be the nodes that appear in $L$. Then, for every $i\in\{1,\dots,l\}$, there is an $a(i)$ such that $D_i=C_{a(i)}$. Now, for every $i\in\{1,\dots,l\}$, $D_i$ contains a sublist in $L$ consisting of two edges $(z'_{i_1},D_i)$ and $(z'_{i_2},D_i)$, such that $(z'_{i_1},D_i)$ and $(z_{a(i)},C_{a(i)})$ (resp. $(z'_{i_2},D_i)$ and $(C_{a(i)},z_{a(i)+1})$) belong to the same edge class. This implies that $z'_{i_1}$ corresponds to the same node as $z_{a(i)}$, and $z'_{i_2}$ corresponds to the same node as $z_{a(i)+1}$. Thus, the node (in the associated data structure) of $D_i$ pointed to by $(z'_{i_1},D_i)$ (resp. $(z'_{i_2},D_i)$) is precisely the node that is pointed to by $(z_{a(i)},C_{a(i)})$ (resp. $(C_{a(i)},z_{a(i)+1})$), and so we can get $\tilde{z}_{a(i)}$ (resp. $\tilde{z}_{a(i)+1}$) from this pointer.

Now, for every $i\in\{1,\dots,l\}$, we have to do the following things: $(1)$ determine the cycle path on $D_i$ whose nodes we have to merge, $(2)$ find the cactus edges that connect consecutive nodes on this path, $(3)$ find the corresponding nodes on the tree of cactuses $T$, and $(4)$ merge the nodes on this path and properly update the associated data structure of $D_i$. Let $z_1$ and $z_2$ be nodes on $D_i$ that are the endpoints of the cycle path we have to determine. (We recall that $z_1$ and $z_2$ are given by the pointers stored in $(z'_{i_1},D_i)$ and $(z'_{i_2},D_i)$.) We will not provide the details on how to determine this path (i.e., how to perform $(1)$), as these can be found in \cite{lapoutreMaintainance23ecc}. (In particular, we use the procedure $\mathit{TreePath_3}$ in \cite{lapoutreMaintainance23ecc}.) To perform $(2)$, we use the pointers that we introduced in the $\mathit{repr}$ elements of the circular split-find data structures. To be more precise, once we have determined the cycle path that connects $z_1$ and $z_2$, we then have to perform a $\mathit{split}$ operation on every cycle involved in this path. (For the full details of what is involved in this step, see the procedure $\mathit{AdjustCycles}$ in \cite{lapoutreMaintainance23ecc}.) So let $C$ be a cycle that we have to split on nodes $u_1$ and $u_2$. (We have that $u_1$ and $u_2$ are nodes on the cycle path connecting $z_1$ and $z_2$.) Then we check whether $\mathit{repr}(u_1,C).\mathit{left} = \mathit{repr}(u_2,C)$ or $\mathit{repr}(u_1,C).\mathit{right} = \mathit{repr}(u_2,C)$, or both. If either case holds, we have to return $\mathit{repr}(u_1,C).\mathit{leftEdge}$ or $\mathit{repr}(u_1,C).\mathit{rightEdge}$, or both, respectively. For $(3)$ we simply use the pointer of every node on the cycle path to the corresponding edge of $T$. (In particular, notice that for $z_1$ and $z_2$ we will get edges that belong to the same edge classes as $(z'_{i_1},D_i)$ and $(z'_{i_2},D_i)$, respectively. Every other edge $(w,D_i)$ that we get, belongs to a different edge class, and provides one of the extra nodes $w_1,\dots,w_t$ that we have to merge (and which cannot be derived simply from the call $\mathit{boundary}$).) Finally, we will only provide the details to $(4)$ that have to do with the additional information that we have attached to the associated data structure of $C$; for the rest, we refer again to \cite{lapoutreMaintainance23ecc}. We first have to ensure that the pointers of the $\mathit{repr}$ elements of the circular lists are updated correctly after the splittings. So let $C$ be a cycle that we split on nodes $u_1$ and $u_2$. Then $\mathit{repr}(u_1,C)$ and $\mathit{repr}(u_2,C)$ are assigned to different circular lists; let us call them $L_1$ and $L_2$, respectively. Suppose that $L_1$ contains more that one element (for otherwise there is nothing to do for $\mathit{repr}(u_1,C)$). Then, one of $\mathit{repr}(u_1,C).\mathit{left}$, $\mathit{repr}(u_1,C).\mathit{right}$ has been assigned to $L_1$, and the other one has been assigned to $L_2$. Assume w.l.o.g. that $\mathit{repr}(u_1,C).\mathit{left}$ has been assigned to $L_1$ (i.e., the same list that $\mathit{repr}(u_1,C)$ has been assigned to). Then we must set $\mathit{repr}(u_1,C).\mathit{right} \leftarrow \mathit{repr}(u_2,C).\mathit{right}$ and $\mathit{repr}(u_1,C).\mathit{rightEdge} \leftarrow \mathit{repr}(u_2,C).\mathit{rightEdge}$. The other case for $\mathit{repr}(u_1,C)$ and the analogous cases for $\mathit{repr}(u_2,C)$ are treated similarly. 
%
%To conclude $(4)$, we have to maintain the correspondence between the nodes of $D_i$ and the edges of the tree of cactuses $T$, since we merge all nodes of the cycle path on $D_i$ that connects $z_1$ and $z_2$ to a new (representative) node $z$ (using a DSU data structure internal to the associated data structure to $D_i$). We can do this by setting the pointer of $z$ to any of the edges that were pointed to by one of the nodes on this cycle path (e.g., $(z'_{i_1},D_i)$), and then have this edge point back to $z$. This is sufficient, since all of those edges will later be joined into the same edge class. (We only have to make sure that, in order to access the node in an associated tree of cycles that corresponds to an edge of the tree of cactuses, we first perform a $\mathit{find}$ on the pointed node, using the internal DSU data structure). 
%
To conclude $(4)$, we note that the correspondence between the nodes of (the associated data structure to) $D_i$ and the edges of the tree of cactuses $T$ is maintained, because, although the nodes of the cycle path on $D_i$ that connects $z_1$ and $z_2$ got merged, this was done using a DSU data structure internal to the associated data structure to $D_i$. Thus we only have to make sure that, in order to access the node in an associated tree of cycles that corresponds to an edge of the tree of cactuses, we first perform a $\mathit{find}$ on the node pointed to by this edge, using the internal DSU data structure.
Algorithm \ref{algorithm:update_cactus} shows the operations that are performed in the associated data structure of $D_i$.

\begin{algorithm}[t!]
\caption{\textsf{$\mathtt{updateCactus}(D,z_1,z_2)$}}
\label{algorithm:update_cactus}
\LinesNumbered
\DontPrintSemicolon
\tcp{compress the cycle path on $D$ that connects $z_1$ and $z_2$ using the associated data structure; return the set of cactus edges that connect consecutive nodes on this path, and a list of edges on the tree of cactuses that contains $D$ that correspond to the nodes of this path}
$\mathcal{E}\leftarrow\emptyset$ \tcp{the set of cactus edges to be returned}
$\mathit{edgelist}\leftarrow\emptyset$ \tcp{the list of corresponding edges to be returned}
find the cycle path $P=\{u_1,\dots,u_k\}$ of $D$ that connects $z_1$ and $z_2$ \tcp{for this step we refer to \cite{lapoutreMaintainance23ecc}}
\ForEach{$i\in\{1,\dots,k-1\}$}{
  let $C$ be the cycle that contains $u_i$ and $u_{i+1}$\;
  \If{$\mathit{repr}(u_i,C).\mathit{left}=\mathit{repr}(u_{i+1},C)$}{
    $\mathcal{E}\leftarrow \mathcal{E}\cup\mathit{repr}(u_i,C).\mathit{leftEdge}$\;
  }
  \If{$\mathit{repr}(u_i,C).\mathit{right}=\mathit{repr}(u_{i+1},C)$}{
    $\mathcal{E}\leftarrow \mathcal{E}\cup\mathit{repr}(u_i,C).\mathit{rightEdge}$\;
  }
}
\ForEach{$u\in\{u_1,\dots,u_k\}$}{
  get the edge $(\tilde{u},D)$ that is pointed to by $u$\; 
  $\mathit{edgelist}\leftarrow\mathit{edgelist}\cup\{(\tilde{u},D)\}$\;
}
merge the nodes on $P$ and properly update the data structure \tcp{again, for this step we refer to \cite{lapoutreMaintainance23ecc}}
fix the pointers $\mathit{left}$, $\mathit{right}$, $\mathit{leftEdge}$ and $\mathit{rightEdge}$ of the $\mathit{repr}$ elements, as described in the text\;
\textbf{return} $\{\mathcal{E},\mathit{edgelist}\}$\;
\end{algorithm}

Thus, from the internal operations on the associated structure of every cactus node $D_i$, $i\in\{1,\dots,l\}$, we get a collection of edges $(w_{i,1},D_i),\dots,(w_{i,k_i},D_i)$ that we include in the edge sublist of $D_i$ in $L$. Furthermore, from the edges $\{(w_{i,j},D_i)\mid i\in\{1,\dots,l\}, j\in\{1,\dots,k_i\}\}$ we get the nodes of the cycle path on $S$ that connects $x$ and $y$, by using the pointer from every $(w_{i,j},D_i)$ to $w_{i,j}$, and then the pointer from $\mathit{find}_\mathit{cls}(w_{i,j})$ to the corresponding node of $S$. 

To conclude $\mathtt{compressCyclePath(S,x,y)}$ we call $\mathit{joinclasses}(L)$, we return a pointer to a new node $z$ that takes the place of all the nodes on the cycle path on $S$ connecting $x$ and $y$, we merge all $w_{i,j}$, $i\in\{1,\dots,l\}$, $j\in\{1,\dots,k_i\}$ into a larger cluster (using the $DSU_\mathit{cls}$ data structure), and we let $z$ point to the representative of this cluster, and reversely. The operation $\mathtt{compressCyclePath(S,x,y)}$ is summarized in Algorithm \ref{algorithm:compress_cyclepath}.\\

\begin{algorithm}[t!]
\caption{\textsf{$\mathtt{compressCyclePath(S,x,y)}$}}
\label{algorithm:compress_cyclepath}
\LinesNumbered
\DontPrintSemicolon
$P\leftarrow \emptyset$ \tcp{the set of nodes of the cycle path on $S$ that connects $x$ and $y$}
$\mathcal{E}\leftarrow \emptyset$ \tcp{the set of edges of $S$ to be returned}
$L\leftarrow \mathtt{boundary}(x,y)$\;
remove $x$ and $y$ from $L$\;
\ForEach{node $C$ in $L$}{
  let $(z_1,C)$ and $(z_2,C)$ be the two edges in the sublist of $C$ in $L$\;
  let $\tilde{z}_1$ be the node pointed to by $(z_1,C$) in the associated data structure of $C$\;
  let $\tilde{z}_2$ be the node pointed to by $(z_2,C$) in the associated data structure of $C$\;
  $\{\mathcal{E}_0,\mathit{edgelist}\}\leftarrow \mathtt{updateCactus}(C,\tilde{z}_1,\tilde{z}_2)$\;
  $\mathcal{E}\leftarrow \mathcal{E}\cup\mathcal{E}_0$\;
  \ForEach{edge $(w,C)$ in $\mathit{edgelist}$}{
    get $w$ using the pointer from $(w,C)$\;
    get the node $u$ of $S$ that corresponds to $w$, using the pointer of $\mathit{find}_{cls}(w)$\;
    $P\leftarrow P\cup\{u\}$\;
  }
  append $\mathit{edgelist}$ to the sublist of $C$ in $L$\;
}
$\mathtt{joinclasses}(L)$\;
let $z$ be a new node on $S$ substituting the cycle path $P$\;
merge all the real nodes that appear as endpoints of the edges in the sublists of $L$ using $\mathit{DSU}_{cls}$; let $\tilde{z}$ be the representative\;
make $z$ point to $\tilde{z}$ and $\tilde{z}$ to $z$\;
\textbf{return} $\{P,\mathcal{E},z\}$\;
\end{algorithm}

\vspace{1em}
\noindent 
$\mathtt{joinCactuses}$

To perform $\mathtt{joinCactuses(S_1,\dots,S_k,(x_1,x_2),\dots,(x_k,x_1))}$, we have to link the nodes $x_1,\dots,x_k$ in a new cycle. Thus we introduce a new cactus node $C$ (that will be made to correspond to the new cycle) and the edges $(x_1,C),\dots,(x_k,C)$, by performing $\mathit{link}(x_1,C),\dots,\mathit{link}(x_k,C)$. This links all the trees of cactuses corresponding to $S_1,\dots,S_k$ to $C$, and produces a larger tree of cactuses. For every $x\in\{x_1,\dots,x_k\}$ that has an associated edge $x_\mathit{assoc}$, we put $(x,C)$ in the edge class of $x_\mathit{assoc}$ by a call of $\mathit{joinclasses}$. If an $x\in\{x_1,\dots,x_k\}$ does not have an associated edge, then $(x,C)$ constitutes a new edge class of its own, and the associated edge of $x$ is set to be $x_\mathit{assoc}\leftarrow (x,C)$.

Then we have to construct the associated data structure to the cactus node $C$. This must be a rooted tree of cycles which contains only one cycle $\tilde{C}$ with nodes corresponding to $x_1,\dots,x_k$, in this order. Let $\tilde{x}_1,\dots,\tilde{x}_k$ be the nodes of $\tilde{C}$ that correspond to $x_1,\dots,x_k$, respectively. We root the tree arbitrarily to any one of $\tilde{x}_1,\dots,\tilde{x}_k$. Then we construct the circular list corresponding to $\tilde{C}$. The nodes of this list are the elements $\mathit{repr}(\tilde{x}_1,\tilde{C}),\dots,\mathit{repr}(\tilde{x}_k,\tilde{C})$, in this order. The pointers $\mathit{left}$ and $\mathit{right}$ on the $\mathit{repr}$ elements are easy to fix. Also, we set $\mathit{repr}(\tilde{x}_i,\tilde{C}).\mathit{leftEdge} \leftarrow (x_{i-1},x_i)$, for every $i\in\{2,\dots,k\}$, $\mathit{repr}(\tilde{x}_1,\tilde{C}).\mathit{leftEdge} \leftarrow (x_k,x_1)$, $\mathit{repr}(\tilde{x}_i,\tilde{C}).\mathit{rightEdge} \leftarrow (x_i,x_{i+1})$, for every $i\in\{1,\dots,k-1\}$, and $\mathit{repr}(\tilde{x}_k,\tilde{C}).\mathit{rightEdge} \leftarrow (x_k,x_1)$. (This is to be able to retrieve the real cactus edges.) The construction of the associated data structure to $C$ is shown in Algorithm \ref{algorithm:initialize_cycle}.

Finally, in order to establish the correspondence between the tree of cycles associated with $C$ and the tree of cactuses which contains $C$, for every $i\in\{1,\dots,k\}$ we have a pointer from $(x_i,C)$ to $\tilde{x}_i$, and reversely. (This is to be able to find the node within the cactus $C$ that corresponds to $x_i$, and conversely.) The procedure $\mathit{joinCactuses}$ is summarized in Algorithm \ref{algorithm:join_cactuses}.\\

\begin{algorithm}[!h]
\caption{\textsf{$\mathtt{joinCactuses(S_1,\dots,S_k,(x_1,x_2),\dots,(x_k,x_1))}$}}
\label{algorithm:join_cactuses}
\LinesNumbered
\DontPrintSemicolon
introduce a new cactus node $C$\;
\lForEach{$x\in\{x_1,\dots,x_k\}$}{$\mathtt{link}(x,C)$}
\lForEach{$x\in\{x_1,\dots,x_k\}$}{let $\{(x,C)\}$ constitute a new edge class}
\ForEach{$x\in\{x_1,\dots,x_k\}$}{
  \If{$x_\mathit{assoc}\neq\emptyset$}{
    let $L$ be a singleton list consisting of $x$\;
    let the sublist of $x$ be $\{x_\mathit{assoc},(x,C)\}$\;
    $\mathtt{joinclasses}(L)$\;
  }
  \lElse{$x_\mathit{assoc}\leftarrow (x,C)$}
}
perform $\mathtt{initialize\_cycle}(C,x_1,\dots,x_k,(x_1,x_2),\dots,(x_k,x_1))$, and collect the corresponding nodes $\{\tilde{x}_1,\dots,\tilde{x}_k\}$\;
\ForEach{$i\in\{1,\dots,k\}$}{
  make $(x_i,C)$ point to $\tilde{x}_i$ and $\tilde{x}_i$ point to $(x_i,C)$\;
}
\end{algorithm}

\begin{algorithm}[!h]
\caption{\textsf{$\mathtt{initialize\_cycle(C,x_1,\dots,x_k,(x_1,x_2),\dots,(x_k,x_1))}$}}
\label{algorithm:initialize_cycle}
\LinesNumbered
\DontPrintSemicolon
create nodes $\tilde{x}_1,\dots,\tilde{x}_k$ and $\tilde{C}$\;
create a tree of cycles consisting of the edges $(\tilde{x}_1,\tilde{C}),\dots,(\tilde{x}_k,\tilde{C})$\;
root the tree at $\tilde{x}_1$\;
create the elements $\mathit{repr}(\tilde{x}_1,\tilde{C}),\dots,\mathit{repr}(\tilde{x}_k,\tilde{C})$\;
initialize a circular split-find data structure on the elements $\mathit{repr}(\tilde{x}_1,\tilde{C}),\dots,\mathit{repr}(\tilde{x}_k,\tilde{C})$ (in this order), and associated it with $\tilde{C}$ \tcp{here we refer to \cite{lapoutreMaintainance23ecc}}
%\ForEach{$i\in\{2,\dots,k\}$}{
%  set $\mathit{repr}(\tilde{x}_i,\tilde{C}).\mathit{left} \leftarrow \mathit{repr}(\tilde{x}_{i-1},\tilde{C})$\;
%  set $\mathit{repr}(\tilde{x}_i,\tilde{C}).\mathit{leftEdge} \leftarrow (x_{i-1},x_i)$\;
%}
%set $\mathit{repr}(\tilde{x}_1,\tilde{C}).\mathit{left} \leftarrow \mathit{repr}(\tilde{x}_k,\tilde{C})$\;
%set $\mathit{repr}(\tilde{x}_1,\tilde{C}).\mathit{leftEdge} \leftarrow (x_k,x_1)$\;
%\ForEach{$i\in\{1,\dots,k-1\}$}{
%  set $\mathit{repr}(\tilde{x}_i,\tilde{C}).\mathit{right} \leftarrow \mathit{repr}(\tilde{x}_{i+1},\tilde{C})$\;
%  set $\mathit{repr}(\tilde{x}_i,\tilde{C}).\mathit{rightEdge} \leftarrow (x_i,x_{i+1})$\;
%}
%set $\mathit{repr}(\tilde{x}_k,\tilde{C}).\mathit{right} \leftarrow \mathit{repr}(\tilde{x}_1,\tilde{C})$\;
%set $\mathit{repr}(\tilde{x}_k,\tilde{C}).\mathit{rightEdge} \leftarrow (x_k,x_1)$\;
fix the pointers $\mathit{left}$, $\mathit{right}$, $\mathit{leftEdge}$ and $\mathit{rightEdge}$ of the $\mathit{repr}$ elements as described in the text\;
\textbf{return} $\{\tilde{x}_1,\dots,\tilde{x}_k\}$\;
\end{algorithm}

\section{Sparse certificates for the maximal $k$-edge-connected subgraphs}
\label{section:sparse_subgraphs}

In this section we discuss constructions of (almost) sparse subgraphs that have the same maximal $k$-edge-connected subgraphs as the original graph. Following the terminology of \cite{opt-dec-conn}, we define a \emph{$k$-certificate} of a graph to be a spanning subgraph that has the same maximal $k$-edge-connected subgraphs as the original graph.\footnote{This definition can be extended, in order to include objects such as a graph $G'$ from which ``we can derive easily'' the maximal $k$-edge-connected subgraphs of the original graph once we have computed those of $G'$. However, the definition we provided here is enough for our purposes.} 

Using results from Benczúr and Karger~\cite{ben-kar}, we show that $(1)$ in linear time we can construct a $k$-certificate of $O(kn\log{n})$ size, and $(2)$ in $O(m\log^2{n})$ time we can construct a $k$-certificate of $O(kn)$ size. This is a result analogous to \cite{Nagamochi}, which provides certificates for the $k$-edge-connected components. In fact, we use the following concept from \cite{Nagamochi} (defined formally in \cite{AlgAspects}): a \emph{forest decomposition} with $t$ forests of a graph $G$ is a collection of forests $F_1,\dots,F_t$, such that $F_1$ is a spanning forest of $G$, and $F_i$ is a spanning forest of $G\setminus(F_1\cup\dots\cup F_{i-1})$, for $i\in\{2,\dots,t\}$.

Let $G$ be a graph. An edge of $G$ whose endpoints lie in different \emph{maximal} $k$-edge-connected subgraphs of $G$ is called a \emph{$k$-interconnection edge} of $G$. Algorithm~\ref{algorithm:k-certificate} describes an algorithm for computing a $k$-certificate of $G$.

\begin{algorithm}[h!]
\caption{\textsf{Compute a certificate for the maximal $k$-edge-connected subgraphs of $G$}}
\label{algorithm:k-certificate}
\LinesNumbered
\DontPrintSemicolon
let $E'$ be a set of edges of $G$ that contains all its $k$-interconnection edges\;
\label{line:compute-k-inter}
compute a forest decomposition $\mathcal{F}$ of $G\setminus{E'}$ with $k$ forests\;
\textbf{return} $\mathcal{F}\cup E'$\;
\end{algorithm}

\begin{lemma}
\label{lemma:alg:k-certificate-correctness}
Algorithm~\ref{algorithm:k-certificate} outputs a certificate for the maximal $k$-edge-connected subgraphs of $G$.
\end{lemma}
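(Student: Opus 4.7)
The plan is to show that the output $H = \mathcal{F} \cup E'$ has exactly the same maximal $k$-edge-connected subgraphs as $G$. The key fact I will invoke is the classical Nagamochi--Ibaraki sparse certificate property: if $F_1, \dots, F_k$ is a forest decomposition of some graph $G'$, then $\lambda_{F_1\cup\cdots\cup F_k}(u,v) \ge \min(k, \lambda_{G'}(u,v))$ for every pair of vertices $u,v$. Since $H \subseteq G$, any $k$-edge-connected subgraph of $H$ is a fortiori $k$-edge-connected in $G$, and therefore every maximal $k$-edge-connected subgraph of $H$ is contained in some maximal $k$-edge-connected subgraph of $G$. The nontrivial direction is the converse.

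Let $S$ be a maximal $k$-edge-connected subgraph of $G$. Every edge of $G$ with one endpoint in $S$ and one outside $S$ is by definition a $k$-interconnection edge (its endpoints lie in distinct maximal $k$-edge-connected subgraphs) and hence lies in $E'$. Consequently, $G \setminus E'$ has no edges between $S$ and $V \setminus S$, so $S$ is a union of connected components of $G \setminus E'$. I will exploit this to argue that the restriction $\mathcal{F}[S] := F_1[S] \cup \cdots \cup F_k[S]$ is itself a forest decomposition with $k$ forests of $(G \setminus E')[S]$: each $F_i[S]$ is trivially a forest, and because $F_i$ is a spanning forest of $(G \setminus E') \setminus (F_1 \cup \cdots \cup F_{i-1})$ and spanning forests agree with the underlying graph on connected components, restricting to the union of components $S$ preserves this property.

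Applying the Nagamochi--Ibaraki cut-preservation property to $(G \setminus E')[S]$ yields $\lambda_{\mathcal{F}[S]}(u,v) \ge \min(k, \lambda_{(G \setminus E')[S]}(u,v))$ for all $u,v \in S$. I then observe a simple, general principle: if $H_1$ is a $k$-certificate of $G_1$ in the cut-preservation sense and $E''$ is any edge set on the same vertex set, then $H_1 \cup E''$ is a $k$-certificate of $G_1 \cup E''$, because any cut gains exactly the same number of $E''$-edges on both sides. Applying this with $G_1 = (G \setminus E')[S]$, $H_1 = \mathcal{F}[S]$, $E'' = E'[S]$, I conclude that $H[S] = \mathcal{F}[S] \cup E'[S]$ satisfies $\lambda_{H[S]}(u,v) \ge \min(k, \lambda_{G[S]}(u,v))$ for all $u,v \in S$. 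Since $S$ is a maximal $k$-edge-connected subgraph of $G$, $\lambda_{G[S]}(u,v) \ge k$, and hence $H[S]$ is $k$-edge-connected. Combined with the earlier containment in the reverse direction, this forces the two partitions into maximal $k$-edge-connected subgraphs to coincide.

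The main technical hurdle is not Nagamochi--Ibaraki itself but verifying that the restriction to a maximal $k$-edge-connected subgraph interacts well with the global forest decomposition. This hinges entirely on the observation that $S$ is a union of connected components of $G \setminus E'$, which in turn uses the defining property that $E'$ captures \emph{all} $k$-interconnection edges; everything else is routine bookkeeping.
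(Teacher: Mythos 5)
Your argument is correct and follows essentially the same route as the paper's own proof: observe that a maximal $k$-edge-connected subgraph $S$ is a union of connected components of $G\setminus E'$ so that the forest decomposition restricts to a forest decomposition of $(G\setminus E')[S]$, then invoke the Nagamochi--Ibaraki guarantee and the fact that adding the same edge set to a certificate and its original graph preserves the certificate property. The paper packages those last two steps by citing the Eppstein et al.\ notion of a \emph{strong certificate} for $k$-edge-connectivity as a black box; you unpack the same mechanism. One small caution on wording: for the closure-under-adding-$E''$ step you need the cut-counting form of the Nagamochi--Ibaraki guarantee (every cut $[X,\bar{X}]$ of $F_1\cup\cdots\cup F_k$ has at least $\min(k,|[X,\bar{X}]_{G'}|)$ edges), not merely the pairwise-$\lambda$ statement you give as the ``key fact''---the latter does not, on its own, obviously survive taking unions with an arbitrary $E''$, whereas the cut form does and is what Nagamochi--Ibaraki actually prove.
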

\begin{proof}
If we remove all the $k$-interconnection edges from $G$, then the connected components of the resulting graph coincide with the maximal $k$-edge-connected subgraphs of $G$. Let $S$ be a maximal $k$-edge-connected subgraph of $G$ and let $E'$ be a set of edges of $G$ that contains all its $k$-interconnection edges. Then $[S,\bar{S}]=\emptyset$ in $G\setminus E'$. Now let $H$ be the set of all edges of $E'$ that are contained in $S$. Then the set $\mathcal{F}_S$ of all the edges of $\mathcal{F}$ that are contained in $S\setminus{H}$ is a forest decomposition of $S\setminus{H}$ with $k$ forests\footnote{more precisely: it either coincides with $S\setminus{H}$, or it has $k$ forests} (precisely because $S$, and therefore $S\setminus{H}$, is disconnected from the rest of the graph in $G\setminus{E'}$). Thus, we know from the sparsification paper of Eppstein et al.~\cite{sparsification} that $\mathcal{F}_S$ (considered as a graph) is a \emph{strong certificate}\footnote{For any graph property $\mathcal{P}$, and graph $G$, a strong certificate for $G$ is a graph $G'$ on the same vertex set
such that, for any graph $H$, $G\cup H$ has property $\mathcal{P}$ if and only if $G' \cup H$ has the property.} for the $k$-edge-connectivity of $S\setminus{H}$. Thus, $\mathcal{F}_S\cup H$ is a strong certificate for the $k$-edge-connectivity of $(S\setminus{H})\cup H = S$. But $S$ is $k$-edge-connected, and so $\mathcal{F}\cup E'\supseteq \mathcal{F}_S\cup H$ contains enough edges of $S$ to maintain it as a $k$-edge-connected subgraph of $\mathcal{F}\cup E'$. Since $\mathcal{F}\cup E'$ is a subgraph of $G$, we thus have that its maximal $k$-edge-connected subgraphs coincide with those of $G$.
\end{proof}

\begin{corollary}
\label{corollary:k-certificates}
Let $\mathcal{A}$ be an algorithm that, given a graph $G$ with $m$ edges and $n$ vertices, computes in $T(m,n)$ time a subset $E'$ of $E(G)$ with $S(m,n)$ size that contains all its $k$-interconnection edges. Then we can construct a $k$-certificate of $G$ with $O(kn+S(m,n))$ size in $O(m+n+T(m,n))$ time. 
\end{corollary}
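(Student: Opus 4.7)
The plan is simply to instantiate Algorithm~\ref{algorithm:k-certificate} with $\mathcal{A}$ filling in Line~\ref{line:compute-k-inter}, and then to bound its running time and output size.

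First, I would run $\mathcal{A}$ on $G$ to obtain, in $T(m,n)$ time, a subset $E' \subseteq E(G)$ of size $S(m,n)$ that contains every $k$-interconnection edge of $G$. Next, I would compute a forest decomposition $\mathcal{F} = F_1 \cup \dots \cup F_k$ of $G \setminus E'$ consisting of $k$ forests; this can be done in $O(m+n)$ time using the classical scan-first search of Nagamochi and Ibaraki~\cite{Nagamochi}. Finally, I output $\mathcal{F} \cup E'$ as the certificate.

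Correctness is immediate from Lemma~\ref{lemma:alg:k-certificate-correctness}: the only hypothesis needed there is that $E'$ contains all $k$-interconnection edges of $G$, which is exactly the guarantee of $\mathcal{A}$. For the size bound, each forest $F_i$ has at most $n-1$ edges, so $|\mathcal{F}| \leq k(n-1)$, and therefore $|\mathcal{F} \cup E'| = O(kn + S(m,n))$. For the time bound, Step~1 costs $T(m,n)$, Step~2 costs $O(m+n)$, and writing the output costs $O(kn + S(m,n))$, which is dominated by $O(m + T(m,n))$ since $kn \leq m+n$ whenever $\mathcal{F}$ is nontrivial and $S(m,n) \leq m$. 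Summing yields $O(m + n + T(m,n))$ total time.

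There is no real obstacle here: the lemma does all the structural work, and the only ingredient beyond $\mathcal{A}$ is the forest decomposition, which is standard and runs in linear time. The one mild subtlety worth remarking on is that the forest decomposition is taken on $G \setminus E'$ rather than on $G$; this is essential, since on $G \setminus E'$ each maximal $k$-edge-connected subgraph $S$ of $G$ is a union of connected components, so the restriction of $\mathcal{F}$ to $S \setminus H$ (where $H = E' \cap E(G[S])$) is a genuine $k$-forest decomposition of $S \setminus H$ and therefore, by the sparsification framework of Eppstein et al.~\cite{sparsification}, a strong certificate for $k$-edge-connectivity of $S \setminus H$, as exploited in the proof of Lemma~\ref{lemma:alg:k-certificate-correctness}.
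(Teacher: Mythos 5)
Your proof is correct and follows exactly the paper's own argument: instantiate Line~\ref{line:compute-k-inter} of Algorithm~\ref{algorithm:k-certificate} with $\mathcal{A}$, compute the $k$-forest decomposition of $G\setminus E'$ in linear time via Nagamochi–Ibaraki, and invoke Lemma~\ref{lemma:alg:k-certificate-correctness} for correctness, with the size and time bounds falling out immediately. The extra commentary on why the forest decomposition must be taken on $G\setminus E'$ is accurate but is really part of the proof of Lemma~\ref{lemma:alg:k-certificate-correctness}, not this corollary.
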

\begin{proof}
First we apply algorithm $\mathcal{A}$ in order to compute a subset $E'$ of $E(G)$ with size $S(m,n)$ that contains all its $k$-interconnection edges. This takes $T(m,n)$ time. Then we apply Algorithm~\ref{algorithm:k-certificate}: we remove $E'$ from $G$, we compute a forest decomposition $\mathcal{F}$ of $G\setminus{E'}$ with $k$ forests, and we return $\mathcal{F}\cup E'$. Using \cite{Nagamochi}, the computation of $\mathcal{F}$ takes time $O(m+n)$. The output has size $O(kn+S(m,n))$. Correctness is guaranteed by Lemma~\ref{lemma:alg:k-certificate-correctness}.
\end{proof}

In the following discussion we refer to section 8 of \cite{ben-kar}. (Note that in \cite{ben-kar} the $k$-interconnection edges are called $k$-weak edges, and the maximal $k$-edge-connected subgraphs are called $k$-strong components.) In \cite{ben-kar}, they prove that a forest decomposition with at least $4k\log n$ forests contains all the $k$-interconnection edges of a graph.  Since this decomposition can be computed in linear time using the MA-ordering algorithm of Nagamochi and Ibaraki \cite{Nagamochi}, by Corollary~\ref{corollary:k-certificates} we get result $(1)$: we can construct a $k$-certificate of size $O(kn\log{n})$ in linear time.

Alternatively, \cite{ben-kar} defines the \emph{strength} of an edge $e$, denoted by $k_e$, as the largest $k'$ such that $e$ lies entirely within a maximal $k'$-edge-connected subgraph. Then the $k$-interconnection edges are precisely those whose strength is less than $k$. \cite{ben-kar} provides an $O(m\log^2{n})$-time algorithm that assigns a value $\tilde{k}_e$ to every edge $e$, such that $\tilde{k}_e\leq k_e$ and $\sum_{e\in E}1/{\tilde{k}_e}=O(n)$. Now we consider the set $E'$ of all edges $e$ that have $\tilde{k}_e<k$. Observe that $E'$ contains all the $k$-interconnection edges, since $\tilde{k}_e\leq k_e$ for every edge $e$. Then we have that $O(n)=\sum_{e\in E}1/{\tilde{k}_e}\geq\sum_{e\in E'}1/{\tilde{k}_e}>\sum_{e\in E'}1/{k}=|E'|/k$. Thus, $E'$ has size $O(kn)$. Therefore, by Corollary~\ref{corollary:k-certificates} we get result $(2)$: we can construct a $k$-certificate of size $O(kn)$ in $O(m\log^2{n})$ time.

Since we have a linear-time algorithm for computing a $k$-certificate of $O(kn\log n)$ size, we can apply the $O(k^{O(k)}(m+n\log n)\sqrt{n})$-time algorithm of Chechik et al.~\cite{ChechikHILP17} on this certificate in order to get the following.

\begin{corollary}
\label{corollary:first_algorithm}
%For any fixed $k\geq 3$, there is an $O(m + n\sqrt{n}\log n)$-time algorithm for computing the maximal $k$-edge-connected subgraphs of an undirected graph.
There is an $O(m + k^{O(k)} n\sqrt{n}\log n)$-time algorithm for computing the maximal $k$-edge-connected subgraphs of an undirected graph.
\end{corollary}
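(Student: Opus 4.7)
The plan is to combine two ingredients already in hand: the linear-time construction of a sparse $k$-certificate (result $(1)$ from the preceding discussion of Benczúr--Karger based constructions) and the algorithm of Chechik et al.~\cite{ChechikHILP17} applied as a black box to that certificate. The point is that a $k$-certificate preserves the maximal $k$-edge-connected subgraphs exactly, so computing them on the (much sparser) certificate is equivalent to computing them on the original graph.

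Concretely, I would proceed in three steps. First, from the input graph $G=(V,E)$ with $m=|E|$ and $n=|V|$, compute in $O(m+n)$ time a $k$-certificate $H$ of $G$ with at most $O(kn\log n)$ edges, using the construction based on the forest-decomposition algorithm of Nagamochi--Ibaraki; this is exactly result $(1)$ established two paragraphs above. Second, invoke the algorithm of Chechik et al.~\cite{ChechikHILP17} on $H$; since this algorithm runs in $O\!\left(k^{O(k)}(|E(H)|+n\log n)\sqrt{n}\right)$ time, substituting $|E(H)|=O(kn\log n)$ yields
\[
O\!\left(k^{O(k)}\bigl(kn\log n + n\log n\bigr)\sqrt{n}\right)
\;=\;O\!\left(k^{O(k)} \, n\sqrt{n}\,\log n\right),
\]
where we absorb the factor $k$ into $k^{O(k)}$. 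Third, output the maximal $k$-edge-connected subgraphs of $H$, which by definition of a $k$-certificate are identical to those of $G$. Summing the two phases gives the claimed $O(m+k^{O(k)}n\sqrt{n}\log n)$ bound.

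There is essentially no technical obstacle: correctness follows immediately from the defining property of a $k$-certificate (proved in Lemma~\ref{lemma:alg:k-certificate-correctness}), and the time bound is a direct arithmetic composition. The only small point to verify is that the $O(m+n)$ preprocessing cost is not dominated by the second term, which it is not, and that the factor $k$ arising from the size of the certificate is harmlessly absorbed into the $k^{O(k)}$ factor from Chechik et al., as shown in the calculation above.
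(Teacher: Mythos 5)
Your proposal is correct and matches the paper's argument essentially verbatim: compute the $O(kn\log n)$-edge $k$-certificate in linear time via the Nagamochi--Ibaraki forest decomposition, then run the $O(k^{O(k)}(m+n\log n)\sqrt{n}\,)$ algorithm of Chechik et al.\ on the certificate, absorbing the extra factor of $k$ into $k^{O(k)}$. No gap.
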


Note that for constant $k \ge 3$, the time-bound provided by Corollary~\ref{corollary:first_algorithm} is $O(m + n\sqrt{n}\log n)$, which improves the randomized bound of $O(m\log^2{n} + n \sqrt{n} \log{n})$ given by Forster et al.~\cite{forsterComputingTestingSmall2019}.
Similarly, by applying the $O(km\log^2{n} + k^3 n \sqrt{n} \log{n})$-time algorithm of Forster et al.~\cite{forsterComputingTestingSmall2019} on our $k$-certificate we obtain the next result.
\begin{corollary}
\label{corollary:first_algorithm_2}
There is a randomized Las Vegas algorithm for computing the maximal $k$-edge-connected subgraphs of an undirected graph that has $O(m+k^3 n^{3/2} \log{n})$ expected running time.
\end{corollary}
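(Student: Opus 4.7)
The plan is to combine the linear-time construction of a sparse $k$-certificate from Section~\ref{section:sparse_subgraphs} with the randomized algorithm of Forster et al., in direct analogy with the derivation of Corollary~\ref{corollary:first_algorithm}. The key conceptual point is that, by Lemma~\ref{lemma:alg:k-certificate-correctness}, computing the maximal $k$-edge-connected subgraphs on a $k$-certificate $H$ of $G$ returns exactly the maximal $k$-edge-connected subgraphs of $G$ itself, so any algorithm for the problem can be run on $H$ in place of $G$.

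First I would invoke the linear-time procedure discussed in the paragraph preceding Corollary~\ref{corollary:first_algorithm} (based on the MA-ordering / forest-decomposition result of Benczúr and Karger~\cite{ben-kar} together with Corollary~\ref{corollary:k-certificates}) to produce, in $O(m+n)$ time, a spanning subgraph $H$ of $G$ on $n$ vertices with $|E(H)| = O(kn\log n)$ edges and the same maximal $k$-edge-connected subgraphs as $G$. Next I would apply the Las Vegas algorithm of Forster et al.~\cite{forsterComputingTestingSmall2019} to $H$. Since its expected running time on a graph with $m'$ edges and $n$ vertices is $O(k m' \log^2 n + k^3 n\sqrt{n}\log n)$, substituting $m' = O(kn\log n)$ yields an expected running time of $O(k^2 n \log^3 n + k^3 n\sqrt{n}\log n)$ for this step; the algorithm remains Las Vegas because its correctness is not affected by the preprocessing, which is deterministic.

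Summing the two stages gives a total expected running time of $O(m + k^2 n\log^3 n + k^3 n\sqrt{n}\log n)$. To arrive at the claimed bound, I would absorb the middle term into the last one using the elementary inequality $\log^2 n = O(\sqrt{n})$, which implies $k^2 n \log^3 n = O(k^2 n^{3/2}\log n) = O(k^3 n^{3/2}\log n)$. The final bound is therefore $O(m + k^3 n^{3/2}\log n)$, as stated.

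There is no real technical obstacle here; the proof is essentially a black-box composition, and the only care needed is in verifying that the ``strong certificate'' property established in Lemma~\ref{lemma:alg:k-certificate-correctness} justifies running the Forster et al.~algorithm on the sparsified subgraph rather than on $G$. Once this is in place, the time analysis is purely arithmetic.
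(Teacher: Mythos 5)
Your proposal is correct and matches the paper's argument: construct the $O(kn\log n)$-size $k$-certificate in linear time and then run the Forster et al.\ Las Vegas algorithm on it, with the middle term $k^2 n\log^3 n$ absorbed into $k^3 n^{3/2}\log n$ via $\log^2 n = O(\sqrt{n})$. The paper states this only in passing (``Similarly, by applying \ldots on our $k$-certificate''); your write-up simply makes the arithmetic explicit.
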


We note that the time-bound provided by Corollary~\ref{corollary:first_algorithm} is an improvement over that provided in Section~\ref{section:computing_kecs}, for constant $k$, since there are less $\log$ factors involved. Furthermore, the algorithm we present here seems to be easier to be implemented, since it relies on the (relatively simple) construction of the forest decomposition, and the algorithm of Chechik et al.~\cite{ChechikHILP17}. However, the result in Section~\ref{section:computing_kecs} is still relevant, because it establishes that any improvement in the time-bounds for a fully dynamic mincut algorithm implies an improved time-bound for computing the maximal $k$-edge-connected subgraphs.

\section{Computing the maximal $k$-edge-connected subgraphs}
\label{section:computing_kecs}
Let $G$ be an undirected multigraph with $m$ edges and $n$ vertices, and let $k>2$ be a fixed integer. We can compute the maximal $k$-edge-connected subgraphs of $G$ by repeatedly removing any $k'$-edge cut from $G$, for $k'<k$, until there are no more $k'$-edge cuts in the graph for $k'<k$. Then the connected components of the resulting graph coincide with the maximal $k$-edge-connected subgraphs of $G$. The best known deterministic algorithm for computing a $k'$-edge cut of an undirected multigraph, for $k'<k$, or concluding that the graph is $k$-edge-connected, is given by Gabow~\cite{edge_connectivity:gabow}, and it runs in $O(m + k^2n\log(n/k))$ time. Thus the total running time of this algorithm is bounded by $O(mn+k^2n^2\log (n/k))$.

We can improve this bound by reducing the time that it takes to successively find a min-cut with $k'$-edges (for $k'<k$). It is a priori reasonable to assume that we can speed up this procedure, since an algorithm that computes a min-cut of a graph may have stored enough information to facilitate the search for further cuts. In fact, we can rely on Thorup's fully dynamic min-cut algorithm
\cite{thorupFullyDynamicMinCut2007}. This algorithm supports edge insertions and deletions in $\widetilde{O}(\sqrt n)$ time per update, and it maintains a min-cut of size up to $k-1$, for any fixed $k$ (polylogarithmic on the number of vertices). It also demands an additional $O(m+n)$ time to initialize the underlying data structure on a sparse certificate of $k$-edge-connectivity for $G$ \cite{Nagamochi}. (Thus, this algorithm maintains dynamically both a sparse certificate for $G$, using sparsification by Eppstein et al. \cite{sparsification}, and a min-cut of size up to $k-1$ of this certificate.) 

Now we can use the fully dynamic min-cut algorithm of \cite{thorupFullyDynamicMinCut2007} on each connected component $C$ of $G$ as follows. As long as there is a $k'$-edge cut $[S,\bar{S}]$ of $C$, for some $k'<k$, we remove the edges of $[S,\bar{S}]$ from $C$, and then we select two arbitrary vertices $x\in S$ and $y\in\bar{S}$, and we reconnect the two resulting components $S$ and $\bar{S}$ by adding $k$ multiple edges between $x$ and $y$. Thus, the reconnection of the two components $S$ and $\bar{S}$ with $k$ multiple edges ensures that all $k'$-edge cuts, for $k'<k$, are maintained in each component after the removal of $[S,\bar{S}]$, and that these are all the $k'$-edge cuts that may appear now in $C$. Eventually all components of $G$ will become $k$-edge-connected. In the meantime, we collect all the edge-cuts that we find, in every connected component of $G$, and in the end we remove all of them from $G$. It should be clear that the connected components of the resulting graph coincide with the maximal $k$-edge-connected subgraphs of $G$. Observe that every search for a $k'$-edge cut, for $k'<k$, on some connected component of $G$, is immediately followed by $k'$ deletions and $k$ insertions of edges. Furthermore, the total number of these cuts is $O(n)$. Thus, the total running time of this algorithm is 
% \comment{PINO: In the footnote, is it at least $O()$ or at least $\Omega()$?}
$\widetilde{O}(m+n\sqrt n)$\footnote{To give a lower bound of the dependency of the complexity of this algorithm on $k$ and on the number of $\log$ factors involved, we note that Thorup's algorithm uses a greedy tree packing of the sparse certificate using $k^7\log^4 n$ trees, and it maintains it using Frederickson's dynamic minimum spanning tree algorithm \cite{fredericksonDataStructuresOnLine1985}. Thus, for every $k'$-edge cut that we find, the cost of the deletions and insertions that follow is  at least $\Omega(k^8\sqrt n\log^4 n)$, and the total complexity of the algorithm is at least $\Omega(m+k^8n\sqrt n\log^4 n)$.}.

\section{A fully dynamic algorithm for maximal $k$-edge-connectivity}
\label{section:fully-dynamic}

In this section we describe our fully dynamic algorithm for maintaining information about the maximal $k$-edge-connected subgraphs of a given graph, for any fixed $k\geq 3$. In more detail, we wish to maintain an undirected graph $G=(V,E)$ throughout an intermixed sequence of the following operations:
\vspace{-2mm}
\begin{mylist}{max-$k$-edge$(x,y)$: }

\litem{\emph{insert$(x,y)$:}} Add edge $(x,y)$ to $G$;

\litem{\emph{delete$(x,y)$:}}  Remove edge $(x,y)$ from $G$ (the operation assumes that $(x,y)$ is in $G$);

\litem{\emph{max-$k$-edge$(x,y)$:}} Return \emph{true} if vertices $x$ and $y$ are in the same maximal $k$-edge-connected subgraph of $G$, and \emph{false} otherwise.
\end{mylist}

Before stating our bounds, let us review some simple minded approaches for the problem. 
In Section~\ref{section:sparse_subgraphs} we showed that the maximal $k$-edge-connected subgraphs can be computed in time $O(m + n\sqrt{n}\log n)$ (for constant $k$). 
Note that if we recompute from scratch the maximal $k$-edge-connected subgraphs after each update, \emph{max-$k$-edge} queries can be  answered in constant time.  This yields a simple algorithm that implements  \emph{insert} and \emph{delete} operations in time $O(m + n\sqrt{n}\log n)$ and  \emph{max-$k$-edge} queries in constant time. On the other side, one could simply do no extra work during \emph{insert} and \emph{delete} operations, but then answering a \emph{max-$k$-edge} query would require recomputing the maximal $k$-edge-connected subgraphs from scratch, yielding constant time per update and $\widetilde{O}(m+n\sqrt{n}\,)$ time per query. We next show how to implement  \emph{insert} and \emph{delete} operations in better $\widetilde{O}(n\sqrt{n}\,)$ time, while still keeping the running time for \emph{max-$k$-edge} queries constant. 
To achieve our improved bounds, we exploit the sparsification technique of Eppstein et al.~\cite{sparsification}. 
We start with the following definition.

\begin{definition}\label{def:certificate}
Let $k\geq 3$ be a fixed integer. Given an undirected graph $G=(V,E)$ with $m$ edges and $n$ vertices, a \emph{sparse certificate of maximal $k$-edge-connectivity} for $G$ is a graph $G'$ defined on the same vertex set as $G$ such that the following holds:
\vspace{-2mm}
\begin{mylist}{(ii)}
\litem{(i)} $G'$ has $O(n)$ edges;

\litem{(ii)} For any graph $H$, any two vertices are in the same maximal $k$-edge-connected subgraph of $G'\cup H$ if and only if they are in the same maximal $k$-edge-connected subgraph in $G\cup H$.
\end{mylist}
\end{definition}

The following lemma provides a sufficient condition for inferring that a $k$-certificate is a sparse certificate of maximal $k$-edge-connectivity.

\begin{lemma}
\label{lemma:k-certificate}
Let $G$ be an undirected graph, and let $C$ be a $k$-certificate of $G$ with $O(n)$ edges that contains all the $k$-interconnection edges of $G$. Then $C$ is a sparse certificate of maximal $k$-edge-connectivity for $G$.
\end{lemma}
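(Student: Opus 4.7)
The plan is to verify condition~(ii) of Definition~\ref{def:certificate}, as condition~(i) is hypothesized directly. Let $M_1,\dots,M_t$ denote the maximal $k$-edge-connected subgraphs of $G$; since $C$ is a $k$-certificate of $G$ these are also the maximal $k$-edge-connected subgraphs of $C$, so both $G[M_i]$ and $C[M_i]$ are $k$-edge-connected for every $i$. The crucial observation, which I will use repeatedly, is that for any graph $H$ the set of edges of $G \cup H$ whose endpoints lie in different $M_i$'s is identical to the corresponding set in $C \cup H$: the hypothesis that $C$ contains every $k$-interconnection edge of $G$ gives one containment, and $C \subseteq G$ gives the other.

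Fix an arbitrary $H$. I will show that the partitions of $V$ into maximal $k$-edge-connected subgraphs of $G \cup H$ and of $C \cup H$ coincide. Because each $G[M_i]$ and each $C[M_i]$ is $k$-edge-connected, every maximal $k$-edge-connected subgraph of either $G \cup H$ or $C \cup H$ is a union of some $M_i$'s (recall that maximal $k$-edge-connected subgraphs are pairwise disjoint, since if two shared a vertex their union would be $k$-edge-connected, contradicting maximality). One direction is easy: any maximal $k$-edge-connected subgraph $S$ of $C \cup H$ is contained in a maximal $k$-edge-connected subgraph of $G \cup H$, because $(C \cup H)[S]$ is a $k$-edge-connected subgraph of $(G \cup H)[S]$, hence $(G \cup H)[S]$ is also $k$-edge-connected.

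For the reverse direction, let $N = M_{i_1} \cup \cdots \cup M_{i_r}$ be a maximal $k$-edge-connected subgraph of $G \cup H$. The plan is to prove that $(C \cup H)[N]$ is itself $k$-edge-connected, by analyzing an arbitrary non-trivial cut $(A,B)$ of $N$ and splitting into two cases depending on whether the cut splits some $M_{i_j}$. If some $M_{i_j}$ is split, meaning $A \cap M_{i_j}$ and $B \cap M_{i_j}$ are both non-empty, then $k$-edge-connectivity of $C[M_{i_j}]$ yields at least $k$ edges of $C$ internal to $M_{i_j}$ that cross the cut, all of which belong to $(C \cup H)[N]$. Otherwise every $M_{i_j}$ lies entirely in $A$ or entirely in $B$, so every edge of $(G \cup H)[N]$ crossing $(A,B)$ has endpoints in distinct $M_{i_j}$'s; by the key observation this set of crossing edges is identical in $C \cup H$, and it has size at least $k$ because $(G \cup H)[N]$ is $k$-edge-connected. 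In either case the cut has $\geq k$ edges in $(C \cup H)[N]$, so $(C \cup H)[N]$ is $k$-edge-connected and $N$ is contained in some maximal $k$-edge-connected subgraph of $C \cup H$. Combined with the easy direction and the disjointness of maximal $k$-edge-connected subgraphs, this gives equality of the two partitions, which is condition~(ii).

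I expect the main obstacle to be the reverse direction, where a priori one might worry that discarding the edges in $G \setminus C$ could shrink some cut of $N$ below $k$ edges. The case analysis succeeds precisely because the hypothesis that $C$ contains every $k$-interconnection edge forces every discarded edge to live strictly inside some individual $M_i$, and then the $k$-edge-connectivity of $C[M_i]$ absorbs the loss on exactly those cuts that split $M_i$, while cuts that respect the $M_i$-partition are unaffected by the discards.
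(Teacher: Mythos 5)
Your proof is correct and self-contained, but it takes a genuinely different route from the paper. The paper's proof is short and structural: it invokes Property~\ref{property:T} (contracting a maximal $k$-edge-connected subgraph does not change the decomposition tree) to conclude that the quotient graph of $C$ by its maximal $k$-edge-connected subgraphs coincides with that of $G$ (this uses the hypothesis that $C$ contains all $k$-interconnection edges), and then argues informally that the decomposition trees of $G\cup H$ and $C\cup H$ therefore evolve identically when $H$ is inserted. That argument is elegant but leans on an unproven property and a somewhat hand-waved invariance of the update process. You instead give a direct cut-counting argument: you observe that crossing edges of any cut $(A,B)$ of a maximal $k$-edge-connected subgraph $N$ of $G\cup H$ are absorbed either by $k$-edge-connectivity of some $C[M_{i_j}]$ (when the cut splits an $M_{i_j}$) or by the preserved set of inter-$M_i$ edges (when the cut respects the $M_i$-partition), and you pair this with the easy inclusion coming from $C\subseteq G$. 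This yields a fully rigorous, more elementary proof that does not depend on the decomposition tree machinery at all. The one small gap is that you tacitly assume $V(H)\subseteq V$; if $H$ introduces new vertices, you should extend the family $\{M_i\}$ by singleton parts for those vertices before asserting that every maximal $k$-edge-connected subgraph of $G\cup H$ (or $C\cup H$) is a union of $M_i$'s -- the rest of the argument then goes through verbatim, since the newly added singletons are trivially $k$-edge-connected in both $G\cup H$ and $C\cup H$, and every edge of $H$ is present in both unions.
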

\begin{proof}
Condition $(i)$ of Definition~\ref{def:certificate} is satisfied. It remains to establish condition $(ii)$. Since $C$ is a $k$-certificate of $G$, it is a subgraph of $G$ that has the same maximal $k$-edge-connected subgraphs as $G$. Let $Q$ be the quotient graph that is formed by shrinking every maximal $k$-edge-connected subgraph of $C$ into a single vertex. Then, since $C$ contains all the $k$-interconnection edges of $G$, by Property~\ref{property:T} we have that $Q$ has the same decomposition tree into maximal $k$-edge-connected subgraphs as $G$. Thus, the changes that this tree undergoes after inserting all the vertices and edges of $H$ into $G$, are the same as if inserting them to $C$. Thus, $C\cup H$ has the same maximal $k$-edge-connected subgraphs as $G\cup H$. This shows that condition $(ii)$ is satisfied too. Therefore $C$ is a sparse certificate of maximal $k$-edge-connectivity for $G$.
\end{proof}

\begin{corollary}\label{certificate}
Let $G=(V,E)$ be an undirected graph with $m$ edges and $n$ vertices. A sparse certificate of maximal $k$-edge-connectivity for $G$ can be computed in time $O(m\log^2{n})$.
\end{corollary}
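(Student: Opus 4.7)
The plan is to combine the $O(m\log^2 n)$-time construction of a $k$-certificate of size $O(kn)$ from Section \ref{section:sparse_subgraphs} with the sufficient condition supplied by Lemma \ref{lemma:k-certificate}, so that most of the work is already done and the corollary becomes essentially a matter of bookkeeping.

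In more detail, I would proceed as follows. First, apply the Benczúr--Karger edge-strength estimation algorithm \cite{ben-kar}, which in $O(m\log^2 n)$ time assigns to each edge $e$ a value $\tilde{k}_e \le k_e$ with $\sum_{e\in E} 1/\tilde{k}_e = O(n)$. Let $E'=\{e\in E: \tilde{k}_e<k\}$; as observed in Section \ref{section:sparse_subgraphs}, $E'$ contains every $k$-interconnection edge (since these have $k_e<k$, hence $\tilde{k}_e<k$) and has cardinality $O(kn)$ by the counting bound $|E'|/k<\sum_{e\in E'}1/\tilde{k}_e\le\sum_{e\in E}1/\tilde{k}_e=O(n)$.

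Next, feed $E'$ into Algorithm \ref{algorithm:k-certificate}: compute in linear time (via Nagamochi--Ibaraki \cite{Nagamochi}) a forest decomposition $\mathcal{F}$ of $G\setminus E'$ with $k$ forests, and output $C := \mathcal{F}\cup E'$. By Corollary \ref{corollary:k-certificates} (applied with $T(m,n)=O(m\log^2 n)$ and $S(m,n)=O(kn)$), $C$ is a $k$-certificate of $G$ with $O(kn)$ edges, produced in total time $O(m+n+m\log^2 n)=O(m\log^2 n)$. Since $k$ is a fixed integer, $C$ has $O(n)$ edges, so condition $(i)$ of Definition \ref{def:certificate} holds.

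Finally, observe that $E'\subseteq C$, so $C$ contains all $k$-interconnection edges of $G$. Lemma \ref{lemma:k-certificate} then gives condition $(ii)$ of Definition \ref{def:certificate}, completing the proof. There is no real obstacle here: the heavy lifting was done in building up the $k$-certificate machinery of Section \ref{section:sparse_subgraphs} and in Lemma \ref{lemma:k-certificate}, and the only subtlety worth double-checking is that the $O(kn)$ size bound collapses to $O(n)$ precisely because $k$ is treated as a fixed constant in Definition \ref{def:certificate}.
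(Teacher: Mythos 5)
Your proof is correct and follows essentially the same route as the paper: invoke the $O(m\log^2 n)$-time Benczúr--Karger construction from Section~\ref{section:sparse_subgraphs} to obtain a $k$-certificate $C$ of size $O(kn)=O(n)$ containing all $k$-interconnection edges, then apply Lemma~\ref{lemma:k-certificate}. You merely unroll the details (explicit definition of $E'$, why $E'\subseteq C$, the $O(kn)$ counting bound) that the paper leaves implicit by referring back to Section~\ref{section:sparse_subgraphs}.
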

\begin{proof}
In Section~\ref{section:sparse_subgraphs} we show that we can construct a $k$-certificate $C$ of $G$ with $O(n)$ edges in time $O(m\log^2{n})$. Furthermore, this $k$-certificate has the property that it contains all the $k$-interconnection edges of $G$. Thus, Lemma~\ref{lemma:k-certificate} implies that $C$ is a sparse certificate of maximal $k$-edge-connectivity for $G$.    
\end{proof}

We are now ready to apply the sparsfication framework of Eppstein et al.~\cite{sparsification}:

\begin{theorem}[\cite{sparsification}]\label{eppstein-sparsification}
Let $k\geq 3$ be a fixed integer, let  $f(n, m)$ be the time required to compute a sparse certificate of maximal $k$-edge-connectivity, and let $g(n, m)$ the time required to compute the maximal $k$-edge-connected subgraphs. Then we can build a fully dynamic data structure that can handle \emph{insert} and \emph{delete} operations in time $O( f(n, O(n)) + g(n, O(n)))$ and \emph{max-$k$-edge} queries in constant time.
\end{theorem}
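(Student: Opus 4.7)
The plan is to apply the sparsification framework of Eppstein et al.~\cite{sparsification} directly, using Corollary~\ref{certificate} as the certificate subroutine. The framework needs two ingredients: a stable sparse certificate that can be recomputed efficiently, and an algorithm that solves the problem on sparse graphs from scratch. The first is provided by Corollary~\ref{certificate} together with Definition~\ref{def:certificate}(ii), which is precisely the strong-certificate/stability property required by~\cite{sparsification}: it says that the certificate behaves like the original graph under union with any $H$, so subgraphs may be substituted by their certificates inside unions without changing the maximal $k$-edge-connected subgraphs. The second ingredient is assumed with cost $g(n, m)$.

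Concretely, I would build the standard sparsification binary tree $\mathcal{B}$ whose leaves are the edges of $G$, and where each internal node $v$ is associated with the set $E_v$ equal to the union of the edge sets of its two children. At every internal node $v$ I would maintain a sparse certificate $G_v'$ of maximal $k$-edge-connectivity for $(V, E_v)$, obtained by applying Corollary~\ref{certificate} to $G_{\mathit{left}(v)}' \cup G_{\mathit{right}(v)}'$. Since every stored certificate has $O(n)$ edges, the input at each internal node has $O(n)$ edges and each recomputation runs in $f(n, O(n))$ time. The root certificate $G_r'$ is then a sparse certificate for all of $G$, and I would compute the maximal $k$-edge-connected subgraphs of $G_r'$ in time $g(n, O(n))$ and store them in a disjoint-set union structure indexed by the vertices.

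For \emph{insert}$(x,y)$ or \emph{delete}$(x,y)$ I would walk the path from the leaf of the affected edge to the root of $\mathcal{B}$ and recompute $G_v'$ at each internal node along this path. Definition~\ref{def:certificate}(ii) is exactly what makes this bottom-up recomputation correct: after replacing $(V, E_v)$ with $G_v'$, the enclosing unions have the same maximal $k$-edge-connected subgraphs, so inductively every $G_v'$ on the path remains a sparse certificate for $(V, E_v)$. After updating the path, I would refresh the DSU by recomputing the maximal $k$-edge-connected subgraphs of the new $G_r'$ once, in time $g(n, O(n))$. A query \emph{max-$k$-edge}$(x,y)$ is then answered in constant time by two DSU \emph{find} operations followed by a comparison of the returned representatives.

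The main obstacle I anticipate is verifying that Definition~\ref{def:certificate}(ii) really instantiates the abstract ``strong/stable certificate'' condition of~\cite{sparsification} at every level of $\mathcal{B}$; once this is checked, the rest is routine bookkeeping. Summing the costs, each update performs the certificate recomputations along the root-to-leaf path in $\mathcal{B}$ (each costing $O(f(n, O(n)))$) plus a single problem solve at the root (costing $O(g(n, O(n)))$), which gives the claimed $O(f(n, O(n)) + g(n, O(n)))$ per-update bound, while queries run in constant time as described.
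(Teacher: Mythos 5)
This theorem is quoted from~\cite{sparsification} as a black box; the paper offers no proof of its own, so there is nothing internal to compare against. Your sketch does reconstruct the right framework (the binary sparsification tree with certificates at internal nodes, an update touching one root-to-leaf path, and a final recomputation at the root), and you correctly identify Definition~\ref{def:certificate}(ii) as the ``strong certificate'' condition of Eppstein et al.

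However, there is a gap in your final accounting. Your tree has $O(\log m)=O(\log n)$ levels, and you recompute a certificate at every internal node on the root-to-leaf path, each recomputation costing $O(f(n,O(n)))$. Summing that gives $O(f(n,O(n))\log n + g(n,O(n)))$ per update, not the stated $O(f(n,O(n)) + g(n,O(n)))$. In~\cite{sparsification} the logarithmic factor is removed only under the additional hypothesis that the certificates are \emph{stable} (a single edge insertion or deletion changes the certificate by $O(1)$ edges), which lets one amortize the recomputation work down the path. That stability hypothesis is a separate requirement from the strong-certificate property of Definition~\ref{def:certificate}(ii); your write-up conflates the two, and your argument as given never establishes stability for the certificates of Corollary~\ref{certificate}. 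So your proposal proves the weaker bound $O(f\log n + g)$ rather than the bound as stated. (In the downstream application this is harmless because $g(n,O(n))$ dominates $f(n,O(n))\log n$, but that does not rescue the theorem as written.)
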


From Corollary~\ref{certificate} we have $f(m,n) = O(m\log^2{n})$. From  \cite{ChechikHILP17} we have $g(m,n) = O(m\sqrt{n}\,)$ for $k\in\{3,4\}$, and $g(m,n) = O((m+n\log n)\sqrt{n}\,)$ for fixed $k>4$. (The improved time-bounds for $k\in\{3,4\}$ are derived by using either of the algorithms of \cite{DBLP:conf/esa/GeorgiadisIK21, DBLP:conf/esa/NadaraRSS21} for computing $3$-edge cuts in linear time.) By fitting those bounds into Theorem~\ref{eppstein-sparsification} we obtain:

\begin{theorem}
Let $G=(V,E)$ be an undirected graph with $n$ vertices.
Then we can build a fully dynamic data structure that can handle \emph{insert} and \emph{delete} operations in time $O(n\sqrt{n}\,)$, for $k\in\{3,4\}$, and in time $O(n\sqrt{n}\,\log n)$, for fixed $k>4$, so that it can answer \emph{max-$k$-edge} queries in constant time.
\end{theorem}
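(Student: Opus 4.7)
The plan is to invoke the sparsification framework of Eppstein et al.~(stated as Theorem~\ref{eppstein-sparsification} above) as a black box, feeding it the two ingredients already prepared in the paper: a routine producing sparse certificates of maximal $k$-edge-connectivity in time $f(n,m)$, and a static algorithm for recomputing the maximal $k$-edge-connected subgraphs in time $g(n,m)$. The framework then delivers a fully dynamic data structure whose update cost is $O(f(n,O(n))+g(n,O(n)))$ with $O(1)$ queries, so all that remains is to fit in the right values of $f$ and $g$.

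First I would set $f(m,n)=O(m\log^2 n)$ by appealing to Corollary~\ref{certificate}. The subtle point here is that the sparsification machinery requires the certificate to be a \emph{strong} certificate in the sense of condition (ii) of Definition~\ref{def:certificate}, and not merely a spanning subgraph with the same maximal $k$-edge-connected subgraphs as the current graph. Lemma~\ref{lemma:k-certificate} already supplies this, its engine being Property~\ref{property:T} (contracting a maximal $k$-edge-connected subgraph leaves the decomposition tree invariant), which guarantees that inserting edges into the certificate produces exactly the same maximal $k$-edge-connected subgraphs as inserting them into the original graph.

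Next I would fix $g(m,n)$ via the static algorithm of Chechik et al.~\cite{ChechikHILP17}: $g(m,n)=O(m\sqrt n)$ for $k\in\{3,4\}$, exploiting the linear-time $3$-edge-cut algorithms of \cite{DBLP:conf/esa/GeorgiadisIK21,DBLP:conf/esa/NadaraRSS21}, and $g(m,n)=O((m+n\log n)\sqrt n)$ for fixed $k>4$. Substituting $m=O(n)$, which is safe because the certificate has $O(n)$ edges, the update time collapses to $O(n\log^2 n+n\sqrt n)=O(n\sqrt n)$ for $k\in\{3,4\}$ (using $\log^2 n=o(\sqrt n)$), and to $O(n\log^2 n+n\sqrt n\log n)=O(n\sqrt n\log n)$ for fixed $k>4$. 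Queries cost $O(1)$ because the sparsification framework keeps the maximal $k$-edge-connected subgraphs of the topmost certificate precomputed, so membership is resolved by a single lookup.

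The real content of the argument was already discharged earlier in the paper: the hard part is Lemma~\ref{lemma:k-certificate}, which upgrades the $k$-certificate of Section~\ref{section:sparse_subgraphs} to a strong certificate suitable for plugging into the sparsification framework, together with the fact that Corollary~\ref{certificate} delivers such a certificate with only $O(n)$ edges. Once these are in place, the proof of the present theorem is routine bookkeeping, namely a clean substitution into Theorem~\ref{eppstein-sparsification}.
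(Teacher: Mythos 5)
Your proposal is correct and follows the paper's own proof exactly: it plugs $f(m,n)=O(m\log^2 n)$ from Corollary~\ref{certificate} and the two bounds for $g(m,n)$ from Chechik et al.~\cite{ChechikHILP17} into the sparsification theorem (Theorem~\ref{eppstein-sparsification}), then substitutes $m=O(n)$. Your additional remark that Lemma~\ref{lemma:k-certificate} and Property~\ref{property:T} are what license the use of the $k$-certificate as a strong certificate correctly identifies where the real work resides.
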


\section{Conclusion}\label{section:conclusion}

We presented two algorithms for maintaining a decomposition tree structure of the maximal $3$-edge-connected subgraphs of a graph. The first algorithm uses $O(n)$ space and can handle any sequence of $m$ edge insertions and $n$ vertex insertions in total time $O(n^2\log^2 n + m\alpha(m,n))$. The second algorithm uses $O(n^2)$ space and can handle any sequence of $m$ edge insertions and $n$ vertex insertions in total time $O(n^2\alpha(n,n) + m\alpha(m,n))$. 

We note that one can use this data structure to efficiently answer interspersed queries concerning the maximal $3$-edge-connected subgraphs, such as:
\begin{itemize}
\item Given vertices $x$ and $y$, report whether $x$ and $y$ belong to the same maximal $3$-edge-connected subgraph (using two calls to a DSU-$\mathit{find}$ operation)
\item Find the maximal $3$-edge-connected subgraph that contains $x$, in time analogous to its size (plus a call to a DSU-$\mathit{find}$ operation)
\item Report the size (the number of vertices or edges) of the maximal $3$-edge-connected subgraph that contains $x$ in constant time (plus a call to a DSU-$\mathit{find}$ operation) 
\item Return all maximal $3$-edge-connected subgraphs in time analogous to their size
\item Return the number of all maximal $3$-edge-connected subgraphs in constant time
\end{itemize}, etc.

The methods used in this algorithm extend previous work in maintaining the $2$- and $3$-edge-connected components \cite{galilMaintaining3EdgeConnectedComponents1993,lapoutreMaintainance23ecc,westbrookMaintainingBridgeConnectedBiconnected1992}, and may prove useful in solving other similar problems.
For instance, it seems possible that we can add more levels to the decomposition tree, and rely on previous work for maintaining the $4$- and $5$-edge-connected components, in order to maintain the maximal $4$- and $5$-edge-connected subgraphs (by properly adjusting the data structures and algorithms in \cite{Dinitz:5ECC,dinitzMaintainingClasses4EdgeConnectivity1998}). Furthermore, it seems that any improvement in maintaining the $k$-edge-connected components, for any constant $k$, would imply (under some conditions) that this solution can be plugged in to our framework for maintaining the decomposition tree, in order to maintain the maximal $k$-edge-connected subgraphs (with time-bounds analogous to those provided in our paper).
It seems possible that this framework could also be useful for maintaining the maximal $k$-vertex-connected subgraphs, by relying on efficient algorithms for maintaining the $k$-vertex-connected components (such as those described in \cite{triconnBatTam,lapoutreTriconn}, for the cases $k=3$).

%We also provided a fully dynamic algorithm for maintaining information about the maximal $k$-edge-connected subgraphs using the sparsification framework of Eppstein et al. \cite{sparsification}. To achieve this we provided suitable sparse certificates (of $O(n)$ size) for maximal $k$-edge-connected subgraphs.
%
%We also showed that we can use the fully dynamic min-cut algorithm of Thorup~\cite{thorupFullyDynamicMinCut2007}, in order to compute the maximal $k$-edge-connected subgraphs in $\widetilde{O}(m+n\sqrt{n})$ time. This improves the previous best-known bounds for computing .maximal $k$-edge-connected subgraphs in undirected graphs.
%
We also showed that we can rely on results of Benczúr and Karger~\cite{ben-kar} in order to construct (almost) sparse certificates for them maximal $k$-edge-connected subgraphs. Our result implies that the difficulty in designing efficient algorithms for computing the maximal $k$-edge-connected subgraphs lies essentially in sparse graphs, and we can use it in order to speed up the running time of already known algorithms. In particular, by using the algorithm by Chechik et at. \cite{ChechikHILP17}, we get an $O(m+k^{O(k)}n\sqrt{n}\log n)$-time algorithm for computing the maximal $k$-edge-connected subgraphs in undirected graphs.

We believe that it is an interesting question whether a $k$-certificate of $O(n)$ size can be computed in linear time. This would be trivial if we had a linear-time algorithm for computing the maximal $k$-edge-connected subgraphs of a graph, but it is still an open problem whether this can be done for $k\geq 3$. Thus, we have to perform the construction of the certificates without explicitly computing the maximal $k$-edge-connected subgraphs, and this seems to be a challenging task.

\clearpage

%\bibliographystyle{plain}
%\bibliography{DynamicGraphs}

\end{document}